\newcommand{\bigo}{\ensuremath{\mathcal{O}}}
\newcommand{\dpws} {\text{d-pw}}
\newcommand{\pws} {\text{pw}}
\newcommand{\dvsns} {\text{d-vsn}}
\newcommand{\IN}{\mathbb{N}}
\newcommand{\ideg}{\text{indegree}}
\newcommand{\odeg}{\text{outdegree}}
\newcommand{\gansfuss}[1]{\mbox{``}{#1}\mbox{''}}
\newcommand{\un} {{\it und}}
\newcommand{\lcws} {\text{lcw}}
\newcommand{\vsns} {\mbox{vsn}}
\newcommand{\dnws} {\text{d-nw}}
\newcommand{\dlrws} {\text{d-lrw}}
\newcommand{\cutws} {\text{cutw}}
\newcommand{\dlcws} {\text{d-lcw}}
\newcommand{\lnlcws} {\text{lnlcw}}
\newcommand{\dcutws} {\text{d-cutw}}
\newcommand{\dlnlcws} {\text{d-lnlcw}}
\newcommand{\nws} {\text{nw}}
\newcommand{\lrws} {\text{lrw}}
\newcommand{\free} {\text{Free}}
\newcommand{\lab} {\text{lab}}
\newcommand{\val} {\text{val}}
\newcommand{\rg} {\text{rg}}
\newcommand{\dpwtwos} {\text{u-pw}}
\newtheorem{theorem}{Theorem}[section]
\newtheorem{example}[theorem]{Example}
\newtheorem{lemma}[theorem]{Lemma}
\newtheorem{definition}[theorem]{Definition}
\newtheorem{observation}[theorem]{Observation}
\newtheorem{corollary}[theorem]{Corollary}
\newtheorem{remark}[theorem]{Remark}
\definecolor{DGray}{gray}{0.6}
\definecolor{Gray}{gray}{0.7}
\definecolor{XGray}{gray}{0.45}
\definecolor{Gray2}{gray}{0.8}
\definecolor{LGray}{gray}{0.90}
\definecolor{BGray}{gray}{0.00}
\newenvironment{proof}{\noindent{\bf Proof~}}{\null\hfill $\Box$\par\medskip}
\definecolor{light-gray}{gray}{0.5}
\begin{document}

\title{Comparing Linear Width Parameters for Directed Graphs}

\author{Frank Gurski}
\author{Carolin Rehs}

\affil{\small University of  D\"usseldorf,
Institute of Computer Science, Algorithmics for Hard Problems Group,\newline 
40225 D\"usseldorf, Germany}

\maketitle

\begin{abstract}
In this paper we introduce the linear clique-width, 
linear NLC-width, neighbourhood-width, and 
linear rank-width for directed graphs. We compare these
parameters with each other as well as with the previously 
defined parameters directed path-width 
and directed cut-width. It turns out that the parameters directed linear clique-width, 
directed linear NLC-width, directed neighbourhood-width, and directed
linear rank-width are equivalent in that sense, that all of these parameters
can be upper bounded by each of the others. For the restriction 
to digraphs of bounded vertex degree  directed path-width 
and directed cut-width are equivalent. Further for the restriction to
semicomplete digraphs of bounded vertex degree all six mentioned 
width parameters are equivalent.
We also show close relations of the measures to their undirected versions
of the underlying undirected graphs, which allow us to show
the hardness of computing the considered linear width parameters for directed graphs. 
Further we give first characterizations for directed graphs defined by parameters of small width.

\bigskip
\noindent
{\bf Keywords:} 
graph parameters; directed graphs; directed path-width; directed threshold graphs
\end{abstract}

%%%%%%%%%%%%%%%%%%%%%%%%%%%%%%%%%%%%%%%%%%%%%%%%%%%%%%%%%%%%%%%%%%%%%%%%%%
\section{Introduction}
%%%%%%%%%%%%%%%%%%%%%%%%%%%%%%%%%%%%%%%%%%%%%%%%%%%%%%%%%%%%%%%%%%%%%%%%%%

A {\em graph parameter} is a function that associates with every graph a positive integer.
Examples for graph parameters are tree-width \cite{RS86}, clique-width \cite{CO00},
NLC-width \cite{Wan94}, and rank-width \cite{OS06}. Clique-width, NLC-width,
and rank-width are equivalent, i.e. a graph has bounded clique-width, if and only if
it has bounded NLC-width and that is if and only if it has bounded rank-width. 
The latter three parameters are more general than tree-width, since
graphs of bounded tree-width also have bounded clique-width but even for
dense graphs (e.g. cliques) the tree-width is unbounded while the
clique-width can be small \cite{GW00}.
Graph classes of bounded width are interesting from an algorithmic point of view 
since several hard graph problems can be solved in 
polynomial time  by dynamic programming along the tree structure of
the input graph, see \cite{Arn85,AP89,Hag00,KZN00} and \cite{CE12,CMR00,EGW01a}.
Furthermore such parameters are also interesting from a structural point of view,
e.g. in the research of special graph classes \cite{BDLM05,Bod98}.

In this paper we consider graph parameters which are defined by the existence of an 
underlying path-structure for the input graph. These are
path-width \cite{RS83}, 
cut-width \cite{AH73},
linear clique-width \cite{GW05a}, 
linear NLC-width \cite{GW05a},
neighbourhood-width \cite{Gur06a}, and 
linear rank-width \cite{Gan11}. 
With the exception of cut-width these parameters can be regarded as restrictions of 
the above mentioned parameters with underlying tree-structure to an underlying path-structure. 
The relation between these parameters corresponds to their tree-structural
counterparts, since 
%bounded cut-width implies bounded path-width and 
bounded path-width
implies bounded linear NLC-width, linear clique-width, neighborhood-width, and linear rank-width.
Further the reverse direction is not true in general, see \cite{Gur06a}.
Such restrictions to underlying path-structures  are often helpful to show results for the
general parameters, see \cite{FRRS09,FGLSZ18}.
These linear parameters are also interesting from a structural point of view,
e.g. in the research of special graph classes \cite{Gan11,Gur06,HMP11}.

% directed

Since several problems and applications frequently use directed graphs,
during the last years, width parameters for directed graphs 
have received a lot of attention, see \cite{GHKLOR14,GHKMORS16} and
the two book chapters \cite[Chapter 9]{BG18} and \cite[Chapter 6]{DE14}.
Lifting the above mentioned parameters using an underlying tree-structure to directed graphs lead to
directed tree-width \cite{JRST01}, directed NLC-width \cite{GWY16}, directed clique-width \cite{CO00}, 
and directed rank-width \cite{KR13}.

% linear directed

In this paper we study directed  graph parameters which are defined by the existence of an 
underlying path-structure for the input graph. One of the most famous examples is the
directed path-width, which was introduced by Reed, Seymour, and Thomas around 1995 
(see \cite{Bar06}) and  studied in \cite{Bar06,Tam11,KKT15,KKKTT16}. Further
the cut-width for directed graphs was introduced by  Chudnovsky et al. in \cite{CFS12}.
Regarding the usefulness of linear width parameters for undirected graphs we introduce 
the directed linear NLC-width, directed linear clique-width,
directed neighborhood-width, and directed linear rank-width. 
In contrast to the linear width measures for undirected graphs, 
for directed graphs their relations
turn out to be more involved. Table \ref{tab-fam} shows some classes of digraphs 
demonstrating various possible combinations of the listed width measures being bounded and
unbounded.

\begin{table}[ht]
\begin{center}
{\tiny
\begin{tabular}{|l|lr|lr|c|c|c|c|c|}
\hline
                     & \multicolumn{2}{c|}{undirected} &    \multicolumn{2}{c|}{directed} & DAG   & CB  & BS &OP  & TT \\
\hline
cut-width            & $\cutws$  & \cite{AH73}   & $\dcutws$  & \cite{CFS12}     &  0       &  $\infty$ &  $\infty$ &0 & 0  \\
path-width           & $\pws$    & \cite{RS83}   & $\dpws$    & Thomas et al.    &  0       &  $\infty$ & 1 &0 & 0      \\
linear clique-width  & $\lcws$   & \cite{GW05a}  & $\dlcws$   & here             & $\infty$ & 2   & 2  &3 & 2 \\ 
linear NLC-width     & $\lnlcws$ & \cite{GW05a}  & $\dlnlcws$ & here             & $\infty$ & 1 & 1 &3 & 1 \\
neighbourhood-width  & $\nws$    & \cite{Gur06a} & $\dnws$    & here             & $\infty$ & 1 & 1 &2 & 1 \\
linear rank-width    & $\lrws$   & \cite{Gan11}  & $\dlrws$   & here             & $\infty$ & 1 & 1 &2 & 1 \\
\hline
\end{tabular}
}
\end{center}\label{tab-fam}
\caption{Width measures and their values for 
directed acyclic digraphs (DAG), complete bioriented (CB) digraphs, bioriented stars (BS),
oriented paths (OP), and transitive
tournaments (TT).}
\end{table}

For all these  linear width parameters for directed graphs
we compare the directed width of a digraph
and the undirected width of its  underlying undirected graph, which allow us to show
the hardness of computing the considered linear width parameters for directed graphs. 

In order to  classify graph parameters we call two graph parameters $\alpha$ and $\beta$
{\em equivalent}, if there are two functions $f_1$ and $f_2$ such that for every digraph
$G$ the value $\alpha(G)$ can be upper bounded by $f_1(\beta(G))$ and the 
value $\beta(G)$ can be upper bounded by $f_2(\alpha(G))$. If $f_1$ and $f_2$ are
polynomials or linear functions, we call $\alpha$ and $\beta$ {\em polynomially equivalent}
or {\em linearly equivalent}, respectively.
We show that for general digraphs we have three sets of pairwise equivalent 
parameters, namely $\{\dcutws\}$, $\{\dpws\}$, and $\{\dnws,\dlnlcws,\dlcws,\dlrws\}$.
For  digraphs of bounded vertex degree this reduces to two sets 
$\{\dcutws,\dpws\}$ and $\{\dnws,\dlnlcws,\dlcws,\dlrws\}$ and
for semicomplete digraphs of bounded vertex degree
all these six  graph parameters are pairwise equivalent. 
With the exception of directed rank-width, the same results 
are even shown for polynomially and linearly equivalence.

By introducing  the class of directed threshold graphs, we
give characterizations for graphs defined by parameters of small width.

%Finally in Section \ref{sec-concl} we give conclusions and open problems.

%%%%%%%%%%%%%%%%%%%%%%%%%%%%%%%%%%%%%%%%%%%%%%%%%%%%%%%%%%%%%%%%%%%%%%%%%%
\section{Preliminaries}\label{sec-pre}
%%%%%%%%%%%%%%%%%%%%%%%%%%%%%%%%%%%%%%%%%%%%%%%%%%%%%%%%%%%%%%%%%%%%%%%%%%

We use the notations of Bang-Jensen and Gutin \cite{BG09} for graphs and digraphs.

\subsection{Undirected graphs}
We work with finite undirected {\em graphs} $G=(V,E)$,
where $V$ is a finite set of {\em vertices} 
and $E \subseteq \{ \{u,v\} \mid u,v \in
V,~u \not= v\}$ is a finite set of {\em edges}.
For a vertex $v\in V$ we denote by $N_G(v)$ 
the set of all vertices which are adjacent to $v$ in $G$, 
i.e.~$N_G(v)=\{w\in V~|~\{v,w\}\in E\}$. 
Set $N_G(v)$ is called the set of all {\em neighbors} of $v$ 
in $G$ or {\em neighborhood} of $v$ in $G$.  
The {\em degree} of a vertex $v\in V$,  denoted by $\deg_G(v)$, 
is the number of neighbors of vertex $v$ in $G$, i.e.~$\deg_G(v)=|N_G(v)|$. 
The maximum vertex degree is $\Delta(G)=\max_{v\in V} \deg_G(v)$.
A graph $G'=(V',E')$ is a {\em subgraph} of graph $G=(V,E)$ if $V'\subseteq V$ 
and $E'\subseteq E$.  If every edge of $E$ with both end vertices in $V'$  is in
$E'$, we say that $G'$ is an {\em induced subgraph} of digraph $G$ and 
we write $G'=G[V']$.
For some graph class $F$ we define 
$\free(F)$ as the set of all graphs $G$ such that no induced subgraph of $G$ is isomorphic
to a member of  $F$.

%For some undirected graph $G=(V,E)$ its complement graph is defined by
%$\co G=(V,\{\{u,v\}~|~\{u,v\}\not\in E, u,v\in V, u\neq v\})$.

\paragraph{Special Undirected Graphs} 
We recall some special graphs.
By $$P_n=(\{v_1,\ldots,v_n\},\{\{v_1,v_2\},\ldots, \{v_{n-1},v_n\}\}),$$ $n \ge 2$,
we denote a path on $n$ vertices and by  $$C_n=(\{v_1,\ldots,v_n\},\{\{v_1,v_2\},\ldots, \{v_{n-1},v_n\},\{v_n,v_1\}\}),$$
$n \ge 3$, we denote a cycle on $n$ vertices.
Further by $$K_n=(\{v_1,\ldots,v_n\},\{\{v_i,v_j\}~|~1\leq i<j\leq n\}),$$
$n \ge 1$, we denote a complete graph on $n$ vertices and by $$K_{n,m}=(\{v_1,\ldots,v_n,w_1,\ldots,w_m\},
\{\{v_i,w_j\}~|~ 1\leq i\leq n, 1\leq j \leq m\})$$  a complete bipartite graph on $n+m$ vertices.

\subsection{Directed graphs}
A {\em directed graph} or {\em digraph} is a pair  $G=(V,E)$, where $V$ is 
a finite set of {\em vertices} and  $E\subseteq \{(u,v) \mid u,v \in
V,~u \not= v\}$ is a finite set of ordered pairs of distinct\footnote{Thus we do not consider 
directed graphs with loops.} vertices called {\em arcs}. 
For a vertex $v\in V$, the sets $N_G^+(v)=\{u\in V~|~ (v,u)\in E\}$ and 
$N_G^-(v)=\{u\in V~|~ (u,v)\in E\}$ are called the {\em set of all out-neighbours}
and the {\em set of all in-neighbours} of $v$. 
The  {\em outdegree} of $v$, $\odeg_G(v)$ for short, is the number
of out-neighbours of $v$ and the  {\em indegree} of $v$, $\ideg_G(v)$ for short, 
is the number of in-neighbours of $v$ in $G$.
The {\em maximum out-degree} is $\Delta^+(G)=\max_{v\in V} \odeg_G(v)$ and
the {\em maximum in-degree} is $\Delta^-(G)=\max_{v\in V} \ideg_G(v)$.
The {\em maximum vertex degree} is $\Delta(G)=\max_{v\in V} \odeg_G(v)+\ideg_G(v)$. 
%and the  {\em maximum semi-degree} is $\Delta^0(G)=\max\{\Delta^-(G),\Delta^+(G)\}$.
%A vertex  $v\in V$ is  {\em out-dominating (in-dominated)} if
%it is adjacent to every other vertex in $V$ and is a source (a sink, respectively).
A digraph $G'=(V',E')$ is a {\em subdigraph} of digraph $G=(V,E)$ if $V'\subseteq V$ 
and $E'\subseteq E$.  If every arc of $E$ with both end vertices in $V'$  is in
$E'$, we say that $G'$ is an {\em induced subdigraph} of digraph $G$ and we 
write $G'=G[V']$. For some digraph class $F$ we define 
$\free(F)$ as the set of all digraphs $G$ such that no induced subdigraph of $G$ is isomorphic
to a member of  $F$.

%For some digraph $G=(V,E)$ its {\em complement digraph} is defined by
%$\co G=(V,\{(u,v)~|~(u,v)\not\in E, u,v\in V, u\neq v\})$.
%and its {\em converse digraph} is defined by
%$\con G=(V,\{(u,v)~|~(v,u)\in E, u,v\in V, u\neq v\})$.

Let $G=(V,E)$ be a digraph.
\begin{itemize}
\item $G$ is {\em edgeless} if for all $u,v \in V$, $u \neq v$, 
none of the two pairs $(u,v)$ and $(v,u)$ belongs to $E$.

\item $G$ is a {\em tournament} if for all $u,v \in V$, $u \neq v$, 
exactly one of the two pairs $(u,v)$ and $(v,u)$ belongs to $E$.

\item $G$ is {\em semicomplete} if for all $u,v \in V$, $u \neq v$, 
at least one of the two pairs $(u,v)$ and $(v,u)$ belongs to $E$.

\item $G$ is {\em (bidirectional) complete} if for all $u,v \in V$, $u \neq v$, 
both of the two pairs $(u,v)$ and $(v,u)$ belong to $E$.
\end{itemize}

\paragraph{Omitting the directions} 
For some given digraph $G=(V,E)$, we define
its {\em underlying undirected graph} by ignoring the directions of the 
edges, i.e.~$\un(G)=(V,\{ \{u,v\} \mid (u,v) \in E \text{ or } (v,u) \in E\})$.

\paragraph{Orientations} 
There are several ways to define a digraph $G=(V,E)$ from an undirected 
graph $G_u=(V,E_u)$.
If we replace every edge $\{u,v\}\in E_u$ by 
\begin{itemize}
\item
one of the arcs $(u,v)$ and $(v,u)$, we denote $G$ as an {\em orientation} of $G_u$.
Every digraph $G$  which can be obtained by an orientation of some undirected
graph $G_u$ is called an {\em oriented graph}.

\item
one or both of the arcs $(u,v)$ and $(v,u)$, we denote $G$ as a {\em biorientation} of $G_u$.
Every digraph $G$  which can be obtained by a biorientation of some undirected
graph $G_u$ is called a {\em bioriented graph}.

\item
both arcs $(u,v)$ and $(v,u)$, we denote $G$ as a {\em complete biorientation} of $G_u$.
Since in this case $G$ is well defined by $G_u$ we also denote
it by $\overleftrightarrow{G_u}$.
Every digraph $G$  which can be obtained by a complete biorientation of some undirected
graph $G_u$ is called a {\em complete bioriented graph}. 
\end{itemize}

\paragraph{Special directed graphs}
We recall some special directed graphs.
%We denote by $\overleftrightarrow{K_n}=(\{v_1,\ldots,v_n\},\{ (v_i,v_j)~|~1\leq i\neq j\leq n\})$,
%$n \ge 1$ a complete directed graph on $n$ vertices.
By $$\overrightarrow{P_n}=(\{v_1,\ldots,v_n\},\{ (v_1,v_2),\ldots, (v_{n-1},v_n)\}),$$ $n \ge 2$
we denote a directed path on $n$ vertices and by  
$$\overrightarrow{C_n}=(\{v_1,\ldots,v_n\},\{(v_1,v_2),\ldots, (v_{n-1},v_n),(v_n,v_1)\}),$$ $n \ge 2$
we denote a directed cycle on $n$ vertices. Further let 
$$\overleftrightarrow{K_n}=(\{v_1,\ldots,v_n\},\{ (v_i,v_j)~|~1\leq i\neq j\leq n\})$$
be a bidirectional complete digraph on $n$ vertices.
The {\em $k$-power graph} $G^k$ of a digraph $G$ is a graph with the 
same vertex set as $G$. There is an arc $(u,v)$ in $G^k$ if and only if there 
is a directed path from $u$ to $v$ of length at most $k$ in $G$. 
An {\em oriented forest (tree)} is the orientation of a forest (tree). A digraph is
an {\em out-tree (in-tree)} if it is an oriented tree in which there is
exactly one vertex of indegree (outdegree) zero. 
A {\em directed acyclic digraph (DAG for short)} is a digraph without any $\overrightarrow{C_n}$,
$n\geq 2$ as subdigraph.

%%%%%%%%%%%%%%%%%%%%%%%%%%%%%%%%%%%%%%%%%%%%%%%%%%%%%%%%%%%%%%%%%%%%%%%%%%
\section{Linear width parameters for directed graphs}\label{sec-pre-para}
%%%%%%%%%%%%%%%%%%%%%%%%%%%%%%%%%%%%%%%%%%%%%%%%%%%%%%%%%%%%%%%%%%%%%%%%%%

A {\em layout} of a graph $G=(V,E)$ is a bijective function $\varphi:V \to \{1,\ldots,|V|\}$.  
For a graph $G$, we denote by $\Phi(G)$ the set of all layouts for $G$.
Given a layout $\varphi\in \Phi(G)$ we define for $1\leq i\leq |V|$ 
the vertex sets \[L(i,\varphi,G)=\{u\in V ~|~ \varphi(u)\leq i \}  \text{ and } 
R(i,\varphi,G)=\{u\in V ~|~ \varphi(u) > i \}.\]
The {\em reverse layout} $\varphi^R$, for $\varphi\in \Phi(G)$, 
is defined by $\varphi^R(u)=|V|-\varphi(u)+1$, $u\in V$.

%%%%%%%%%%%%%%%%%%%%%%%%%%%%%%%%%%%%%%%%%%%%%%%%%%%%%%%%%%%%%%%%%%%%%%%%%%
\subsection{Directed path-width}\label{sec-dpw}
%%%%%%%%%%%%%%%%%%%%%%%%%%%%%%%%%%%%%%%%%%%%%%%%%%%%%%%%%%%%%%%%%%%%%%%%%%

%According to Bar{\'a}t \cite{Bar06}, 

The path-width ($\pws$) for undirected graphs 
was introduced in \cite{RS83}. 
The notion of directed path-width was
introduced by Reed, Seymour, and Thomas around 1995 (cf.\  \cite{Bar06}) and relates to directed
tree-width introduced by Johnson, Robertson, Seymour, and Thomas in
\cite{JRST01}.\footnote{Please note that there are some works which define the path-width of a digraph $G$ in
a different and not equivalent way by using the
path-width of  $\un(G)$, see Section  \ref{sec-concl}.
}

\begin{definition}[directed path-width]\label{def-dpw}
Let $G=(V,E)$ be a digraph.
A {\em directed path-decom\-po\-sition} of $G$
is a sequence $(X_1, \ldots, X_r)$ of subsets of $V$, called {\em bags}, such 
that the following three conditions hold true.
\begin{enumerate}[(1)]
\item $X_1 \cup \ldots \cup X_r ~=~ V$, 
\item\label{def-dpw-2} for each $(u,v) \in E$ there is a pair $i \leq j$ such that
  $u \in X_i$ and $v \in X_j$, and 
\item for all $i,j,\ell$ with $1 \leq i < j < \ell \leq r$ it holds
  $X_i \cap X_\ell \subseteq X_j$. 
\end{enumerate}
The {\em width} of a directed path-decomposition ${\cal X}=(X_1, \ldots, X_r)$ 
is $$\max_{1 \leq i \leq r} |X_i|-1.$$ The {\em directed path-width} of $G$,
$\dpws(G)$ for short, is 
the smallest integer $w$ such that there is a directed path-de\-com\-po\-sition for 
$G$ of width $w$. 
\end{definition}

%A  directed path-decomposition $(X_1, \ldots, X_r)$ is {\em nice}, if 
%$X_1=\emptyset$, $X_r=\emptyset$, and $|X_i-X_{i-1}|+|X_{i-1}-X_i|=1$ 
%for every $2\leq i \leq r$. If $X_i=X_{i-1}\cup\{v\}$, we denote
%$X_i$ as an {\em introduce node} and if  $X_{i-1}=X_{i}\cup\{v\}$, we denote
%$X_i$ as a {\em forget node}. Every directed path-decomposition can
%be transformed into a nice directed path-decomposition in time
%$\bigo(|V|^2)$. By the definition of a path-decomposition, within a
%nice directed path-decomposition  every graph vertex
%is introduced and forgotten exactly one. Thus every nice 
%path-decomposition has $r=2|V|+1$ bags.

There are a number of results on algorithms for computing directed path-width.
The directed path-width of a digraph $G=(V,E)$ can be computed in time 
$\bigo(\nicefrac{|E|\cdot |V|^{2\dpws(G)}}{(\dpws(G)-1)!})$ 
by \cite{KKKTT16} and in time 
$\bigo(\dpws(G)\cdot|E|\cdot |V|^{2\dpws(G)})$ 
by \cite{Nag12}. This leads to XP-algorithms
for directed path-width w.r.t.~the standard parameter
and implies that for each constant $w$, it is decidable in polynomial time whether a given 
digraph has directed path-width at most $w$. 
Further it is shown in \cite{KKT15} how to decide whether the directed
path-width of an $\ell$-semicomplete digraph is at most $w$ 
in time $(\ell+2w+1)^{2w}\cdot n^{\bigo(1)}$.
Furthermore the directed path-width can be computed in time 
$3^{\tau(\un(G))} \cdot |V|^{\bigo(1)}$, where  $\tau(\un(G))$ denotes the vertex
cover number of the underlying undirected graph of $G$, by \cite{Kob15}.
For sequence digraphs with a
given decomposition into $k$ sequence the directed path-width 
can be computed  in time $\bigo(k\cdot (1+N)^k)$, where $N$ denotes the
maximum sequence length \cite{GRR18}.
Further the directed path-width (and also the directed tree-width)
can be computed in linear time for  directed co-graphs \cite{GR18c}.

%The next lemma follows by the definition of converse digraphs and
%path-decompositions.
%
%\begin{lemma}\label{le-pw} 
%Let $G$ be a digraph.
%Sequence $(X_1, \ldots, X_r)$ is a directed path-decomposition 
%for $G$ if and only if sequence $(X_r, \ldots, X_1)$ a directed path-decomposition 
%of $\con G$. 
%\end{lemma}
%%
%
%
%
%\begin{lemma}\label{le-pw2} Let $G$ be a digraph.
%Let $G$ be some digraph, then $\dpws(G)=\dpws(\con G)$. 
%\end{lemma}

Example for digraphs of small directed path-width are given 
in Example \ref{ex-dvsn}, when considering the equivalent  (cf.\ Lemma \ref{th-pw-vs})
notation of directed vertex separation number.

%%%%%%%%%%%%%%%%%%%%%%%%%%%%%%%%%%%%%%%%%%%%%%%%%%%%%%%%%%%%%%%%%%%%%%%%%%%
\subsection{Directed vertex separation number}
%%%%%%%%%%%%%%%%%%%%%%%%%%%%%%%%%%%%%%%%%%%%%%%%%%%%%%%%%%%%%%%%%%%%%%%%%%%

The vertex separation number ($\vsns$) for undirected graphs 
was introduced in \cite{LT79}. 
In \cite{YC08} the directed vertex separation number for a digraph $G=(V,E)$ 
has been introduced as follows.

\begin{definition}[directed vertex separation number, \cite{YC08}]
The {\em directed vertex separation number} of a digraph $G=(V,E)$ is
defined as follows. 
\begin{equation}
\dvsns(G) = \min_{\varphi\in \Phi(G)} \max_{1\leq i\leq |V|} |\{u\in L(i,\varphi,G)~|~ \exists v \in R(i,\varphi,G): (v,u)\in E\}|\label{vsn-d1}
\end{equation}
\end{definition}

%For every optimal layout $\varphi$ we obtain the same value when we consider 
%the arcs forward in the reverse ordering $\varphi^R$. Thus we obtain an equivalent 
%definition as follows (cf.~\cite{BG09}).
%
%
%\begin{equation}
%\dvsns(G) = \min_{\varphi\in \Phi(G)} \max_{1\leq i\leq |V|} |\{u\in R(i,\varphi,G)~|~ \exists v \in L(i,\varphi,G): (v,u)\in E\}|\label{vsn-d2}
%\end{equation}

Since the converse digraph has the same path-width as its 
original graph, 
%by Lemma \ref{le-pw2}
we obtain an equivalent 
definition, which will be useful later on.
\begin{equation}
\dvsns(G) = \min_{\varphi\in \Phi(G)} \max_{1\leq i\leq |V|} |\{u\in L(i,\varphi,G)~|~ \exists v \in R(i,\varphi,G): (u,v)\in E\}|\label{vsn-d3a}
\end{equation}
%\begin{equation}
%\dvsns(G) = \min_{\varphi\in \Phi(G)} \max_{1\leq i\leq |V|} |\{u\in R(i,\varphi,G)~|~ \exists v \in L(i,\varphi,G): (u,v)\in A\}|\label{vsn-d4a}
%\end{equation}

%%

\begin{example}[directed vertex separation number]\label{ex-dvsn}
\label{ex-vs}
\begin{enumerate}[(1.)]
\item Every  directed path $\overrightarrow{P_n}$
has directed vertex separation number 0. 
%This can be shown by the layout $\varphi(v_i)=i$, $1\leq i \leq n$.

\item
The $k$-power graph $(\overrightarrow{P_n})^k$ of a directed path $\overrightarrow{P_n}$ has  directed vertex 
separation number 0.

\item  Every directed cycle
$\overrightarrow{C_n}$
has directed  vertex separation number 1. 
%This can be shown by the layout $\varphi(v_i)=i$, $1\leq i \leq n$.

\item The bidirectional complete digraph $\overleftrightarrow{K_3}$
and the complete biorientation of a star $K_{2,2,2}$ have directed vertex 
separation number $2$.\footnote{We use the complete biorientations of the two forbidden
minors for the set of all graphs of vertex separation number 1, see \cite[Fig.\ 1]{KL94}.}

%star_{2,2,2}

\item
Every bidirectional complete digraph $\overleftrightarrow{K_n}$ has 
directed vertex 
separation number $n-1$.

\end{enumerate}
\end{example}

%%%%%%%%%%%%%%%%%%%%%%%%%%%%%%%%%%%%%%%%%%%%%%%%%%%%%%%%%%%%%%%%%%%%%%%%%%
\subsection{Directed cut-width}
%%%%%%%%%%%%%%%%%%%%%%%%%%%%%%%%%%%%%%%%%%%%%%%%%%%%%%%%%%%%%%%%%%%%%%%%%%

The cut-width ($\cutws$) of undirected graphs was  introduced in \cite{AH73}.
The cut-width of digraphs was introduced by  Chudnovsky, Fradkin, 
and Seymour in \cite{CFS12}. 
%The directed cut-width of some 
%digraph $G=(V,E)$  is defined by an ordering of vertices similar to
%undirected cut-width, with the exception
%that only arcs directed forward in the ordering contribute to the width of a cut.
%
%

\begin{definition}[directed cut-width, \cite{CFS12}]
The {\em directed cut-width} of digraph $G=(V,E)$ is 
\begin{equation}
\dcutws(G) = \min_{\varphi\in \Phi(G)} \max_{1\leq i\leq |V|} | (u,v)\in E ~|~ u \in L(i,\varphi,G), v\in R(i,\varphi,G)\}|.\label{cutw-d1}
\end{equation}
\end{definition}

For every optimal layout $\varphi$ we obtain the same value when we consider 
the arcs backwards in the 
reverse ordering $\varphi^R$. Thus we obtain an equivalent definition, 
which will be useful later on.
\begin{equation}
\dcutws(G) = \min_{\varphi\in \Phi(G)} \max_{1\leq i\leq |V|} | (v,u)\in E ~|~ u \in L(i,\varphi,G), v\in R(i,\varphi,G)\}|\label{cutw-d2}
\end{equation}

Subexponential parameterized algorithms for computing the directed cut-width of semicomplete digraphs
are given in \cite{FP13a}.

\begin{example}[directed cut-width] 
\label{ex-cut-w}
\begin{enumerate}[(1.)]
\item Every directed path  $\overrightarrow{P_n}$ has directed cut-width 0. 
%This can be shown by the layout $\varphi(v_i)=n-i+1$, $1\leq i \leq n$.

\item 
The $k$-power graph $(\overrightarrow{P_n})^k$ of a directed path $\overrightarrow{P_n}$ 
has  directed directed cut-width $0$.

\item Every directed cycle  $\overrightarrow{C_n}$
has directed cut-width 1. 

\item The bidirectional complete digraph $\overleftrightarrow{K_3}$ has 
directed cut-width $2$.

\item 
Every bidirectional complete digraph $\overleftrightarrow{K_n}$ has 
directed cut-width $\lfloor \frac{n}{2} \rfloor\cdot \lceil \frac{n}{2} \rceil$.

\end{enumerate}
\end{example}

%%%%%%%%%%%%%%%%%%%%%%%%%%%%%%%%%%%%%%%%%%%%%%%%%%%%%%%%%%%%%%%%%%%%%%%%%%
\subsection{Directed linear NLC-width}
%%%%%%%%%%%%%%%%%%%%%%%%%%%%%%%%%%%%%%%%%%%%%%%%%%%%%%%%%%%%%%%%%%%%%%%%%%

The linear NLC-width ($\lnlcws$) for undirected graphs was introduced in \cite{GW05a} 
as a parameter
by restricting the NLC-width\footnote{The abbreviation NLC results from the {\em
node label controlled} embedding mechanism originally defined for graph
grammars.}, defined in \cite{Wan94}, 
to an underlying path-structure. Next we introduce the corresponding
parameter for directed graphs by  a modification of the edge 
inserting operation  $\times_S$ of the linear NLC-width, which also
leads to a restriction of directed NLC-width \cite{GWY16}.
Let $[k]=\{1,\ldots,k\}$ be the set of all integers between $1$ and $k$.

\begin{definition}[directed linear NLC-width]
The {\em directed linear NLC-width} of a digraph $G$, $\dlnlcws(G)$ for short, is
the minimum number of labels
needed to define $G$ using the following four operations:
\begin{enumerate}
\item Creation of a new vertex  with label $a$ (denoted by $\bullet_a$).
\item Disjoint union of a labeled digraph $G$ and a single vertex $v$ labeled by $a$
plus all arcs between label pairs from $\overrightarrow{S}$ directed from $G$ to $v$
and all  arcs between label pairs from $\overleftarrow{S}$ directed from $v$ to $G$
for two relations $\overrightarrow{S}$ and $\overleftarrow{S}$
(denoted by $G \otimes_{(\overrightarrow{S}, \overleftarrow{S})} \bullet_a$).

\item Change every label $a$ into label $R(a)$ by some function $R$ (denoted by $\circ_R$).
\end{enumerate}
\end{definition}

The {\em directed linear NLC-width} of an unlabeled digraph $G=(V,E)$ is 
the smallest integer $k$, such that there is a mapping  $\lab : V \to [k]$ 
such that the labeled digraph  $(V,E,\lab)$  has directed linear NLC-width  
at  most $k$.
An expression $X$ built with the operations
defined above is called a {\em directed linear NLC-width $k$-expression}.
Note that every expression defines 
a layout by the order in which the vertices are inserted in the corresponding digraph.
 The digraph defined by expression $X$ is denoted by $\val(X)$.
%Every such expression has by its recursive definition
%a tree structure which we call the {\em directed linear NLC-width expression tree}.

\begin{example}[directed linear NLC-width] 
\label{ex-d-lnlcw}
\begin{enumerate}[(1.)]
\item 
Every bidirectional complete digraph $\overleftrightarrow{K_n}$ has 
directed  linear NLC-width 1. 

\item The directed paths  $\overrightarrow{P_3}$
and $\overrightarrow{P_4}$ have directed  linear NLC-width  2.

\item 
Every directed path  $\overrightarrow{P_n}$
has directed  linear NLC-width at most 3. 
%This
%can  be shown by the following 3-expressions $X_{n}$.
%\[X_{3}= (\bullet_1  \otimes_{(\{(1,2)\},\emptyset)} \bullet_2) \otimes_{(\{(2,3)\},\emptyset)} \bullet_3\]
%\[X_{n}= \circ_{\{(1,1),(2,1),(3,2)\}}(X_{n-1}) \otimes_{(\{(2,3)\},\emptyset)}  \bullet_3, ~n\ge 4 \]

\item 
Every directed cycle  $\overrightarrow{C_n}$
has directed  linear NLC-width at most 4.

\item Every $k$-power graph  $(\overrightarrow{P_n})^k$  of a directed path $\overrightarrow{P_n}$
has directed linear NLC-width at most $k+2$. 
%This can
%be shown by an expression which inserts the vertices of $\overrightarrow{P_n}$
%in the order $v_1,\ldots,v_n$.

\item Every complete biorientation of a grid $\overleftrightarrow{G_n}$, $n\geq 3$,
has directed linear NLC-width   at least $n$ and at most $n+2$, see \cite{GR00,Gur08b}.

\end{enumerate}
\end{example}

% GR00 cw(G_n) >=n

% nw(G_n) <=n+1 my paper

%%%%%%%%%%%%%%%%%%%%%%%%%%%%%%%%%%%%%%%%%%%%%%%%%%%%%%%%%%%%%%%%%%%%%%%%%%
\subsection{Directed linear clique-width}
%%%%%%%%%%%%%%%%%%%%%%%%%%%%%%%%%%%%%%%%%%%%%%%%%%%%%%%%%%%%%%%%%%%%%%%%%%

The linear clique-width ($\lcws$) for undirected graphs was introduced in \cite{GW05a} 
as a parameter by restricting the clique-width, defined in \cite{CO00}, 
to an underlying path-structure. Next we introduce the corresponding
parameter for directed graphs by  a modification of the edge 
inserting operation of the linear clique-width, which also
leads to a restriction for directed clique-width \cite{CO00}.

\begin{definition}[directed linear clique-width]
The {\em directed linear clique-width} of a digraph $G$, $\dlcws(G)$ for short, is
the minimum number of labels
needed to define $G$ using the following four operations:
\begin{enumerate}
\item Creation of a new vertex  with label $a$ (denoted by $\bullet_a$).
\item Disjoint union of a labeled digraph $G$ and a single vertex labeled by $a$
(denoted by $G\oplus \bullet_a$).
\item Inserting an arc from every vertex with label $a$ to every vertex with label $b$
($a\neq b$, denoted by $\alpha_{a,b}$).
\item Change label $a$ into label $b$ (denoted by $\rho_{a\to b}$).
\end{enumerate}
\end{definition}

The {\em linear clique-width} of an unlabeled digraph $G=(V,E)$ is the smallest integer
$k$, such that there is a mapping  $\lab : V \to [k]$ such that
the labeled digraph  $(V,E,\lab)$  has linear linear clique-width at  most $k$. 
An expression $X$ built with the operations
defined above is called a 
{\em directed linear clique-width $k$-expression}. Note that every expression defines 
a layout by the order in which the vertices are inserted in the corresponding digraph.
%Every such expression has by its recursive definition
%a tree structure which we call the {\em directed linear clique-width expression tree}.
The digraph defined by expression $X$ is denoted by $\val(X)$.

\begin{example}[directed linear clique-width]\label{ex-dlcw}
\label{ex-d-lcw}
\begin{enumerate}[(1.)]
\item Every edgeless digraph has 
directed  linear clique-width 1.

\item
Every bidirectional complete digraph $\overleftrightarrow{K_n}$ has 
directed  linear clique-width 2.

\item Every directed path $\overrightarrow{P_n}$
has directed  linear clique-width at most 3. 
%This
%can  be shown by the following 3-expressions  $Y_{n}$.
%\[Y_{3}= \alpha_{2,3}(\alpha_{1,2}(\bullet_1 \oplus \bullet_2) \oplus \bullet_3)\]
%\[Y_{n}= \alpha_{2,3} (\rho_{3 \to 2}  (\rho_{2\to 1}  (Y_{n-1})) \oplus \bullet_3), ~n\ge 4 \]

\item 
Every directed cycle  $\overrightarrow{C_n}$
has directed  linear clique-width at most 4.

\item Every $k$-power graph $(\overrightarrow{P_n})^k$  of a directed path   $\overrightarrow{P_n}$
has directed linear clique-width at most $k+2$. 
%This can
%be shown by an expression which inserts the vertices of $\overrightarrow{P_n}$
%in the order $v_1,\ldots,v_n$. 
For $n\geq k(k+1)+2$  the given bound on the  directed linear clique-width  
is even exact by Corollary \ref{power-exact}.

\item Every complete biorientation of a grid $\overleftrightarrow{G_n}$, $n\geq 3$,
has directed linear clique-width   at least $n$ and at most $n+2$, see \cite{GR00,Gur08b}.

\end{enumerate}
\end{example}

% GR00 cw(G_n) >=n

% nw(G_n) <=n+1 my paper

%%%%%%%%%%%%%%%%%%%%%%%%%%%%%%%%%%%%%%%%%%%%%%%%%%%%%%%%%%%%%%%%%%%%%%%%%%
\subsection{Directed neighbourhood-width}
%%%%%%%%%%%%%%%%%%%%%%%%%%%%%%%%%%%%%%%%%%%%%%%%%%%%%%%%%%%%%%%%%%%%%%%%%%

The neighborhood-width ($\nws$) for undirected graphs was introduced in \cite{Gur06a}.
It differs from linear NLC-width and linear clique-width at most by one
but it is independent of vertex labels.

Let $G=(V,E)$ be a digraph and $U,W\subseteq V$ two disjoint vertex sets. The
set of all out-neighbours of $u$ into set $W$ and the set of all in-neighbours of $u$ into set $W$ 
are defined by $N^+_W(u)=\{v\in W ~|~(u,v)\in E\}$ and  $N^-_W(u)=\{v\in W ~|~ (v,u)\in E\}$.
The {\em directed neighbourhood} of vertex $u$ into set $W$ is defined by
$N_W(u)=(N^+_W(u),N^-_W(u))$ and  the set of all directed 
neighbourhoods of the vertices of set $U$ into set $W$ is
$N(U,W)=\{N_W(u) ~|~ u \in U\}$.
For some  layout $\varphi\in \Phi(G)$ we define 
$\dnws(\varphi,G) = \max_{1\leq i\leq |V|}|N(L(i,\varphi,G),R(i,\varphi,G))|$.

\begin{definition}[directed neighbourhood-width]
The {\em directed neighbour\-hood-width} of a digraph $G$ is 
\[\dnws(G) = \min_{\varphi\in \Phi(G)} \dnws(\varphi,G).\]
\end{definition}

\begin{example}[directed neighbourhood-width]\label{ex-nw} 
\begin{enumerate}[(1.)]
\item
Every bidirectional complete digraph $\overleftrightarrow{K_n}$ has 
directed  neighbourhood-width 1.

\item Every directed path  $\overrightarrow{P_n}$
has directed  neighbourhood-width at most 2. 
%This can  be shown by the layout $\varphi(v_i)=i$, $1\leq i \leq n$.

\item Every directed cycle  $\overrightarrow{C_n}$
has directed  neighbourhood-width at most 3. 
%This can  be shown by the layout $\varphi(v_i)=i$, $1\leq i \leq n$.

\item 
Every $k$-power graph  $(\overrightarrow{P_n})^k$ of a directed path 
 $\overrightarrow{P_n}$
has directed neigh\-bour\-hood-width at most $k+1$. 
%This
%can  be shown by the layout $\varphi(v_i)=i$, $1\leq i \leq n$.
For $n\geq k(k+1)+2$  the given bound on the  directed neigh\-bour\-hood-width  
is even exact by Corollary \ref{power-exact}.

\item Every complete biorientation of a grid $\overleftrightarrow{G_n}$, $n\geq 3$,
has directed  neigh\-bour\-hood-width at least $n$ and at most $n+1$, see \cite{GR00,Gur08b}.

\end{enumerate}
\end{example}

%%%%%%%%%%%%%%%%%%%%%%%%%%%%%%%%%%%%%%%%%%%%%%%%%%%%%%%%%%%%%%%%%%%%%%%%%%
\subsection{Directed linear rank-width}
%%%%%%%%%%%%%%%%%%%%%%%%%%%%%%%%%%%%%%%%%%%%%%%%%%%%%%%%%%%%%%%%%%%%%%%%%%

The rank-width for directed graphs was introduced in  Kant\'e in \cite{KR13}.
In \cite{Gan11} the linear rank-width  ($\lrws$)  for undirected graphs was introduced
by restricting the tree-structure of a rank decomposition to  caterpillars,
which is also possible for the directed case as follows.

Let  $G=(V,E)$ a digraph and $V_1,V_2\subset V$ be a disjoint
partition of the vertex set of $G$. Further let $M_{V_1}^{V_2}=(m_{ij})$ be the 
adjacent matrix defined over the four-element field
GF(4) for partition  $V_1\cup V_2$, i.e.
 \[
m_{ij}    = \left\{
 \begin{array}{lll}
   0 & \hspace{1cm}\mbox{if } (v_i,v_j)\not \in E  \mbox{ and }  (v_j,v_i)\not \in E  \\
 \mathbb{a}&  \hspace{1cm}\mbox{if }  (v_i,v_j)\in E      \mbox{ and }  (v_j,v_i)\not \in E  \\
  \mathbb{a}^2 &\hspace{1cm}\mbox{if }  (v_i,v_j)\not \in E      \mbox{ and }  (v_j,v_i) \in E  \\
 1 &\hspace{1cm}\mbox{if }  (v_i,v_j)\in E      \mbox{ and }  (v_j,v_i)\in E \\

     \end{array}
   \right.
\]

In GF(4) we have four elements $\{0,1,\mathbb{a},\mathbb{a}^2 \}$ with the properties
$1+\mathbb{a} + \mathbb{a}^2=0$ and $\mathbb{a}^3=1$.

\begin{definition}[directed linear rank-width]
A {\em directed linear rank decomposition} of digraph $G=(V,E)$ is a pair $(T,f)$, where $T$ is a
caterpillar (i.e.
a path with pendant vertices)
 and $f$ is a bijection between $V$ and the leaves of $T$. 
Each edge $e$ of $T$ divides the vertex set of $G$ by $f$ into two disjoint sets 
$A_e,B_e$. 
For an edge $e$ in $T$ we define
the width of $e$ as $\rg^{(4)}(M_{A_e}^{B_e})$, i.e. the matrix rank\footnote{We denote by $\rg^{(4)}(M)$ the rank of some matrix over 
$\{0,1,\mathbb{a},\mathbb{a}^2 \}$, i.e. the number of independent lines or rows of $M$. 
A set of rows $R$ (i.e. vectors) are {\em independent}, if there is no linear combination 
of a subset $R'$ of $R$ to define a row in $R-R'$.
A {\em linear combination} for some $n$-tuple $r$ is $\sum_{i=1}^n a_ir_i$  for $a_i\in\{0,1,\mathbb{a},\mathbb{a}^2\}$.} of $M$. 
The {\em width} of a directed linear rank decomposition 
$(T,f)$ is the maximal width of all edges in $T$.  
The {\em directed linear rank-width} of a digraph
$G$, $\dlrws(G)$ for short, is the minimum width of all directed linear  rank decompositions for $G$.
\end{definition}

\begin{example}[directed linear  rank-width]\label{ex-lrw} 
\begin{enumerate}[(1.)]
\item
Every bidirectional complete digraph $\overleftrightarrow{K_n}$ and every directed path  $\overrightarrow{P_n}$ has 
directed  linear rank-width 1.

\item Every directed cycle  $\overrightarrow{C_n}$
has directed linear rank-width at most 2.

%\item ... 3

\item Every complete biorientation of a grid $\overleftrightarrow{G_n}$, $n\geq 3$,
has directed   linear rank-width at least $\lceil\frac{2n}{3}\rceil$ and at most $n+1$, see \cite{HOSG08,Gur08b}.

\end{enumerate}
\end{example}

%%%%%%%%%%%%%%%%%%%%%%%%%%%%%%%%%%%%%%%%%%%%%%%%%%%%%%%%%%%%%%%%%%%%%%%%%%%
%%%%%%%%%%%%%%%%%%%%%%%%%%%%%%%%%%%%%%%%%%%%%%%%%%%%%%%%%%%%%%%%%%%%%%%%%%%
\section{Directed width and undirected width}\label{sec-d-u-w}
%%%%%%%%%%%%%%%%%%%%%%%%%%%%%%%%%%%%%%%%%%%%%%%%%%%%%%%%%%%%%%%%%%%%%%%%%%%
%%%%%%%%%%%%%%%%%%%%%%%%%%%%%%%%%%%%%%%%%%%%%%%%%%%%%%%%%%%%%%%%%%%%%%%%%%%

Next we compare the directed width of a digraph $G$
and the undirected width of its  underlying undirected graph
$\un(G)$.

\begin{theorem}\label{th-u-d-w}
Let $G$ be a directed graph. 
\begin{enumerate}[(1.)]
\item \label{th-u-d-w-pw} $\dpws(G)\leq \pws(\un(G))$

\item \label{th-u-d-w-cutw} $\dcutws(G)\leq \cutws(\un(G))$     

\item \label{th-u-d-w-nw} $\nws(\un(G))\leq \dnws(G)\leq \Delta(\un(G)) \cdot \nws(\un(G))$

\item \label{th-u-d-w-nlc}$\lnlcws(\un(G))\leq \dlnlcws(G)\leq \Delta(\un(G)) \cdot \lnlcws(\un(G))+1$

\item \label{th-u-d-w-cw} $\lcws(\un(G))\leq \dlcws(G)\leq \Delta(\un(G)) \cdot \lcws(\un(G))+1$

\item \label{th-u-d-w-rw} $\lrws(\un(G))\leq \dlrws(G) \leq \Delta(\un(G)) \cdot 2^{\lrws(\un(G))+1}-1$
\end{enumerate}
\end{theorem}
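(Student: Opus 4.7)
For (\ref{th-u-d-w-pw}) and (\ref{th-u-d-w-cutw}), the plan is to reuse an optimal undirected decomposition of $\un(G)$ and verify that it also satisfies the directed definition with the same width. For (\ref{th-u-d-w-pw}), every arc $(u,v)$ of $G$ becomes an edge of $\un(G)$, so any bag containing both endpoints in the undirected path-decomposition witnesses Definition \ref{def-dpw}(\ref{def-dpw-2}) with $i=j$. For (\ref{th-u-d-w-cutw}), for any layout each directed arc crossing a cut corresponds to a distinct undirected edge crossing the same cut, so the directed cut count is bounded by the undirected one.

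For the lower bounds in (\ref{th-u-d-w-nw})--(\ref{th-u-d-w-cw}), I take an optimal directed object and project out the direction information. In (\ref{th-u-d-w-nw}), the projection $(N_R^+(u),N_R^-(u))\mapsto N_R^+(u)\cup N_R^-(u)$ collapses directed equivalence classes into undirected ones, so the number of classes at any cut cannot increase. For (\ref{th-u-d-w-nlc}) and (\ref{th-u-d-w-cw}), a directed $k$-expression for $G$ is converted into an undirected $k$-expression for $\un(G)$ with the same label set by replacing every vertex-adding operation $\otimes_{(\overrightarrow{S},\overleftarrow{S})}$ by $\times_{\overrightarrow{S}\cup\overleftarrow{S}}$ (respectively every directed arc-insertion $\alpha_{a,b}$ by an undirected edge-insertion $\eta_{a,b}$), leaving all other operations unchanged.

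For the upper bounds in (\ref{th-u-d-w-nw})--(\ref{th-u-d-w-cw}), the plan is to start from an optimal undirected object and refine it by at most $\Delta(\un(G))$ direction-copies per label or equivalence class; the key input is that each vertex has at most $\Delta(\un(G))$ neighbours in $\un(G)$, so at most that many arc-orientations need to be recorded per vertex. In (\ref{th-u-d-w-nw}) I reuse an optimal layout for $\nws(\un(G))$; at each cut, every undirected equivalence class has a common neighbourhood $N\subseteq R$ with $|N|\le\Delta(\un(G))$, and within the class the directed refinements inject into a set of direction-assignments controlled by $|N|$. In (\ref{th-u-d-w-nlc}) and (\ref{th-u-d-w-cw}) I construct a directed expression in the same vertex order as an optimal undirected one, splitting each of the original labels into $\Delta(\un(G))$ direction-specific copies (plus one working label) so that the correct $\overrightarrow{S}$, $\overleftarrow{S}$, or $\alpha$-operation is available at the moment each new vertex is inserted.

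For (\ref{th-u-d-w-rw}), the plan is algebraic and mirrors the other cases at the matrix level. For the lower bound I reinterpret an optimal directed linear rank decomposition $(T,f)$ as an undirected one; along each edge $e$ of $T$, $M^u_e$ is obtained from $M^d_e$ by the entrywise support map ($0\mapsto 0$, any non-zero value $\mapsto 1$), and $\rg^{(2)}(M^u_e)\le\rg^{(4)}(M^d_e)$ is established by observing that a small GF$(4)$-dimensional column space forces few distinct column-supports, which in turn bounds the GF$(2)$-rank of $M^u_e$. For the upper bound I start from an optimal undirected decomposition: along each edge, $M^u_e$ has at most $2^{\lrws(\un(G))}$ distinct columns, each with support of size at most $\Delta(\un(G))$, and the directed columns of $M^d_e$ sharing a given non-empty support $S$ span a GF$(4)$-subspace of dimension at most $|S|\le\Delta(\un(G))$; summing over distinct supports yields the claimed bound. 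The main obstacle is precisely this algebraic step: the support map GF$(4)\to$GF$(2)$ is not a ring homomorphism, so the rank comparison has to be carried out combinatorially via column supports rather than by a direct linear-algebraic identity. A secondary subtlety is the relabelling bookkeeping in (\ref{th-u-d-w-nlc}) and (\ref{th-u-d-w-cw}), where the globally acting relabelling operations force every direction-refinement to be installed at vertex-creation time, but the $\Delta(\un(G))$ bound on the undirected degree keeps the number of simultaneously needed label-copies under control.
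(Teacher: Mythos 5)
Your handling of (1.), (2.), the lower bounds in (3.)--(5.), and the upper bound in (6.) is sound and essentially matches the paper (your upper bound for (6.) is in fact a nice self-contained alternative to the paper's detour through $\dnws$ and Proposition~6.3 of \cite{OS06}). However, two steps do not deliver the stated bounds. For the upper bound in (3.) you bound the directed refinements of an undirected class with common neighbourhood $N$ by ``direction-assignments controlled by $|N|$''; but the number of such assignments (each element of $N$ being out-, in-, or bi-oriented) is $3^{|N|}$, which only yields $\dnws(G)\le 3^{\Delta(\un(G))}\cdot\nws(\un(G))$, not the claimed linear bound. The observation you are missing, and which the paper uses, is that every vertex of such a class is adjacent in $\un(G)$ to any fixed $v\in N$, so the \emph{class itself} has at most $\Delta(\un(G))$ members and hence at most $\Delta(\un(G))$ distinct directed neighbourhoods; summing over classes (treating the class with empty neighbourhood separately) gives $\Delta(\un(G))\cdot\nws(\un(G))$. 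The same issue undermines your direct constructions for the upper bounds in (4.) and (5.): a vertex cannot be given a single ``direction-specific'' label copy at creation time, since the edges created by one $\eta_{a,b}$ may need different orientations for different vertex pairs; the grouping that actually works is by directed neighbourhood towards the not-yet-inserted vertices, i.e.\ exactly the detour through $\dnws$ that the paper takes via Lemma \ref{T3b} together with $\nws(\un(G))\le\lnlcws(\un(G))$ and $\nws(\un(G))\le\lcws(\un(G))$ from \cite{Gur06a}.

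The lower bound in (6.) is a genuine gap. Counting column supports gives at most $4^{r}$ distinct supports for a GF$(4)$-column space of dimension $r$, hence only $\lrws(\un(G))\le 4^{\dlrws(G)}$ rather than $\lrws(\un(G))\le\dlrws(G)$. The paper instead argues at the level of dependencies: if one row of $M_{A_e}^{B_e}$ is a non-zero scalar multiple of another over GF$(4)$, the two rows have identical supports (multiplication by a non-zero element of GF$(4)$ permutes $\{1,\mathbb{a},\mathbb{a}^2\}$ and fixes $0$), so the corresponding GF$(2)$ rows coincide and are dependent. Since the support of a general GF$(4)$ linear combination of three or more rows is \emph{not} determined by the supports of the summands, the combinatorial support-counting route cannot be patched to recover the linear bound; any correct proof must track dependencies as the paper does (and even there, extending from pairwise scalar dependence to arbitrary dependencies requires care that your proposal does not address).
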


\begin{proof}
\begin{enumerate}[(1.)]
\item A path-decomposition for $\un(G)$ of width $k$ is also a directed path-decomposition
for $G$ of width $k$.

\item  Let $G=(V,E)$ be a digraph and $\un(G)$ be the underlying undirected graph
of cut-width $k$. Let $\varphi$ be the corresponding ordering of the
vertices, such that for every $i$, $1\leq i \leq |V|$ there are at
most $k$ edges $\{u,v\}$ such that $u\in L(i,\varphi,\un(G))$ and $v\in R(i,\varphi,\un(G))$.
Since every undirected edge  $\{u,v\}$ in $\un(G)$ comes from a directed edge $(u,v)$,
a directed edge $(v,u)$, or both, and the directed cut-width only counts edges directed
forward, the same layout shows that the directed
cut-width of $G$ is at most $k$.

\item Let $G=(V,E)$ be a digraph of directed neighbourhood-width $k$ and $\varphi\in\Phi(G)$
a linear layout, such that for every $i\in[|V|]$ it holds 
$|N(L(i,\varphi,G),R(i,\varphi,G))|\leq k$. Since for every pair of vertices in $G$
of the same directed neighbourhood the corresponding vertices in $\un(G)$ have the
same neighbourhood, it follows that for every $i\in[|V|]$  it holds 
$|N(L(i,\varphi,\un(G)),R(i,\varphi,\un(G)))|\leq k$. Thus, the 
neighbourhood-width of $\un(G)$ is at most $k$.

Let $G=(V,E)$ be a digraph and $\un(G)=(V,E_u)$ be the underlying undirected graph 
of neighbourhood-width $k$. Then there is 
a layout $\varphi\in \Phi(\un(G))$, such that for every $1\leq i \leq |V|$
the vertices in $L(i,\varphi,\un(G))$ can be divided into at most $k$ subsets 
$L_1,\ldots,L_k$, such that the vertices of set $L_j$, $1\leq j \leq k$, 
have the same neighbourhood with respect to the vertices in $R(i,\varphi,\un(G))$. 
One of these sets $L_j$ may consist of vertices having
no neighbors $v\in R(i,\varphi,\un(G))$.
Every of the remaining sets $L_j$ has at most  $\Delta(\un(G))$ vertices $u$ such that there
is an edge $\{v,u\}\in E_u$ with  $v\in R(i,\varphi,\un(G))$. Let 
$1\leq i \leq |V|$.
\begin{itemize}
\item
If there is one set $L_j$ which consists of vertices having
no neighbours $v\in R(i,\varphi,\un(G))$,
then there are at most $\Delta(\un(G)) \cdot (k-1)$ vertices 
$u \in L(i,\varphi,\un(G))$, such that there
is an edge $\{v,u\}\in E_u$ with  $v\in R(i,\varphi,\un(G))$.

\item 
Otherwise there are at most $\Delta(\un(G)) \cdot k$ vertices 
$u \in L(i,\varphi,\un(G))$, such that there
is an edge $\{v,u\}\in E_u$ with  $v\in R(i,\varphi,\un(G))$. 
\end{itemize}

Thus for every $1\leq i \leq |V|$ 
the vertices in $L(i,\varphi,G)$ can be divided into $k'\leq \Delta(\un(G)) \cdot k$ subsets 
$L'_1,\ldots,L'_{k'}$, such that the vertices of set $L'_j$, $1\leq j \leq k'$, 
have the same directed neighbourhood with respect to the vertices in $R(i,\varphi,G)$.
Thus the directed neighbourhood-width of $G$ is at most $\Delta(\un(G)) \cdot k$.

\item Let $G$ be a digraph of directed linear NLC-width $k$ and $X$ 
be a directed linear NLC-width $k$-expression for $G$. 
A linear NLC-width $k$-expression $c(X)$ for $\un(G)$ can 
recursively be defined as follows.

\begin{itemize}
\item Let $X=\bullet_t$ for $t\in[k]$. Then $c(X)=\bullet_t$.
\item Let $X=\circ_R(X')$ for $R:[k]\to[k]$.  Then $c(X)=\circ_R(c(X'))$.
\item Let $X=X' \otimes_{(\overrightarrow{S},\overleftarrow{S})} \bullet_t$ 
for $\overrightarrow{S}, \overleftarrow{S}\subseteq[k]^2$ and $t\in[k]$.
Then  $c(X)=c(X')\times_{\overrightarrow{S} \cup \overleftarrow{S}} \bullet_t$. 
\end{itemize}

The second bound follows by
$$
\begin{array}{lcl}
\dlnlcws(G) &\stackrel{\text{Lemma \ref{T3b}}}{\leq}&  \dnws(G) +1 \stackrel{\text{(\ref{th-u-d-w-nw}.)}}{\leq} \Delta(\un(G)) \cdot \nws(\un(G)) +1 \\
& \stackrel{\text{\cite{Gur06a}}}{\leq}&
\Delta(\un(G)) \cdot \lnlcws(\un(G))+1.
\end{array}
$$

\item Let $G$ be a digraph of directed linear clique-width $k$ and $X$ 
be a directed linear clique-width $k$-expression for $G$. 
A linear clique-width $k$-expression $c(X)$ for $\un(G)$ can 
recursively be defined as follows.

\begin{itemize}
\item Let $X=\bullet_t$ for $t\in[k]$. Then $c(X)=\bullet_t$.
\item Let $X=X' \oplus  \bullet_t$ for $t\in[k]$. Then $c(X)=c(X')\oplus  \bullet_t$. 
\item Let $X=\rho_{i\to j}(X')$ for $i,j\in [k]$. Then $c(X)=\rho_{i\to j}(c(X'))$.
\item Let $X=\alpha_{i,j}(X')$  for $i,j\in [k]$. Then $c(X)=\eta_{i,j}(c(X'))$.
\end{itemize}

The second bound follows by
$$
\begin{array}{lcl}
\dlcws(G) & \stackrel{\text{Lemma \ref{T3b}}}{\leq}& \dnws(G) +1  \stackrel{\text{(\ref{th-u-d-w-nw}.)}}{\leq} \Delta(\un(G)) \cdot \nws(\un(G)) +1 \\
& \stackrel{\text{\cite{Gur06a}}}{\leq}&
\Delta(\un(G)) \cdot \lcws(\un(G))+1.
\end{array}
$$

\item Let $G$ be a digraph of directed linear rank-width $k$ and $(T,f)$ 
be a directed linear rank-decomposition for $G$ of width $k$. Then $(T,f)$ is also 
a  linear rank-decomposition for $\un(G)$. Let $e$ be an edge of $T$. 
Let $N_{V_1}^{V_2}=(n_{ij})$ be the 
adjacent matrix defined over the two-element field
GF(2) for partition  $V_1\cup V_2$.
If
for $G$ two rows in $M_{A_e}^{B_e}$ are  linearly dependent then 
for $\un(G)$ these two rows in $N_{A_e}^{B_e}$ are also  linearly dependent.
Thus
we conclude that
$\rg^{(2)}(N_{A_e}^{B_e})\leq \rg^{(4)}(M_{A_e}^{B_e})$ and thus
linear rank-width of $\un(G)\leq k$.

The second bound follows by
$$
\begin{array}{lcl}
\dlrws(G) &\stackrel{\text{Lemma \ref{T3ac}}}{\leq}& \dnws(G)  \stackrel{\text{(\ref{th-u-d-w-nw}.)}}{\leq} \Delta(\un(G)) \cdot \nws(\un(G)) \\
& \stackrel{\text{Prop.~6.3 in \cite{OS06}}}{\leq}&
\Delta(\un(G)) \cdot 2^{\lrws(\un(G))+1}-1.
\end{array}
$$
\end{enumerate}
This completes the proof.
\end{proof}

\begin{remark}\label{remark-pw-u}
In Theorem \ref{th-u-d-w}(\ref{th-u-d-w-pw}) and (\ref{th-u-d-w-cutw})
the directed path-width of some digraph  
can not be used to give an upper bound on the path-width of $\un(G)$.
Any transitive tournament has directed path-width $0$ but 
its underlying undirected graph has a path-width which corresponds
to the number of vertices. Also by restricting the vertex degree
this is not possible by an acyclic orientation of a grid.
The same examples also show that the directed cut-width of some digraph  
can not be used to give an upper bound on the cut-width of $\un(G)$.
\end{remark}

The relations shown in Theorem \ref{th-u-d-w} allow to imply the
following values for the directed linear clique-width 
and directed neigh\-bour\-hood-width of a $k$-power graph of a path.

\begin{corollary}\label{power-exact}
\begin{enumerate}[(1.)]
\item For  $n\geq k(k+1)+2$ it holds $\dlcws((\overrightarrow{P_n})^k)= k+2$.
\item For  $n\geq k(k+1)+2$ it holds $\dnws((\overrightarrow{P_n})^k) = k+1$.
\end{enumerate}
\end{corollary}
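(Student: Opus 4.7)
The upper bounds $\dlcws((\overrightarrow{P_n})^k)\leq k+2$ and $\dnws((\overrightarrow{P_n})^k)\leq k+1$ are already recorded in Example~\ref{ex-d-lcw}(5) and Example~\ref{ex-nw}(4). My plan is therefore to supply matching lower bounds for $n\geq k(k+1)+2$. The bridge to the undirected world will be Theorem~\ref{th-u-d-w}(\ref{th-u-d-w-nw}) and (\ref{th-u-d-w-cw}): since $\un((\overrightarrow{P_n})^k)=(P_n)^k$, these parts yield $\nws((P_n)^k)\leq\dnws((\overrightarrow{P_n})^k)$ and $\lcws((P_n)^k)\leq\dlcws((\overrightarrow{P_n})^k)$, reducing the task to showing $\nws((P_n)^k)\geq k+1$ and $\lcws((P_n)^k)\geq k+2$.

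For $\nws((P_n)^k)\geq k+1$ I would fix an arbitrary layout $\varphi$ of $(P_n)^k$ and locate a cut $i$ at which $L(i,\varphi,(P_n)^k)$ contains $k+1$ vertices with pairwise distinct neighbourhoods in $R(i,\varphi,(P_n)^k)$. The guiding observation is that two path-vertices $v_a,v_b\in L(i)$ share their right-neighbourhood if and only if, for every index $m$ with $v_m\in R(i)$, the predicates $|m-a|\leq k$ and $|m-b|\leq k$ coincide. Hence, as the layout is scanned left to right, a vertex $v_j$ contributes a fresh neighbourhood class exactly during the time interval in which it has been placed in $L(i)$ while at least one of its $2k$ potential neighbours $v_{j\pm s}$ ($1\leq s\leq k$) still lies in $R(i)$. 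A pigeon-hole on disjoint windows of $k+1$ consecutive path-indices then forces some cut to carry $k+1$ simultaneously active classes; the threshold $n\geq k(k+1)+2$ is precisely what permits $k+1$ such windows to coexist.

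For the sharper $\lcws((P_n)^k)\geq k+2$ the generic inequality $\lcws\geq\nws$ leaves us one unit short, so an additional step is needed. I would examine a putative linear clique-width expression at the instant the critical $(k+1)$-th neighbourhood class is created: the $k+1$ pre-existing classes in $L(i)$ must carry pairwise distinct labels to stay separable under subsequent $\alpha_{a,b}$ insertions targeting individual classes, and the newly added vertex must carry a further fresh label so that the $\alpha$-operations incident with it affect only that vertex rather than all previous vertices sharing a label. This forces $k+2$ labels altogether.

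The hard part will be the combinatorial core of the first step: turning the informal ``active neighbour window'' picture into a rigorous pigeon-hole argument that rules out every clever reordering keeping all cuts at only $k$ classes. I expect the constant $k(k+1)+2$ to be tight, with an explicit layout at $n=k(k+1)+1$ witnessing that the bound fails by exactly one vertex below the threshold.
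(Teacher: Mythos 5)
Your reduction to the undirected setting is exactly the paper's: the upper bounds come from Examples~\ref{ex-d-lcw}(5.) and~\ref{ex-nw}(4.), and Theorem~\ref{th-u-d-w}(\ref{th-u-d-w-nw}.) and~(\ref{th-u-d-w-cw}.) together with $\un((\overrightarrow{P_n})^k)=(P_n)^k$ turn the problem into lower-bounding $\nws((P_n)^k)$ and $\lcws((P_n)^k)$. The divergence --- and the gap --- is in how those undirected lower bounds are obtained. The paper does not prove them at all: it invokes the result of Heggernes, Meister and Papadopoulos \cite{HMP09} that $\lcws((P_n)^k)=k+2$ for $n\geq k(k+1)+2$, gets the first claim immediately, and then derives the second via Lemma~\ref{T3b} as $\dnws \geq \dlcws - 1 \geq \lcws(\un(\cdot))-1 = k+1$. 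You instead propose to prove both undirected lower bounds from scratch, and what you offer is a plan rather than a proof: you say yourself that ``the hard part will be'' turning the active-window picture into a rigorous pigeonhole argument. That hard part is precisely the content of the cited paper \cite{HMP09} (a complete characterisation of the linear clique-width of path powers), so deferring it leaves the corollary unproved.

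The weakest link in your sketch is the jump from $\nws((P_n)^k)\geq k+1$ to $\lcws((P_n)^k)\geq k+2$. The generic relations only give $\nws \leq \lcws \leq \nws+1$, and deciding which of the two values is attained for a concrete graph class is exactly the delicate question; your argument that the $k+1$ existing classes ``must carry pairwise distinct labels'' and that the newly inserted vertex ``must carry a further fresh label'' is not forced in general --- a new vertex may legitimately reuse the label of a class it is about to join or that is about to become irrelevant, and ruling this out for every expression and every insertion order is where the real work lies. If you want a self-contained proof you must close that case analysis; otherwise the honest route is the paper's: cite \cite{HMP09} for $\lcws((P_n)^k)=k+2$ and push it through Theorem~\ref{th-u-d-w} and Lemma~\ref{T3b}.
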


\begin{proof}
For $n\geq k(k+1)+2$ we know from  \cite{HMP09} that the  
(undirected) linear clique-width of a $k$-power graph of a path
on $n$ vertices is exactly $k+2$. 
\begin{enumerate}[(1.)]
\item 
For  $n\geq k(k+1)+2$ by
$$
k+2 \stackrel{\text{\cite{HMP09}}}{=} \lcws(\un((\overrightarrow{P_n})^k))\stackrel{\text{Theorem \ref{th-u-d-w}}}{\leq} \dlcws((\overrightarrow{P_n})^k)\stackrel{\text{Example \ref{ex-dlcw}}}{\leq} k+2
$$
it holds $\dlcws((\overrightarrow{P_n})^k)= k+2$.

\item 
For  $n\geq k(k+1)+2$ by
$$
\begin{array}{lcl}
k+1 &\stackrel{\text{\cite{HMP09}}}{=} &\lcws(\un((\overrightarrow{P_n})^k))-1\stackrel{\text{Theorem \ref{th-u-d-w}}}{\leq} \dlcws((\overrightarrow{P_n})^k)-1 \\
&\stackrel{\text{Lemma \ref{T3b}}}{\leq} & \dnws((\overrightarrow{P_n})^k)\stackrel{\text{Example \ref{ex-nw}}}{\leq}  k+1
\end{array}
$$
it holds $\dnws((\overrightarrow{P_n})^k) = k+1$.
\end{enumerate}
This completes the proof.
\end{proof}

Comparing the undirected width of a graph $G$ and the directed width 
of its complete biorientation $\overleftrightarrow{G}$ the following
results hold.

\begin{theorem}\label{th-bio} 
For each width measure $\beta\in\{\pws,\cutws,\nws,\lnlcws,\lcws,\lrws\}$ and every
undirected graph $G$ it holds $\beta(G)=d\mbox{-}\beta(\overleftrightarrow{G})$.
\end{theorem}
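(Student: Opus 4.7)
The plan is to prove each of the six equalities by proving the two inequalities separately. For every one of the measures, one direction already follows directly from Theorem \ref{th-u-d-w} applied to $H=\overleftrightarrow{G}$, using $\un(\overleftrightarrow{G})=G$: parts (\ref{th-u-d-w-pw}) and (\ref{th-u-d-w-cutw}) give $\dpws(\overleftrightarrow{G})\le\pws(G)$ and $\dcutws(\overleftrightarrow{G})\le\cutws(G)$, while the left-hand inequalities in parts (\ref{th-u-d-w-nw})--(\ref{th-u-d-w-rw}) give $\nws(G)\le\dnws(\overleftrightarrow{G})$, $\lnlcws(G)\le\dlnlcws(\overleftrightarrow{G})$, $\lcws(G)\le\dlcws(\overleftrightarrow{G})$, and $\lrws(G)\le\dlrws(\overleftrightarrow{G})$. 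So it remains to prove the opposite inequality in each case, and for this I would systematically transfer an optimal witness in one world to an equally good witness in the other.

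For $\pws(G)\le\dpws(\overleftrightarrow{G})$ I would take any directed path-decomposition $(X_1,\ldots,X_r)$ of $\overleftrightarrow{G}$ and show that it is already an undirected path-decomposition of $G$. The only condition to verify is that every edge $\{u,v\}$ of $G$ is contained in a common bag. Because $\overleftrightarrow{G}$ contains both arcs $(u,v)$ and $(v,u)$, condition (\ref{def-dpw-2}) of Definition \ref{def-dpw} yields indices $i\le j$ with $u\in X_i,v\in X_j$ and indices $i'\le j'$ with $v\in X_{i'},u\in X_{j'}$; combined with the interval condition (3), this forces the occurrence interval of $u$ and that of $v$ to overlap, producing a bag containing both. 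For $\cutws(G)\le\dcutws(\overleftrightarrow{G})$ I would take an optimal directed layout of $\overleftrightarrow{G}$ and observe that every undirected edge of $G$ that crosses a cut contributes exactly one forward arc to $\overleftrightarrow{G}$ at that cut, so the directed and undirected cut sizes agree layout-by-layout.

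For the remaining four measures I would lift an optimal undirected witness for $G$ to a directed witness for $\overleftrightarrow{G}$ of the same width. For neighbourhood-width I keep the same layout $\varphi$: in $\overleftrightarrow{G}$ we have $N^+_W(u)=N^-_W(u)=N_W(u)$, so two vertices share a directed neighbourhood into $W$ iff they share the undirected one, and $\dnws(\varphi,\overleftrightarrow{G})=\nws(\varphi,G)$. For linear NLC-width I rewrite a linear NLC-width $k$-expression of $G$ by replacing every edge-inserting operation $\times_S$ with $\otimes_{(S,S)}$, so that both arcs of each created edge are introduced, yielding a directed linear NLC-width $k$-expression of $\overleftrightarrow{G}$. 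For linear clique-width I replace every $\eta_{a,b}$ by $\alpha_{a,b}\alpha_{b,a}$, which never increases the label count. For linear rank-width I reuse the same caterpillar-bijection $(T,f)$: for every edge $e$ of $T$ the GF(4)-matrix $M_{A_e}^{B_e}$ of $\overleftrightarrow{G}$ has entries only in $\{0,1\}$ because every arc appears together with its reverse.

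The only nontrivial step is the rank-width case, where I must check that the GF(4)-rank of a $0/1$-matrix equals its GF(2)-rank, so that the widths of $(T,f)$ computed for $\overleftrightarrow{G}$ and for $G$ coincide. The key observation is that $\{1,\mathbb{a}\}$ is a GF(2)-basis of GF(4); writing any GF(4)-coefficient uniquely as $a+\mathbb{a}b$ with $a,b\in\mathrm{GF}(2)$, a vanishing combination $\sum_i(a_i+\mathbb{a}b_i)r_i=0$ of $0/1$-rows splits, over GF(2), into $\sum_i a_ir_i=0$ and $\sum_i b_ir_i=0$, forcing every $a_i$ and $b_i$ and hence every GF(4)-coefficient to vanish. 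This yields $\rg^{(4)}(M_{A_e}^{B_e})=\rg^{(2)}(N_{A_e}^{B_e})$ and completes the proof.
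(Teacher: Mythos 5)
Your proposal is correct and follows essentially the same route as the paper: one inequality in each case comes from Theorem \ref{th-u-d-w} applied with $\un(\overleftrightarrow{G})=G$, and the reverse inequality is obtained by transferring an optimal witness (decomposition, layout, or expression) between $G$ and $\overleftrightarrow{G}$, including the same $\times_S\mapsto\otimes_{(S,S)}$ and $\eta_{a,b}\mapsto\alpha_{b,a}\alpha_{a,b}$ translations used in the paper. Your handling of the rank-width case is in fact slightly more explicit than the paper's, which merely notes that the GF(2) and GF(4) adjacency matrices coincide for bioriented graphs without spelling out why the two ranks agree.
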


%\begin{theorem}\label{th-bio} 
%Let $G$ be an undirected graph and $\overleftrightarrow{G}$ its
%complete biorientation. 
%\begin{enumerate}[(1.)]
%
%\item
%$\pws(G)=\dpws(\overleftrightarrow{G})$        
%
%\item 
%$\cutws(G)=\dcutws(\overleftrightarrow{G})$    
%
%\item
%$\lnlcws(G)=\dlnlcws(\overleftrightarrow{G})$
%
%\item
%$\lcws(G)=\dlcws(\overleftrightarrow{G})$
%
%\item
%$\nws(G)=\dnws(\overleftrightarrow{G})$
%
%\item
%$\lrws(G)=\dlrws(\overleftrightarrow{G})$    
%\end{enumerate}
%\end{theorem}

\begin{proof}
\begin{enumerate}[(1.)]
\item  Since $G$ is the underlying undirected graph
of $\overleftrightarrow{G}$, by Theorem \ref{th-u-d-w}(\ref{th-u-d-w-pw}.) it
remains to show that the path-width of $G$ is at most the 
directed path-width of $\overleftrightarrow{G}$. Let $(X_1,\ldots X_r)$
be a directed path-decomposition for  $\overleftrightarrow{G}=(V,E)$.
For every $(u,v)\in E$ it holds $u\in X_i$ and $v\in X_j$ for $i\leq j$.
If $i<j$ then since in $\overleftrightarrow{G}$ there is also the arc $(v,u)$
we obtain a contradiction. Thus it holds $i=j$ which implies that the given
path-decomposition is also a path-decomposition for $G$.

%The result was shown in Lemma 1 of \cite{Bar06}.

\item By Theorem \ref{th-u-d-w}(\ref{th-u-d-w-cutw}.) it
remains to show that the cut-width of $G$ is at most the 
directed cut-width of $\overleftrightarrow{G}$. 
Let $G=(V,E)$ be a graph and $\overleftrightarrow{G}$  its
complete biorientation of directed cut-width $k$. Let $\varphi$ be the corresponding ordering of the
vertices, such that for every $i$, $1\leq i \leq |V|$ there are at
most $k$ arcs $(u,v)$ such that $u\in L(i,\varphi,\un(G))$ and $v\in R(i,\varphi,\un(G))$.
Since every such arc corresponds to one undirected edge $\{u,v\}$ in $G$, 
the same layout shows that the cut-width of $G$ is at most $k$.

\item
By Theorem \ref{th-u-d-w}(\ref{th-u-d-w-nw}.) it
remains to show that the directed neighbourhood-width of $\overleftrightarrow{G}$
is at most the neighbourhood-width of $G$. Let  $\varphi\in\Phi(G)$
a linear layout, such that for every $i\in[|V|]$ for the number of 
neighbourhoods it holds 
$|N(L(i,\varphi,G),R(i,\varphi,G))|\leq k$. By the definitions of $\overleftrightarrow{G}$
and for neighbourhoods of directed graphs, it follows that 
for every $i\in[|V|]$ for the number of directed
neighbourhoods it holds 
$|N(L(i,\varphi,\overleftrightarrow{G}),R(i,\varphi,\overleftrightarrow{G}))|\leq k$.

\item By Theorem \ref{th-u-d-w}(\ref{th-u-d-w-nlc}.) it
remains to show that the directed linear NLC-width of $\overleftrightarrow{G}$
is at most the linear NLC-width of $G$. Let $X$ be an NLC-width $k$-expression
for $G$. A directed NLC-width $k$-expression $c(X)$ for $\overleftrightarrow{G}$
can recursively be defined as follows.
\begin{itemize}
\item Let $X=\bullet_t$ for $t\in[k]$. Then $c(X)=\bullet_t$.
\item Let $X=\circ_R(X')$ for $R:[k]\to[k]$.  Then $c(X)=\circ_R(c(X'))$.
\item Let $X=X' \times_S X''$ for $S\subseteq[k]^2$.
Then  $c(X)=c(X')\otimes_{(S,S)}c(X'')$. 
\end{itemize}

\item
By Theorem \ref{th-u-d-w}(\ref{th-u-d-w-cw}.) it
remains to show that the directed linear clique-width of $\overleftrightarrow{G}$
is at most the linear clique-width of $G$. Let $X$ be a clique-width $k$-expression
for $G$. A directed clique-width $k$-expression $c(X)$ for $\overleftrightarrow{G}$
can recursively be defined as follows.
\begin{itemize}
\item Let $X=\bullet_t$ for $t\in[k]$. Then $c(X)=\bullet_t$.
\item Let $X=X' \oplus X''$. Then  $c(X)=c(X')\oplus c(X'')$. 
\item Let $X=\rho_{i\to j}(X')$ for $i,j\in [k]$.  Then $c(X)=\rho_{i\to j}(c(X'))$.
\item Let $X=\eta_{i,j}(X')$ for $i,j\in [k]$.
Then $c(X)=\alpha_{j,i}(\alpha_{i,j}(c(X')))$.
\end{itemize}

\item
By Theorem \ref{th-u-d-w}(\ref{th-u-d-w-rw}.) it
remains to show that the directed linear rank-width of $\overleftrightarrow{G}$
is at most the linear rank-width of $G$. 
Let $(T,f)$ be a linear rank-decomposition of width $k$ for $G$. 
Then $(T,f)$ is also a  linear rank-decomposition for $\overleftrightarrow{G}$. 
Let $N_{V_1}^{V_2}=(n_{ij})$ be the 
adjacent matrix defined over the two-element field
GF(2) for partition  $V_1\cup V_2$.
Since for every bioriented graph  $N_{V_1}^{V_2}=M_{V_1}^{V_2}$
we conclude that the 
directed linear rank-width of $\overleftrightarrow{G}$ is at most $k$.
\end{enumerate} \nopagebreak
This completes the proof.
\end{proof}

It is already known that recognizing path-width (\cite{ACP87}),  cut-width (\cite{Gav77}),  
linear NLC-width (\cite{Gur06a}), 
linear clique-width (\cite{FRRS09}),  neighbourhood-width (\cite{Gur06a}), and linear
rank-width  (by \cite{Oum17} due \cite{Kas08} and \cite{Oum05a}) are NP-hard.
The results of Theorem \ref{th-bio} imply the same for the directed
versions.

\begin{corollary}\label{np}
Given a digraph $G$ and an integer $k$, then for every 
width measure $\beta\in\{\dpws,\dcutws,\dnws,\dlnlcws,\dlcws,\dlrws\}$, 
the problem to decide whether $\beta(G)\leq k$ is NP-complete.
\end{corollary}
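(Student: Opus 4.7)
The plan is to establish NP-hardness and NP-membership separately, obtaining NP-completeness as their conjunction.

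For NP-hardness I would use Theorem~\ref{th-bio} as a polynomial-time many-one reduction from the corresponding undirected decision problems, all of which are NP-hard by the references cited immediately before the corollary. Given an undirected instance $(H,k)$ for $\beta \in \{\pws, \cutws, \nws, \lnlcws, \lcws, \lrws\}$, the reduction outputs $(\overleftrightarrow{H}, k)$, where the complete biorientation $\overleftrightarrow{H}$ is computed in polynomial time by replacing every edge $\{u,v\}$ with both arcs $(u,v)$ and $(v,u)$. Theorem~\ref{th-bio} guarantees $\beta(H) \leq k$ if and only if $d\mbox{-}\beta(\overleftrightarrow{H}) \leq k$, so the reduction is correct, and NP-hardness of each directed decision problem follows.

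For NP-membership I would exhibit a polynomial-size certificate together with a polynomial-time verifier for each of the six directed parameters. A linear vertex layout $\varphi \in \Phi(G)$ serves as certificate for $\dpws(G) \leq k$, $\dcutws(G) \leq k$, and $\dnws(G) \leq k$; at each of the $|V|$ cut positions the relevant quantity can be read directly from the definitions of Section~\ref{sec-pre-para}. For $\dlnlcws(G) \leq k$ and $\dlcws(G) \leq k$ a directed linear $k$-expression is the certificate; its underlying expression tree is a caterpillar with $|V|$ leaves, and a standard normal-form argument that collapses consecutive relabeling operations keeps the total length polynomial in $|V|$ and $k$, so evaluating it to reconstruct $G$ and checking that only $k$ labels occur is polynomial. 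For $\dlrws(G) \leq k$ a directed linear rank-decomposition $(T,f)$ is the certificate, and the GF(4)-rank of each of the $|V|-1$ cut matrices is computable by Gaussian elimination in polynomial time.

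The only slightly nontrivial point is bounding the size of an optimal directed linear NLC-width or clique-width expression polynomially; this is inherited from the undirected setting by the same normalization that removes redundant operations between consecutive vertex introductions. Once both parts are in place, NP-completeness of all six decision problems follows immediately.
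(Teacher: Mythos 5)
Your proposal matches the paper's argument: the paper derives NP-hardness of each directed parameter exactly as you do, by using Theorem~\ref{th-bio} so that the complete biorientation $\overleftrightarrow{H}$ of an undirected instance $H$ gives a polynomial reduction from the known NP-hard undirected problems. The NP-membership part, which the paper leaves implicit, is handled correctly by your certificate arguments, so the proposal is complete and consistent with the paper's proof.
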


\section{Comparing linear width parameters}

In order to classify graph parameters we use the following notations.
Let ${\mathcal G}$ be the set of all finite directed graphs and
$\alpha,\beta: {\mathcal G}\mapsto \IN$ be two graph parameters.
Parameters $\alpha$ and $\beta$ are called {\em equivalent}, if
there is a function $f_1: \IN\mapsto \IN$ such that for every  $G\in {\mathcal G}$
it holds $\alpha(G)\leq f_1(\beta(G))$ and there is a function $f_2: \IN\mapsto \IN$ such that for every  $G\in {\mathcal G}$
it holds $\beta(G)\leq f_2(\alpha(G))$. 
Parameters $\alpha$ and $\beta$ are called {\em polynomially equivalent}, if
they are equivalent and both functions $f_1$ and $f_2$ are polynomials.
Parameters $\alpha$ and $\beta$ are called {\em linearly equivalent}, if
they are equivalent and both functions $f_1$ and $f_2$ are linear.

%%%%%%%%%%%%%%%%%%%%%%%%%%%%%%%%%%%%%%%%%%%%%%%%%%%%%%%%%%%%%%%%%%%%%%%%%%
\subsection{Relations between linear NLC-width, linear clique-width, neigh\-bour\-hood-width, and linear rank-width}\label{sec-bounds1}
%%%%%%%%%%%%%%%%%%%%%%%%%%%%%%%%%%%%%%%%%%%%%%%%%%%%%%%%%%%%%%%%%%%%%%%%%%

First we state the relation between  the directed linear NLC-width and  
directed linear clique-width. The proofs can be done in the
same way as for the undirected versions  in \cite{GW05a}.

\begin{lemma} \label{T3a}
Let $G$ be a digraph, then it holds
$$
\dlnlcws(G)\leq  \dlcws(G) \leq \dlnlcws(G)+1.
$$
\end{lemma}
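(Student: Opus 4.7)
The plan is to prove both inequalities by explicit translation between directed linear clique-width expressions and directed linear NLC-width expressions, mirroring the argument of Gurski and Wanke~\cite{GW05a} for the undirected versions while keeping track of forward and backward arcs separately.

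For the lower bound $\dlnlcws(G)\leq\dlcws(G)$, I would start from a directed linear clique-width $k$-expression $X$ for $G$. The vertices are inserted in the order prescribed by $X$, and between any two consecutive $\oplus\bullet_a$ operations sits a block of $\rho$ and $\alpha$ operations. The first step is a normal-form rewriting: push every $\alpha_{b,c}$ back to the earliest block in which it can legitimately be applied, so that every remaining $\alpha$ operation only creates arcs having the just-inserted vertex as an endpoint. This can be done without increasing the label budget because the relabeling history of the two endpoints of any arc identifies a unique later block at which the operation can be placed without altering $\val(X)$. Once $X$ is in normal form, each block following the insertion of a new vertex with intended label $a$ directly yields two relations $\overrightarrow{S},\overleftarrow{S}\subseteq[k]^2$: every $\alpha_{b,a}$ contributes $(b,a)$ to $\overrightarrow{S}$ and every $\alpha_{a,b}$ contributes $(a,b)$ to $\overleftarrow{S}$. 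The block is then replaced by $\otimes_{(\overrightarrow{S},\overleftarrow{S})}\bullet_a$, and any $\rho_{i\to j}$ that persists between blocks becomes a $\circ_R$ where $R$ sends $i$ to $j$ and fixes every other label.

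For the upper bound $\dlcws(G)\leq\dlnlcws(G)+1$, I would reserve one extra label, say $k+1$, as a temporary slot for each freshly inserted vertex before its arcs are placed. A leaf $\bullet_a$ is mapped to itself. The operation $X\otimes_{(\overrightarrow{S},\overleftarrow{S})}\bullet_a$ is simulated by first performing $c(X)\oplus\bullet_{k+1}$, then, for every $(b,a)\in\overrightarrow{S}$, applying $\alpha_{b,k+1}$ and, for every $(a,b)\in\overleftarrow{S}$, applying $\alpha_{k+1,b}$. Since the newly added vertex is the unique holder of label $k+1$, these operations insert exactly the intended forward and backward arcs, without touching existing vertices that already carry label $a$. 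A final $\rho_{k+1\to a}$ gives the new vertex its proper label. A relabeling $\circ_R$ is simulated by a sequence of $\rho_{i\to j}$ operations, using the free slot $k+1$ as scratch space so that any label permutation can be effected without losing information when $R$ is not a simple merge.

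The main obstacle, as in the undirected case, is the normalization step for the first inequality: one must verify that pushing $\alpha$-operations to the earliest legal block neither changes $\val(X)$ nor increases the label count, which requires tracking, for every arc eventually created, the joint relabeling history of its endpoints and identifying one canonical block at which it is inserted. The passage to the directed setting only doubles the bookkeeping because forward and backward arcs live in the independent relations $\overrightarrow{S}$ and $\overleftarrow{S}$ and are handled entirely symmetrically, so the argument of~\cite{GW05a} carries over essentially verbatim.
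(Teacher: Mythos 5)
Your argument is correct and coincides with what the paper intends: the paper gives no explicit proof of Lemma~\ref{T3a} and simply notes that it can be done as for the undirected case in \cite{GW05a}, and your write-up is exactly that translation argument adapted to digraphs, with the forward/backward arc bookkeeping via $\overrightarrow{S}$ and $\overleftarrow{S}$ and the single spare label handled correctly. No gaps worth flagging; the only loose phrasing is that a single $\alpha_{b,c}$ operation must in general be split across several earlier blocks (one per later endpoint) rather than moved to one block, which your per-arc formulation already accommodates.
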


Further there is also a very tight connection between the directed neigh\-bour\-hood-width, 
directed
linear NLC-width, and directed linear clique-width. 
The proofs of the following
bounds can be done in a similar fashion as for the undirected versions in \cite{Gur06a}.

\begin{lemma}\label{T3b}
Let $G$ be a digraph, then it holds
$$
\dnws(G)\leq \dlnlcws(G)\leq  \dnws(G)+1
$$
and
$$
\dnws(G)\leq \dlcws(G)\leq  \dnws(G)+1.
$$
\end{lemma}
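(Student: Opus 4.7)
The plan is to adapt the proofs of \cite{Gur06a} from the undirected to the directed setting by replacing undirected neighbourhoods with directed ones $(N^+_W(u), N^-_W(u))$; all four inequalities follow the same two-sided template.

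For the lower bounds $\dnws(G) \leq \dlnlcws(G)$ and $\dnws(G) \leq \dlcws(G)$, I would begin with a directed linear NLC-width $k$-expression (respectively directed linear clique-width $k$-expression) $X$ evaluating to $G$, and let $\varphi$ be the layout induced by the order in which vertices are introduced. The key invariant, proved by induction on the construction of $X$, is that any two vertices currently sharing a label have identical directed neighbourhoods into the set of vertices not yet introduced. This invariant is preserved by every operation: $\otimes_{(\overrightarrow{S},\overleftarrow{S})}$ and $\oplus$ followed by $\alpha_{a,b}$ attach arcs uniformly to all vertices carrying a given label, and $\circ_R$ as well as $\rho_{a\to b}$ act simultaneously on all vertices with the same label. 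Hence at step $i$ the number of distinct entries in $N(L(i,\varphi,G),R(i,\varphi,G))$ is bounded by the number of labels in use, which is at most $k$, giving $\dnws(\varphi,G) \leq k$.

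For the upper bounds $\dlnlcws(G) \leq \dnws(G)+1$ and $\dlcws(G) \leq \dnws(G)+1$, I would fix a layout $\varphi$ witnessing $\dnws(G) = k$ and construct a $(k{+}1)$-expression that inserts vertices in the order $\varphi^{-1}(1), \dots, \varphi^{-1}(n)$ while maintaining the invariant that after step $i$, the vertices of $L(i,\varphi,G)$ carry labels from $\{1,\dots,k\}$ that coincide with the equivalence classes of the directed-neighbourhood relation into $R(i,\varphi,G)$. The $(k{+}1)$-th label is reserved as a ``fresh'' label for the vertex just introduced. In the linear NLC case, a single $\otimes_{(\overrightarrow{S},\overleftarrow{S})} \bullet_{k+1}$ suffices, with $\overrightarrow{S}$ containing exactly those pairs $(a,k{+}1)$ such that the $a$-class has an arc to the new vertex, and analogously for $\overleftarrow{S}$; then a single $\circ_R$ collapses the labels to a refreshed set of size at most $k$ according to the new neighbourhood classes in $R(i{+}1,\varphi,G)$. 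In the linear clique-width case, the insertion step is decomposed as $\oplus \bullet_{k+1}$ followed by one $\alpha_{a,k+1}$ or $\alpha_{k+1,a}$ per class requiring an arc, concluded by a sequence of $\rho_{a\to b}$ operations that implements the same relabeling.

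The main obstacle will be verifying that the proposed relabeling after each insertion is well-defined as a function on labels. This reduces to the observation that if two vertices share a label in the invariant for step $i$, then they have identical directed neighbourhoods in $R(i,\varphi,G)$ and hence also in the smaller set $R(i{+}1,\varphi,G) = R(i,\varphi,G) \setminus \{\varphi^{-1}(i{+}1)\}$, so they lie in a single class of the new partition; therefore each old label (and the fresh label $k{+}1$, assigned to $\varphi^{-1}(i{+}1)$) can be sent to a unique new label. Since the new partition has at most $k$ classes, the expression uses at most $k+1$ labels in total, establishing both upper bounds.
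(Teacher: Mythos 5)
Your proposal is correct and follows exactly the route the paper intends: the paper omits the proof of Lemma \ref{T3b}, stating only that it can be done as for the undirected bounds in \cite{Gur06a}, and your argument is the faithful directed adaptation of that proof (same-label vertices have equal directed neighbourhoods into the not-yet-introduced vertices for the lower bounds; one fresh label plus neighbourhood-class labels for the upper bounds). The well-definedness observation you single out, that equal neighbourhoods into $R(i,\varphi,G)$ imply equal neighbourhoods into $R(i+1,\varphi,G)$, is indeed the only point needing care, and you handle it correctly.
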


By the examples given in Section \ref{sec-pre-para} and 
simple observations, we conclude that 
every path $\overrightarrow{P_n}$, $n\ge 3$, has directed linear clique-width 3, 
paths $\overrightarrow{P_3}$ and $\overrightarrow{P_4}$ have 
directed linear NLC-width 2, 
every path $\overrightarrow{P_n}$, $n\ge 5$, has directed linear NLC-width 3, 
and every path $\overrightarrow{P_n}$, $n\ge 3$, has directed neighbourhood-width 2,  
which implies that the bounds of Lemma \ref{T3a} and Lemma \ref{T3b} 
can not be improved.

\begin{lemma} \label{T3ac}
Let $G$ be a digraph, then it holds
$$
\dlrws(G)\leq \dnws(G).
$$
\end{lemma}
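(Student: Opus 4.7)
The plan is to build a directed linear rank decomposition whose width matches the directed neighbourhood-width, by reusing an optimal layout $\varphi$ as the leaf order of a caterpillar.

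First I would fix a layout $\varphi \in \Phi(G)$ achieving $\dnws(\varphi, G) = k := \dnws(G)$, and write $v_i = \varphi^{-1}(i)$ for $i = 1, \ldots, n$. I would take $T$ to be the caterpillar with spine $s_1, \ldots, s_n$ and a pendant leaf $\ell_i$ attached to each $s_i$, and set $f(v_i) = \ell_i$. Under this choice, for the spine edge $e_i = \{s_i, s_{i+1}\}$ the bipartition $(A_{e_i}, B_{e_i})$ that $T$ induces on $V$ is exactly $(L(i,\varphi,G), R(i,\varphi,G))$, while each pendant edge $\{s_i, \ell_i\}$ cuts $v_i$ off from $V \setminus \{v_i\}$.

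The main step is to show that $\rg^{(4)}(M_{A_e}^{B_e}) \leq k$ for every edge $e$. I would use the following observation: for any bipartition $(A,B)$ of $V$, the row of $M_A^B$ indexed by $u \in A$ is completely determined by the pair $N_B(u) = (N^+_B(u), N^-_B(u))$, because by its very definition the matrix entry $m_{uw}$ (one of the four symbols $0,\mathbb{a},\mathbb{a}^2,1$) encodes precisely whether $(u,w) \in E$ and whether $(w,u) \in E$. Consequently two rows of $M_A^B$ coincide if and only if the corresponding vertices have the same directed neighbourhood into $B$, so the number of distinct rows equals $|N(A,B)|$. Since duplicate rows are linearly dependent, the GF(4)-rank is bounded above by the number of distinct rows, giving
\[
\rg^{(4)}(M_A^B) \;\leq\; |N(A,B)|.
\]

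Applied to each spine edge $e_i$, this yields $\rg^{(4)}(M_{A_{e_i}}^{B_{e_i}}) \leq |N(L(i,\varphi,G), R(i,\varphi,G))| \leq k$; for the pendant edges the matrix has a single row, so its rank is at most $1 \leq k$ (any non-empty digraph satisfies $k \geq 1$, while the empty digraph is trivial). Hence $(T,f)$ is a directed linear rank decomposition of width at most $k$, which proves $\dlrws(G) \leq \dnws(G)$. The only point that requires real care is the bijection between equal rows of $M_A^B$ and equal directed neighbourhoods; once that is nailed down via the four-symbol encoding, the translation from the linear layout to the caterpillar is purely structural.
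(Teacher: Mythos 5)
Your proof is correct and follows essentially the same route as the paper: take an optimal layout $\varphi$, use it as the leaf order of a caterpillar, and bound the GF(4)-rank of each cut matrix by the number of distinct rows, which equals the number of distinct directed neighbourhoods and hence is at most $\dnws(G)$. You are in fact slightly more careful than the paper in treating the pendant edges (whose cuts are not of the form $(L(i,\varphi,G),R(i,\varphi,G))$) separately, but this is a minor refinement of the same argument.
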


\begin{proof}
Let $G$ be a digraph with $n$ vertices of 
directed neighbourhood-width $k$ and $\varphi:V\to [n]$ be a layout such that 
$\dnws(\varphi,G)\leq k$. Using $\varphi$ we define a caterpillar $T_{\varphi}$
with consecutive pendant vertices $\varphi^{-1}(1),\ldots, \varphi^{-1}(n)$.
Pair $(T_{\varphi},\varphi)$ leads to a directed linear rank decomposition for $G$. 
We want to determine the width of $(T_{\varphi},\varphi)$.
Since for every $i$  the vertices in $L(i,\varphi,G)$ define at most 
$k$ neighbourhoods with respect to set $R(i,\varphi,G)$, every edge of
$T_{\varphi}$ leads to a partition of $V$ into $L(i,\varphi,G)$ and $R(i,\varphi,G)$
for some $i$ such that $M_{L(i,\varphi,G)}^{R(i,\varphi,G)}$ has at most $k$ different
rows and thus $\rg(M_{L(i,\varphi,G)}^{R(i,\varphi,G)})\leq k$.
\end{proof}

The following bound can be shown similar to the proof for 
clique-width and rank-width in \cite[Proposition 6.3]{OS06}.

\begin{lemma}\label{L2}
For every digraph $G$ it holds
$$
\dlcws(G)\leq 4^{\dlrws(G)+1} -1.
$$
\end{lemma}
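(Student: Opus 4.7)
The plan is to adapt the classical argument of \cite[Proposition 6.3]{OS06}, which bounds undirected (linear) clique-width in terms of undirected (linear) rank-width, to the directed setting by replacing the two-element field GF(2) with the four-element field GF(4) used to define $\dlrws$. Let $(T,f)$ be an optimal directed linear rank-decomposition of $G$ of width $k=\dlrws(G)$, and let $v_1,v_2,\ldots,v_n$ be the ordering of $V$ read off along the spine of the caterpillar $T$. For each $i$ set $L_i=\{v_1,\ldots,v_i\}$ and $R_i=V\setminus L_i$; from $\rg^{(4)}(M_{L_i}^{R_i})\le k$ the row space of $M_{L_i}^{R_i}$ in $\mathrm{GF}(4)^{|R_i|}$ has at most $4^k$ elements, so the vertices of $L_i$ partition into at most $4^k$ classes according to their rows in $M_{L_i}^{R_i}$.

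First, I would build a directed linear clique-width expression for $G$ by processing $v_1,v_2,\ldots,v_n$ in this order while maintaining the invariant that each vertex $u\in L_i$ already introduced carries a label which is in bijection with the row of $u$ in $M_{L_i}^{R_i}$. The step inserting $v_{i+1}$ then has three phases: (a) add $v_{i+1}$ by $\oplus\,\bullet_a$ with a fresh label $a$; (b) for every label $b$ currently in use, read off from its associated row the value at coordinate $v_{i+1}$, which lies in $\{0,\mathbb{a},\mathbb{a}^2,1\}$ and thereby determines exactly which of $\alpha_{b,a}$, $\alpha_{a,b}$, both, or neither to apply (this is correct because two vertices with the same label share the entire row, hence in particular share the entry at coordinate $v_{i+1}$); (c) apply a sequence of $\rho_{a\to b}$ operations to merge any labels whose row-types become equal once the coordinate $v_{i+1}$ is dropped, thereby restoring the invariant with respect to $M_{L_{i+1}}^{R_{i+1}}$.

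Next, I would bound the number of labels simultaneously in use. Outside the transition step at most $4^k$ labels suffice (one per element of the current row space). Inside a transition step, before the merges in (c) we need at most one extra label for the incoming vertex $v_{i+1}$, and each of the $\le 4^k$ existing classes has to be temporarily split into up to $4$ sub-classes corresponding to the possible GF(4)-values of the row at the fresh coordinate $v_{i+1}$, giving the factor $4$. A careful bookkeeping—exactly as in \cite[Proposition 6.3]{OS06}, but with the field of size $2$ replaced by the field of size $4$—shows that the "all-zero" row-type need never be split (vertices in it remain equivalent under any refinement and have no interaction with $v_{i+1}$), which saves one label and yields the total bound of $4\cdot 4^k-1=4^{k+1}-1$.

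The main obstacle will be the careful label accounting during the transition phase: one must check that the four-way refinement of each old class needed for the $\alpha$-operations in (b) and the subsequent merges in (c) can indeed be realised with $4^{k+1}-1$ labels rather than $4^{k+1}$, by exploiting the redundancy of the zero row. This is a faithful translation of the GF(2)-counting of Oum and Seymour to GF(4), relying only on the fact that the four possible arc-patterns between an ordered pair of digraph vertices (none, forward only, backward only, both) are encoded by the four elements of GF(4) in precisely the way required for the matrix $M_{V_1}^{V_2}$ to capture the full directed adjacency.
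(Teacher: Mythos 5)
Your proposal is correct and follows essentially the same route the paper intends: the paper gives no details itself and merely defers to the GF(2) argument of Oum and Seymour (Proposition 6.3 in \cite{OS06}), which you faithfully transfer to GF(4) along the spine of the caterpillar. One minor remark: the four-way splitting in your final label count is not actually needed---as you yourself observe in phase (b), each class is determined by the \emph{full} row of $M_{L_i}^{R_i}$, which already includes the coordinate $v_{i+1}$, so the construction uses at most $4^k+1$ labels; this is more than covered by the claimed bound $4^{k+1}-1$, so the slight inconsistency in the accounting is harmless.
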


%
% Stimmt das mit 4??
%

The shown bounds imply the following theorem.

\begin{theorem}\label{th-p-w}
Any two parameters in $\{\dnws,\dlnlcws,\dlcws,\dlrws\}$ are equivalent.
\end{theorem}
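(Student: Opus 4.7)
The plan is to assemble the pairwise equivalences directly from the four lemmas already established in Section \ref{sec-bounds1}, namely Lemma \ref{T3a} (relating $\dlnlcws$ and $\dlcws$), Lemma \ref{T3b} (relating $\dnws$ with both $\dlnlcws$ and $\dlcws$), Lemma \ref{T3ac} ($\dlrws \leq \dnws$), and Lemma \ref{L2} ($\dlcws \leq 4^{\dlrws+1}-1$). Since there are only four parameters, I would simply verify each of the six pairs admits mutual bounds, as follows.

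The three parameters $\dnws$, $\dlnlcws$, $\dlcws$ are pairwise linearly (in fact additively) equivalent directly from Lemmas \ref{T3a} and \ref{T3b}: for any digraph $G$ we have
\[
\dnws(G) \leq \dlnlcws(G) \leq \dnws(G)+1,\qquad \dnws(G) \leq \dlcws(G) \leq \dnws(G)+1,
\]
and $\dlnlcws(G) \leq \dlcws(G) \leq \dlnlcws(G)+1$. This settles three of the six pairs. For the three pairs involving $\dlrws$, I would combine Lemma \ref{T3ac} with Lemma \ref{L2} and the chains above to get
\[
\dlrws(G) \leq \dnws(G) \leq \dlcws(G) \leq 4^{\dlrws(G)+1}-1,
\]
and likewise $\dlrws(G) \leq \dnws(G) \leq \dlnlcws(G)+0 \leq \dlcws(G) \leq 4^{\dlrws(G)+1}-1$, so every parameter in the set is bounded below by $\dlrws$ and above by an exponential function of $\dlrws$. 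Hence all six pairs of parameters are equivalent in the sense defined in Section \ref{sec-d-u-w}.

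There is no real obstacle: the statement is a routine consequence of the already-proven lemmas. The only caveat worth flagging is that while the three equivalences among $\dnws,\dlnlcws,\dlcws$ are linear (even additive), the bounds involving $\dlrws$ obtained via Lemma \ref{L2} are only exponential, so one cannot upgrade the conclusion to \emph{polynomial} or \emph{linear} equivalence for the pairs containing $\dlrws$; this matches the remark in the introduction that for directed rank-width the stronger equivalences are not claimed.
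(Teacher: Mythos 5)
Your proposal is correct and follows exactly the paper's route: the theorem is stated there as an immediate consequence of Lemmas \ref{T3a}, \ref{T3b}, \ref{T3ac}, and \ref{L2}, which is precisely the chain of bounds you assemble. Your closing caveat about the exponential dependence on $\dlrws$ also matches the paper's subsequent remark that Theorem \ref{th-p-w2} excludes $\dlrws$ from the linearly equivalent set.
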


%\begin{theorem}\label{th-p-w}
%For every class of digraphs $\mathcal G$, the following statements are equivalent.
%\begin{enumerate}
%\item There is some $c$ such that for every digraph $G\in \mathcal G$ it holds $\dlcws(G)\leq c$.
%\item There is some $c$ such that for every digraph $G\in \mathcal G$ it holds $\dlnlcws(G)\leq c$.
%\item There is some $c$ such that for every digraph $G\in \mathcal G$ it holds $\dnws(G)\leq c$.
%\item There is some $c$ such that for every digraph $G\in \mathcal G$ it holds $\dlrws(G)\leq c$.
%\end{enumerate}
%\end{theorem}

%discuss why rankwidth not linear!!!!!!!!!!!!!

We suppose that the exponential bound given in Lemma \ref{L2} can not 
be improved to a linear in general.

\begin{theorem}\label{th-p-w2}
Any two parameters in $\{\dnws,\dlnlcws,\dlcws\}$ are linearly equivalent.
\end{theorem}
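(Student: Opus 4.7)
The plan is to observe that the theorem is an immediate corollary of the constant-gap bounds already established in Lemma \ref{T3a} and Lemma \ref{T3b}, so no new work beyond bookkeeping is required.

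Recall that linear equivalence of $\alpha$ and $\beta$ requires linear functions $f_1, f_2$ with $\alpha(G) \le f_1(\beta(G))$ and $\beta(G) \le f_2(\alpha(G))$. Lemma \ref{T3b} gives $\dnws(G) \le \dlnlcws(G) \le \dnws(G)+1$, which is witnessed by the linear functions $f_1(x) = x+1$ and $f_2(x) = x$, so $\dnws$ and $\dlnlcws$ are linearly equivalent. The same lemma gives $\dnws(G) \le \dlcws(G) \le \dnws(G)+1$, so by the same argument $\dnws$ and $\dlcws$ are linearly equivalent. Finally Lemma \ref{T3a} gives $\dlnlcws(G) \le \dlcws(G) \le \dlnlcws(G)+1$, so $\dlnlcws$ and $\dlcws$ are linearly equivalent. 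Since linear equivalence is transitive (the composition of two linear functions is linear), this covers all pairs from $\{\dnws, \dlnlcws, \dlcws\}$.

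There is really no obstacle here; the work was already done when establishing Lemmas \ref{T3a} and \ref{T3b}, which were transferred from the undirected case. The point worth emphasizing is why $\dlrws$ is \emph{not} included in Theorem \ref{th-p-w2}: the only bound we have from $\dlrws$ back to the other three (via $\dlcws$) is Lemma \ref{L2}, namely $\dlcws(G) \le 4^{\dlrws(G)+1} - 1$, which is exponential rather than linear. The other direction, $\dlrws(G) \le \dnws(G)$ from Lemma \ref{T3ac}, is linear, but since one direction is only exponential, $\dlrws$ fits into the equivalence class of Theorem \ref{th-p-w} but not into the stronger linear equivalence of Theorem \ref{th-p-w2}, consistent with the remark preceding the statement that the exponential bound of Lemma \ref{L2} is conjectured not to be improvable to a linear one in general.
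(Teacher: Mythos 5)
Your proposal is correct and matches the paper's own (implicit) justification: the theorem follows directly from the additive-constant bounds of Lemma \ref{T3a} and Lemma \ref{T3b}, each direction being witnessed by a linear function, and your explanation of why $\dlrws$ is excluded is consistent with the paper's remark about Lemma \ref{L2}.
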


Using the arguments of \cite[Section 8]{FOT10} we obtain the next result.

\begin{lemma}\label{lemma-lcw-delta-rw}
There is some polynomial $p$ such that 
for every digraph $G$ it holds
$\dlcws(G)\leq p(\Delta(G),\dlrws(G))$.
\end{lemma}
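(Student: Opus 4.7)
The plan is to transfer the argument from Section~8 of \cite{FOT10}, which proves $\lcws(G)\le p(\Delta(G),\lrws(G))$ for undirected graphs, to the directed linear setting. First I fix an optimal directed linear rank decomposition $(T,f)$ of $G$ of width $k=\dlrws(G)$; since $T$ is a caterpillar, this linearly orders the vertices as $v_1,\dots,v_n$ with the property that, at every cut position $i$, the bipartite matrix $M_{L_i}^{R_i}$ over $GF(4)$ has rank at most $k$ and hence at most $4^k$ distinct rows. This observation alone already yields the exponential bound of Lemma~\ref{L2}; the task is to replace the $4^k$ factor by a polynomial in $\Delta(G)$ and $k$.

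The core combinatorial input of \cite{FOT10} is that, in a bounded-degree graph with a rank-$k$ cut, the bipartite interaction across the cut can be encoded with polynomially many ``types''. The reason is that each column of $M_{L_i}^{R_i}$ has at most $\Delta(G)$ nonzero entries, so the distinct row patterns can be generated by combining only $O(\Delta(G)\cdot k)$ elementary neighbourhood witnesses from the right side, rather than by all $4^k$ abstract vectors in the row space. The adaptation to the directed setting differs from the GF(2) case only in that each matrix entry now takes one of four arc-type values, which multiplies all counting bounds by an absolute constant and therefore preserves the polynomial nature of the overall estimate.

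Given this, I would construct a directed linear clique-width expression that scans vertices in the order $v_1,\dots,v_n$ induced by $(T,f)$ and, at each step $i$, keeps the vertices of $L_i$ grouped into the polynomially many equivalence classes determined by their rows in $M_{L_i}^{R_i}$, using the operations $\bullet_a$, $\oplus$, $\alpha_{a,b}$, and $\rho_{a\to b}$. The arcs from and to a newly introduced vertex $v_{i+1}$ are inserted by a bounded number of $\alpha_{a,b}$ operations dictated by its row in $M_{L_{i+1}}^{R_{i+1}}$, and labels of vertices whose rows become zero in the subsequent cut matrix are recycled via $\rho_{a\to b}$-operations, so that the label count never exceeds the claimed polynomial in $\Delta(G)$ and $k$. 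The main obstacle is the careful bookkeeping of how many labels are simultaneously active: we must verify that the passage from GF(2) to GF(4), from undirected to directed arcs, and from the general (tree) rank decomposition to the linear (caterpillar) variant does not force us to retain more than polynomially many labels in parallel, which is exactly where the amortized argument of \cite[Section~8]{FOT10} has to be checked step by step.
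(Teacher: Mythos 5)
Your proposal follows exactly the route the paper takes: the paper offers no proof of this lemma beyond the single remark that it is obtained ``using the arguments of \cite[Section~8]{FOT10}'', and your sketch is precisely an elaboration of how those bounded-degree arguments transfer to the directed linear setting (the caterpillar order from the rank decomposition, polynomially many cross-cut types in a rank-$k$ cut of a bounded-degree digraph over GF(4), and label recycling in the resulting directed linear clique-width expression). The residual bookkeeping you flag at the end is exactly the part the paper likewise delegates to the cited reference, so your attempt is, if anything, more explicit than the paper's own argument.
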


\begin{theorem}\label{th-delta-lc-lr}
For every class of digraphs  $\mathcal G$ such that for all $G\in\mathcal G$ 
the value $\Delta(G)$ is bounded
any two parameters in $\{\dnws,\dlnlcws,\dlcws,\dlrws\}$ are polynomially equivalent.
\end{theorem}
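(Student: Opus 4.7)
The plan is to reduce the statement to the single new pair $(\dlrws, \dlcws)$, since by Theorem \ref{th-p-w2} the three parameters $\dnws$, $\dlnlcws$, $\dlcws$ are already linearly (hence polynomially) equivalent on all digraphs. By transitivity of polynomial equivalence, it then suffices to show that $\dlrws$ is polynomially equivalent to, say, $\dlcws$ on every class $\mathcal G$ of bounded maximum vertex degree.

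For one direction I would observe that no degree bound is needed: Lemma \ref{T3ac} gives $\dlrws(G)\le \dnws(G)$, and combining this with Lemma \ref{T3b} (specifically $\dnws(G)\le \dlcws(G)$) yields the linear upper bound $\dlrws(G)\le \dlcws(G)$. Thus one of the two comparison functions is linear already, independently of $\Delta(G)$.

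For the converse direction I would invoke Lemma \ref{lemma-lcw-delta-rw}, which provides a bivariate polynomial $p$ with $\dlcws(G)\le p(\Delta(G),\dlrws(G))$ for every digraph $G$. Fix a constant $c$ such that $\Delta(G)\le c$ for every $G\in\mathcal G$ (this constant exists by the assumption on $\mathcal G$). Then the univariate function $q(x):=p(c,x)$ is a polynomial in $x$, and for every $G\in\mathcal G$ we obtain $\dlcws(G)\le p(c,\dlrws(G))=q(\dlrws(G))$. Therefore $\dlcws$ is polynomially bounded in $\dlrws$ on $\mathcal G$, completing the polynomial equivalence of $\dlrws$ and $\dlcws$.

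Finally I would conclude by chaining: for any two parameters $\alpha,\beta\in\{\dnws,\dlnlcws,\dlcws,\dlrws\}$, compose the polynomials provided by Theorem \ref{th-p-w2} (among $\dnws,\dlnlcws,\dlcws$) with $q$ and the linear bound $\dlrws\le \dnws$ as needed; a composition of polynomials is a polynomial, so $\alpha$ and $\beta$ are polynomially equivalent on $\mathcal G$. The only subtle point is that the polynomial bounding $\dlcws$ by $\dlrws$ depends on the class through the degree bound $c$; since the notion of polynomial equivalence only requires the existence of such polynomials (which here may depend on $\mathcal G$), this is exactly what the statement asks for. The main work is really borne by Lemma \ref{lemma-lcw-delta-rw}; everything else is assembly.
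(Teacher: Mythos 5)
Your proposal is correct and matches the route the paper intends: the theorem is stated immediately after Lemma \ref{lemma-lcw-delta-rw} precisely so that the bounded-degree polynomial bound $\dlcws(G)\leq p(\Delta(G),\dlrws(G))$ can be combined with the unconditional linear chain $\dlrws\leq\dnws\leq\dlcws$ from Lemmas \ref{T3ac} and \ref{T3b} and with the linear equivalence of $\dnws$, $\dlnlcws$, $\dlcws$ from Theorem \ref{th-p-w2}. The only cosmetic point is that substituting $q(x)=p(c,x)$ tacitly assumes monotonicity of $p$ in its first argument; replacing $q$ by any single polynomial dominating the finitely many polynomials $p(0,x),\dots,p(c,x)$ removes even that quibble.
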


%%%%%%%%%%%%%%%%%%%%%%%%%%%%%%%%%%%%%%%%%%%%%%%%%%%%%%%%%%%%%%%%%%%%%%%%%%
\subsection{Relations between cut-width and path-width}\label{sec-bounds-c}
%%%%%%%%%%%%%%%%%%%%%%%%%%%%%%%%%%%%%%%%%%%%%%%%%%%%%%%%%%%%%%%%%%%%%%%%%%

The directed path-width is closely related to directed vertex separation number.

\begin{lemma}[\cite{YC08}]\label{th-pw-vs}
For every digraph $G$ it holds
$$
\dpws(G) = \dvsns(G).
$$
\end{lemma}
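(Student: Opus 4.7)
The plan is to prove the two inequalities $\dpws(G) \leq \dvsns(G)$ and $\dvsns(G) \leq \dpws(G)$ separately, mirroring the classical Kinnersley--Kirousis--Papadimitriou equivalence for undirected graphs but adapted to the asymmetric condition~(\ref{def-dpw-2}) of Definition~\ref{def-dpw}.

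For $\dpws(G) \leq \dvsns(G)$, I would start from a layout $\varphi$ realizing the minimum in formulation~(\ref{vsn-d1}) and construct a directed path-decomposition by putting, for every $i \in \{1,\ldots,|V|\}$,
\[
X_i = \{\varphi^{-1}(i)\} \cup \{u \in L(i-1,\varphi,G) : \exists v \in R(i-1,\varphi,G),\ (v,u) \in E\}.
\]
The size bound $|X_i| \leq \dvsns(G)+1$ is immediate. Condition~(1) is trivial, and Condition~(3) follows because a backward tail $v$ witnessing $u \in X_\ell$ has $\varphi(v) \geq \ell$ and so still witnesses $u \in X_j$ for every $i<j<\ell$. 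The heart is Condition~(\ref{def-dpw-2}): a forward arc $(v,u)$ with $\varphi(v)<\varphi(u)$ has its endpoints in $X_{\varphi(v)}$ and $X_{\varphi(u)}$ already in the correct order, while a backward arc $(v,u)$ with $\varphi(v)>\varphi(u)$ is handled because the construction drags the head $u$ forward into $X_{\varphi(v)}$ precisely by virtue of this arc, so tail and head share the bag $X_{\varphi(v)}$.

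For the converse $\dvsns(G) \leq \dpws(G)$, I would take an optimal directed path-decomposition of width $w$ and first refine it (without increasing the width) to the standard nice form, where consecutive bags differ by exactly one introduce or forget operation; such a refinement respects condition~(\ref{def-dpw-2}) because it only replicates bags. Letting $f(v)$ be the bag index where $v$ first appears (a total order under niceness), I would order the vertices by $f$. At position $i$, write $u_i = \varphi^{-1}(i)$, $k = f(u_i)$, and $S_i = \{u' \in L(i,\varphi,G) : \exists v \in R(i,\varphi,G),\ (v,u') \in E\}$. A short computation shows $S_i \subseteq X_k$: any witness tail $v$ has $f(v) > k$ by the ordering, and condition~(\ref{def-dpw-2}) applied to $(v,u)$ forces $u$ into $X_k$ and even into $X_{k+1}$. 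The bound $|S_i|\leq w$ then follows by a case split: if $u_i \notin S_i$ then $S_i \subseteq X_k \setminus \{u_i\}$, of size at most $w$; if $u_i \in S_i$ then $S_i \subseteq X_{k+1}$, and niceness makes the step $X_k \to X_{k+1}$ either introduce a fresh vertex (giving $X_{k+1} \cap R(i) \neq \emptyset$) or forget one (giving $|X_{k+1}| \leq w$), so $|X_{k+1} \cap L(i)| \leq w$ in either subcase.

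The main obstacle I expect is the boundary situation in the second direction where $u_i$ is the last new vertex of its bag: then $X_k$ contains no member of $R(i)$ and the naive ``shave off $u_i$'' argument on $X_k$ breaks, which is exactly why the refinement to nice form is needed; passing to the neighbouring bag $X_{k+1}$ and exploiting the introduce/forget dichotomy then carries the proof through. The forward direction is largely routine once one commits to the bag definition above, with the shifted splits $L(i-1),R(i-1)$ exactly matching the ``tail-right, head-left'' pattern encoded in~(\ref{vsn-d1}).
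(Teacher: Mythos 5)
The paper offers no proof of this lemma at all: it is imported verbatim from Yang and Cao \cite{YC08}, so there is no in-paper argument to measure yours against. Your blind proof is correct and self-contained, and it is essentially the classical equivalence of path-width and vertex separation number transported to the asymmetric directed setting. In the forward direction, your bags $X_i=\{\varphi^{-1}(i)\}\cup\{u\in L(i-1,\varphi,G): \exists v\in R(i-1,\varphi,G),\ (v,u)\in E\}$ have size at most $\dvsns(G)+1$, and the decisive observation that a backward arc $(v,u)$ places both endpoints together in $X_{\varphi(v)}$ is exactly what makes the asymmetric condition (\ref{def-dpw-2}) go through; the convexity check via the persisting witness $v$ is also right. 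In the converse direction, the containment $S_i\subseteq X_{f(u_i)}\cap X_{f(u_i)+1}$ and the introduce/forget dichotomy correctly handle the awkward case $u_i\in S_i$, and the boundary case $f(u_i)=r$ is harmless since there $R(i)=\emptyset$. Two points deserve to be made explicit in a write-up: first, the nice refinement must be arranged to start from a singleton or empty bag so that the first-appearance map $f$ is injective and genuinely induces a layout; second, condition (\ref{def-dpw-2}) only asserts the existence of one pair of bags, so it is the convexity condition (3) that actually propagates $u'$ into every bag between $X_{f(u')}$ and $X_t$ --- you use this silently when concluding $u'\in X_{f(u_i)}$. Neither point is a gap, only a matter of spelling things out. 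The value of your argument over the paper's treatment is simply that it makes the lemma self-contained rather than a black-box citation.
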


In \cite{FP13} it is shown how to construct a directed path-decomposition
of width twice the directed cut-width of the graph.
Using the directed vertex separation number, we next show a better bound.

\begin{lemma}\label{th-pw-cutw}
For every digraph $G$ it holds
$$
\dpws(G)\leq \dcutws(G).
$$
\end{lemma}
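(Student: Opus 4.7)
The plan is to invoke Lemma \ref{th-pw-vs} and reduce the claim to the inequality $\dvsns(G) \leq \dcutws(G)$. For the latter, I would use the equivalent definition (\ref{vsn-d3a}) of directed vertex separation number, which counts vertices in the left side that have at least one out-neighbour on the right, and compare it with the defining quantity (\ref{cutw-d1}) of directed cut-width, which counts arcs from left to right under the same layout.

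First, I fix an optimal layout $\varphi \in \Phi(G)$ realizing $\dcutws(G) = k$, so that for every $1 \leq i \leq |V|$ the set
\[
A_i = \{(u,v) \in E \mid u \in L(i,\varphi,G),\ v \in R(i,\varphi,G)\}
\]
satisfies $|A_i| \leq k$. Next, I would use this same layout $\varphi$ as a candidate layout for $\dvsns$ and consider
\[
S_i = \{u \in L(i,\varphi,G) \mid \exists v \in R(i,\varphi,G): (u,v) \in E\}.
\]
The key observation is that the map $S_i \to A_i$ sending each $u \in S_i$ to an arbitrarily chosen arc $(u,v_u) \in A_i$ (which exists by definition of $S_i$) is injective, because different vertices $u$ yield arcs with different tails. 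Hence $|S_i| \leq |A_i| \leq k$ for every $i$, which shows $\dvsns(G) \leq k = \dcutws(G)$.

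Combining this with Lemma \ref{th-pw-vs}, which gives $\dpws(G) = \dvsns(G)$, yields the desired bound $\dpws(G) \leq \dcutws(G)$. There is no real obstacle here; the proof is a direct injection argument on a common layout, and the only mild point to flag is that we must use the formulation (\ref{vsn-d3a}) (counting vertices on the left with out-neighbours on the right) rather than the in-neighbour variant (\ref{vsn-d1}), so that the tails of the cut arcs can serve as the injection's codomain index.
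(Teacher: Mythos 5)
Your proposal is correct and follows essentially the same route as the paper: reduce to $\dvsns(G)\leq\dcutws(G)$ via Lemma \ref{th-pw-vs} and observe that each separating vertex contributes a distinct cut arc under a common layout. The only cosmetic difference is that you pair the forward cut-width formulation (\ref{cutw-d1}) with the out-neighbour variant (\ref{vsn-d3a}), whereas the paper pairs (\ref{cutw-d2}) with (\ref{vsn-d1}); both pairings are valid by the stated equivalences of the definitions.
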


\begin{proof}
Let $G=(V,E)$ be a digraph of directed cut-width $k$. By (\ref{cutw-d2}) there is 
a layout $\varphi\in \Phi(G)$, such that for every $1\leq i \leq |V|$
there are at most $k$ arcs $(v,u)\in E$  such that 
$v \in R(i,\varphi,G)$ and  $u\in L(i,\varphi,G)$. 
Thus for every $1\leq i \leq |V|$ there are at most $k$ vertices 
$u \in L(i,\varphi,G)$, such that there
is an arc $(v,u)\in E$ with  $v\in R(i,\varphi,G)$.
Thus by (\ref{vsn-d1}) the directed vertex separation number of $G$ is
at most $k$ and by Lemma \ref{th-pw-vs} the directed path-width of $G$
is at most $k$.
\end{proof}

% reverse direction not true in general 

The directed path-width and directed cut-width of a digraph can differ very much, 
e.g.~a $\overleftrightarrow{K_{1,n}}$ has directed path-width 1 and directed 
cut-width $\lceil\frac{n}{2}\rceil$.

\begin{lemma}\label{th-cw-pw2}
For every digraph $G$ it holds
$$
\dcutws(G)\leq  \min(\Delta^{-}(G),\Delta^{+}(G)) \cdot \dpws(G).
$$
\end{lemma}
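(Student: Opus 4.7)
The plan is to combine the equivalence of directed path-width and directed vertex separation number (Lemma \ref{th-pw-vs}) with the two symmetric formulations of both parameters. Take an optimal layout $\varphi \in \Phi(G)$ witnessing $\dvsns(G) = \dpws(G) = k$ in the formulation (\ref{vsn-d1}). Then for every $i$ with $1 \leq i \leq |V|$, the set
\[
U_i = \{u \in L(i,\varphi,G) \mid \exists v \in R(i,\varphi,G): (v,u)\in E\}
\]
has size at most $k$. Every arc $(v,u) \in E$ with $v \in R(i,\varphi,G)$ and $u \in L(i,\varphi,G)$ has its head $u$ in $U_i$, and each such $u$ has at most $\Delta^-(G)$ in-neighbors overall, hence at most $\Delta^-(G)$ in-neighbors in $R(i,\varphi,G)$. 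Consequently the number of such backward arcs is at most $\Delta^-(G) \cdot k$, and by the equivalent definition (\ref{cutw-d2}) this shows $\dcutws(G) \leq \Delta^-(G)\cdot \dpws(G)$.

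For the symmetric bound I would apply the same argument using the alternative formulation (\ref{vsn-d3a}) of $\dvsns(G)$, which counts vertices $u \in L(i,\varphi,G)$ having an out-neighbor in $R(i,\varphi,G)$. An optimal layout $\varphi'$ for this formulation again witnesses $\dvsns(G) = \dpws(G) = k$, and each of the (at most $k$) such vertices $u$ contributes at most $\Delta^+(G)$ forward arcs $(u,v)$ with $v \in R(i,\varphi',G)$. By the definition (\ref{cutw-d1}) this yields $\dcutws(G) \leq \Delta^+(G)\cdot \dpws(G)$.

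Taking the better of the two bounds gives $\dcutws(G) \leq \min(\Delta^-(G), \Delta^+(G))\cdot \dpws(G)$. There is no real obstacle here; the only subtlety to check is that the same layout $\varphi$ can be used simultaneously for the $\dvsns$-formulation and the $\dcutws$-formulation with matching orientation of arcs (forward vs.\ backward), which is exactly why we need the two equivalent definitions (\ref{vsn-d1})/(\ref{cutw-d2}) for one half and (\ref{vsn-d3a})/(\ref{cutw-d1}) for the other.
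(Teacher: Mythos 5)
Your proof is correct and follows essentially the same route as the paper: both use Lemma \ref{th-pw-vs} to pass to the vertex separation number, bound the number of backward arcs across each cut by $\Delta^-(G)$ times the number of cut vertices via definitions (\ref{vsn-d1}) and (\ref{cutw-d2}), and obtain the $\Delta^+$ bound symmetrically from (\ref{vsn-d3a}) and (\ref{cutw-d1}). No differences worth noting.
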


\begin{proof} 
Let $G=(V,E)$ be a digraph of directed path-width $k$.
By Lemma \ref{th-pw-vs} and (\ref{vsn-d1}) there is 
a layout $\varphi\in \Phi(G)$, such that for every $1\leq i \leq |V|$
there are at most $k$ vertices 
$u \in L(i,\varphi,G)$, such that there
is an arc $(v,u)\in E$ with  $v\in R(i,\varphi,G)$. 
Thus for every $1\leq i \leq |V|$ there are at most $\Delta^{-}(G) \cdot k$ 
arcs $(v,u)\in E$  such that $v \in R(i,\varphi,G)$ and  $u\in L(i,\varphi,G)$. 
By (\ref{cutw-d2}) this implies that the directed cut-width
of digraph $G$ is at most $\Delta^{-}(G) \cdot k$.

The  bound using $\Delta^{+}$ instead of $\Delta^{-}$ can be
shown in the same way  using definition (\ref{vsn-d3a}) instead of (\ref{vsn-d1})
and using definition (\ref{cutw-d1}) instead of (\ref{cutw-d2}).
\end{proof}

%\begin{theorem}\label{th-cutw-pw}
%For every class of digraphs $\mathcal G$, the following statements are equivalent.
%\begin{enumerate}
%\item There is some $c$ such that for every digraph $G\in \mathcal G$ it holds $\dcutws(G)\leq c$.
%\item There is some $c$ and some $d$  such that for every digraph $G\in \mathcal G$ it holds $\dpws(G)\leq c$
%and $\Delta^{-}(G)\leq d$.
%\item There is some $c$ and some $d$  such that for every digraph $G\in \mathcal G$ it holds $\dpws(G)\leq c$
%and $\Delta^{+}(G)\leq d$.
%\end{enumerate}
%\end{theorem}
%
%
%{\bf ?? $\dcutws(G)\geq \frac{\Delta^{-}(G)}{2}$ und $\dcutws(G)\geq  \frac{\Delta^{+}(G)}{2}$
%
%a star with all edges in (out) the center has cw =0}
%
%\begin{proof} 
%$(1)\Rightarrow (2)$ Follows by Theorem \ref{th-pw-cutw} and the fact that $\dcutws(G)\geq  
%\frac{\Delta^{-}(G)}{2}$. $(2)\Rightarrow (1)$ Follows by Theorem \ref{th-cw-pw2}. 
%$(1)\Rightarrow (3)$ by Theorem \ref{th-pw-cutw} and the fact that $\dcutws(G)\geq  
%\frac{\Delta^{+}(G)}{2}$. $(3)\Rightarrow (1)$ Follows by Theorem \ref{th-cw-pw2}.\qed
%\end{proof}

\begin{theorem}\label{th-cutw-pwx}
For every class of  digraphs  $\mathcal G$ such that for all $G\in\mathcal G$ 
the value $\min(\Delta^{-}(G),\Delta^{+}(G))$ is bounded
any two parameters in $\{\dcutws,\dpws\}$ are linearly equivalent.
\end{theorem}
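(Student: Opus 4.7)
The plan is to observe that the two bounds already established in Lemma \ref{th-pw-cutw} and Lemma \ref{th-cw-pw2} together witness linear equivalence on any class where $\min(\Delta^{-}(G),\Delta^{+}(G))$ is uniformly bounded, so the theorem is essentially a packaging result. Concretely, I will fix a constant $c$ such that $\min(\Delta^{-}(G),\Delta^{+}(G)) \leq c$ for every $G \in \mathcal G$, which exists by the hypothesis of the theorem.

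For the first direction, I would simply invoke Lemma \ref{th-pw-cutw} to get $\dpws(G) \leq \dcutws(G)$ for every digraph, which is already a linear bound with slope $1$ and requires no hypothesis on the class at all. For the reverse direction, I would apply Lemma \ref{th-cw-pw2} to obtain $\dcutws(G) \leq \min(\Delta^{-}(G),\Delta^{+}(G)) \cdot \dpws(G) \leq c \cdot \dpws(G)$ for every $G \in \mathcal G$; since $c$ does not depend on the particular graph, this expression is a linear function of $\dpws(G)$.

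Taking $f_1(x) = x$ and $f_2(x) = c\,x$, both functions are linear, so by the definition of linear equivalence given at the start of Section~4 the parameters $\dcutws$ and $\dpws$ are linearly equivalent on $\mathcal G$. There is no real obstacle here: all of the work has been done in the prior lemmas, and the only thing left is to remark that a bound of the form ``degree times other parameter'' becomes linear in the other parameter as soon as the degree quantity is a constant over the class in question.
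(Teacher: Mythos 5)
Your proposal is correct and matches the paper's (implicit) argument exactly: the theorem is stated as a direct consequence of Lemma \ref{th-pw-cutw} and Lemma \ref{th-cw-pw2}, with the boundedness hypothesis turning the degree factor into a constant. Nothing further is needed.
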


%%%%%%%%%%%%%%%%%%%%%%%%%%%%%%%%%%%%%%%%%%%%%%%%%%%%%%%%%%%%%%%%%%%%%%%%%%
\subsection{Relations between path-width and neighbourhood-width}\label{sec-bounds}
%%%%%%%%%%%%%%%%%%%%%%%%%%%%%%%%%%%%%%%%%%%%%%%%%%%%%%%%%%%%%%%%%%%%%%%%%%

The  directed neighbourhood-width and directed path-width of a digraph can 
differ very much, e.g.~a $\overleftrightarrow{K_{n}}$ has directed neighbourhood-width 1 
and directed path-width $n-1$.

\begin{lemma}\label{th-pw-nw}
For every digraph $G$ it holds
$$
\dpws(G)\leq  \min(\Delta^{-}(G),\Delta^{+}(G)) \cdot \dnws(G).
$$
\end{lemma}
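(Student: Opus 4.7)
The plan is to combine the characterisation $\dpws(G)=\dvsns(G)$ from Lemma~\ref{th-pw-vs} with the definition of directed neighbourhood-width, exploiting the fact that vertices sharing the same directed neighbourhood into the right part of a cut can be covered by a single out-neighbour or in-neighbour of some vertex in the right part. The $\min$ on the right-hand side will be produced by running the same argument twice, once with definition (\ref{vsn-d1}) and once with (\ref{vsn-d3a}).

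Concretely, I would first fix an optimal layout $\varphi\in\Phi(G)$ for the directed neighbourhood-width, so that for every $1\le i\le|V|$ the vertices of $L(i,\varphi,G)$ fall into at most $k:=\dnws(G)$ classes $C_1,\dots,C_{k'}$ ($k'\le k$) according to their directed neighbourhood $N_{R(i,\varphi,G)}(u)=(N^+_{R(i,\varphi,G)}(u),N^-_{R(i,\varphi,G)}(u))$. For the bound via $\Delta^+$, I would apply Lemma~\ref{th-pw-vs} in the form (\ref{vsn-d1}): the quantity to bound is the number of $u\in L(i,\varphi,G)$ admitting some $v\in R(i,\varphi,G)$ with $(v,u)\in E$, i.e. those with $N^-_{R(i,\varphi,G)}(u)\neq\emptyset$. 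These $u$ lie in at most $k$ of the classes above. For any such class $C$, picking any $u\in C$ and any $v\in N^-_{R(i,\varphi,G)}(u)$, every other $u'\in C$ shares the same $N^-$-neighbourhood and hence also satisfies $(v,u')\in E$; therefore $C\subseteq N^+_G(v)$ and $|C|\le\odeg_G(v)\le\Delta^+(G)$. Summing over the (at most $k$) nonempty classes yields $\dvsns(G)\le k\cdot\Delta^+(G)$.

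The bound via $\Delta^-$ is symmetric: using formulation (\ref{vsn-d3a}) of $\dvsns$, the vertices to count are those $u\in L(i,\varphi,G)$ with $N^+_{R(i,\varphi,G)}(u)\neq\emptyset$; for any class $C$ among them and any $v\in N^+_{R(i,\varphi,G)}(u)$ for some $u\in C$, one has $(u',v)\in E$ for all $u'\in C$, so $C\subseteq N^-_G(v)$ and $|C|\le\ideg_G(v)\le\Delta^-(G)$. This gives $\dvsns(G)\le k\cdot\Delta^-(G)$, and combining both estimates with Lemma~\ref{th-pw-vs} yields the claimed inequality.

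I do not expect a serious obstacle: the argument is essentially the directed analogue of the standard ``partition into neighbourhood classes, bound each class by the degree'' trick already used implicitly in Theorem~\ref{th-u-d-w}(\ref{th-u-d-w-nw}). The only subtlety is being careful that a ``same $N_W$'' class genuinely forces both $N^+$ and $N^-$ to coincide, so that the single witness $v$ found for one representative actually connects (in the correct direction) to every member of the class; this is immediate from the definition $N_W(u)=(N^+_W(u),N^-_W(u))$.
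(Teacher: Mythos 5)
Your proof is correct and follows essentially the same route as the paper: reduce to the directed vertex separation number via Lemma~\ref{th-pw-vs}, fix a layout witnessing $\dnws(G)=k$, partition $L(i,\varphi,G)$ into at most $k$ classes of equal directed neighbourhood into $R(i,\varphi,G)$, and bound each relevant class by the degree of a common witness $v\in R(i,\varphi,G)$, running the argument once for each of (\ref{vsn-d1}) and (\ref{vsn-d3a}) to obtain the minimum. If anything, you match the two formulations with $\Delta^{+}$ and $\Delta^{-}$ more carefully than the paper's write-up, which states the same two estimates with the roles of the two degree bounds interchanged (harmless, since the final bound takes the minimum of both).
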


\begin{proof} 
Let $G=(V,E)$ be a digraph of directed neighbourhood-width $k$. Then there is 
a layout $\varphi\in \Phi(G)$, such that for every $1\leq i \leq |V|$
the vertices in $L(i,\varphi,G)$ can be divided into at most $k$ subsets 
$L_1,\ldots,L_k$, such that the vertices of set $L_j$, $1\leq j \leq k$, 
have the same neighbourhood with respect to the vertices in $R(i,\varphi,G)$. Every of these sets
$L_j$ has at most  $\Delta^{-}(G)$ vertices $u$ such that there
is an arc $(v,u)\in E$ with  $v\in R(i,\varphi,G)$.
Thus for every $1\leq i \leq |V|$ there are at most $\Delta^{-}(G) \cdot k$ vertices 
$u \in L(i,\varphi,G)$, such that there
is an arc $(v,u)\in E$ with  $v\in R(i,\varphi,G)$.
Thus by (\ref{vsn-d1}) the directed vertex separation number of $G$ is
at most $\Delta^{-}(G) \cdot k$ and by Lemma \ref{th-pw-vs} the directed path-width of $G$
is at most $\Delta^{-}(G) \cdot k$.

The  bound using $\Delta^{+}$ instead of $\Delta^{-}$ can be
shown in the same way  using  definition (\ref{vsn-d3a}) instead of definition (\ref{vsn-d1}).
\end{proof}

The example 
$\overleftrightarrow{K_{n}}$ shows that the bound  given in Lemma  \ref{th-pw-nw} is tight. 
%
% and also the reverse?!?! wg 2 bounds?!
%

Lemma \ref{th-pw-nw}, Lemma \ref{T3b}, and Lemma \ref{L2}  imply the following bounds.

\begin{corollary}\label{cor-Tpw-deg-n} 
Let $G$ be a digraph, then it holds
$$
\dpws(G)\leq  \min(\Delta^{-}(G),\Delta^{+}(G)) \cdot \dlnlcws(G),
$$
$$
\dpws(G)\leq  \min(\Delta^{-}(G),\Delta^{+}(G)) \cdot \dlcws(G), {\text and}
$$
$$
\dpws(G)\leq  \min(\Delta^{-}(G),\Delta^{+}(G)) \cdot (4^{\dlrws(G)+1} -1).
$$
\end{corollary}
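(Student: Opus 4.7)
The plan is to derive all three inequalities by a common two-step strategy: first apply Lemma \ref{th-pw-nw}, which already delivers the master bound $\dpws(G)\leq \min(\Delta^-(G),\Delta^+(G))\cdot \dnws(G)$, and then replace $\dnws(G)$ on the right-hand side by each of the three target parameters using the inequalities proved earlier in Section \ref{sec-bounds1}. So the entire content of the corollary is a straightforward transitivity argument.

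For the first two inequalities, I would invoke Lemma \ref{T3b}, which gives $\dnws(G)\leq \dlnlcws(G)$ and $\dnws(G)\leq \dlcws(G)$. Substituting these two bounds separately into the conclusion of Lemma \ref{th-pw-nw} yields
\[
\dpws(G)\leq \min(\Delta^-(G),\Delta^+(G))\cdot \dlnlcws(G)
\]
and
\[
\dpws(G)\leq \min(\Delta^-(G),\Delta^+(G))\cdot \dlcws(G),
\]
which are exactly the first two displayed inequalities.

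For the third inequality, I would chain two earlier bounds to express $\dnws(G)$ in terms of $\dlrws(G)$. Namely, Lemma \ref{T3b} gives $\dnws(G)\leq \dlcws(G)$, and Lemma \ref{L2} gives $\dlcws(G)\leq 4^{\dlrws(G)+1}-1$. Combining the two produces $\dnws(G)\leq 4^{\dlrws(G)+1}-1$, and plugging this into Lemma \ref{th-pw-nw} yields the last displayed inequality.

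There is no real obstacle here: every step is a direct substitution of an inequality already stated in the excerpt, and the hint given just before the corollary (``Lemma \ref{th-pw-nw}, Lemma \ref{T3b}, and Lemma \ref{L2} imply the following bounds'') pinpoints exactly the three ingredients needed. The only minor care is to notice that the rank-width case is not obtained directly from a bound of the form $\dnws(G)\leq f(\dlrws(G))$, but instead through the intermediate parameter $\dlcws(G)$.
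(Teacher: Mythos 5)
Your proposal is correct and follows exactly the route the paper intends: the corollary is stated as an immediate consequence of Lemma \ref{th-pw-nw}, Lemma \ref{T3b}, and Lemma \ref{L2}, and your substitutions (including routing the rank-width case through $\dlcws$ via Lemma \ref{L2}) are precisely the intended chain.
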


\begin{corollary}\label{cor-pl-or-thres2} 
The directed path-width of a directed threshold graph $G$ 
is at most  $\min(\Delta^{-}(G),\Delta^{+}(G))$.
\end{corollary}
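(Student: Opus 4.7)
The plan is to derive this corollary in one line from Corollary \ref{cor-Tpw-deg-n}, provided one knows that directed threshold graphs always have directed linear NLC-width (or, equivalently for our purposes, directed neighbourhood-width) equal to $1$. So the first step is to establish the structural fact $\dlnlcws(G) = 1$ for every directed threshold graph $G$. By analogy with the classical undirected threshold graphs, a directed threshold graph should be constructible by successively adding a single vertex, where the attachment of the new vertex to the already-built portion follows a uniform rule (one of a small constant number of directed attachment patterns). Such a vertex-by-vertex construction translates immediately into a directed linear NLC-width $1$-expression: we begin with $\bullet_1$ and iteratively apply operations $G' \otimes_{(\overrightarrow{S}, \overleftarrow{S})} \bullet_1$, where the pair $(\overrightarrow{S}, \overleftarrow{S}) \subseteq \{1\}^2 \times \{1\}^2$ is chosen according to whether the new vertex is a source, a sink, isolated, or bidirectionally adjacent to the current labelled graph. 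Because only one label is ever used, no relabelling step is required.

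Once $\dlnlcws(G) = 1$ is in hand, the conclusion is immediate by applying the first inequality of Corollary \ref{cor-Tpw-deg-n}:
$$\dpws(G) \leq \min(\Delta^{-}(G),\Delta^{+}(G)) \cdot \dlnlcws(G) = \min(\Delta^{-}(G),\Delta^{+}(G)) \cdot 1.$$
Alternatively, the same conclusion can be derived via Lemma \ref{th-pw-nw} from $\dnws(G) = 1$, which is equivalent to $\dlnlcws(G) = 1$ up to an additive constant by Lemma \ref{T3b}, and is often more convenient to verify directly from the construction.

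The only real obstacle is the structural claim $\dlnlcws(G) = 1$ (equivalently $\dnws(G) = 1$), which depends on the precise definition of directed threshold graphs that appears later in the paper. Given that the paper's stated goal includes characterising digraphs of small width via directed threshold graphs, one expects this to be either built into the definition or proved as a prior lemma, so that the corollary here is genuinely a one-line consequence of Corollary \ref{cor-Tpw-deg-n}. No additional computation is needed beyond recording this substitution.
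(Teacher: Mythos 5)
Your proposal is correct and follows exactly the paper's route: the paper also cites the fact that directed threshold graphs have directed linear NLC-width $1$ (established in Theorem \ref{small-lnlcw} via precisely the vertex-by-vertex translation into a $1$-labelled expression that you sketch) and then concludes by Corollary \ref{cor-Tpw-deg-n}. No gaps.
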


\begin{proof} 
The set of directed threshold graphs  has directed
linear NLC-width 1 (see Theorem \ref{small-lnlcw}). 
Thus the result follows by Corollary \ref{cor-Tpw-deg-n}. 
\end{proof}

Since  $\Delta^-(G)\leq \Delta(G)$ and $\Delta^+(G)\leq \Delta(G)$ and thus 
$$\min(\Delta^{-}(G),\Delta^{+}(G))\leq \Delta(G)$$ the given bounds also 
hold for the more common measure $ \Delta(G)$ instead of $\min(\Delta^{-}(G),\Delta^{+}(G))$.

%\begin{corollary}\label{cor-Tpw-deg} 
%Let $G$ be a digraph, then it holds
%$$
%\dpws(G)\leq   \Delta(G)\cdot\dlnlcws(G).
%$$
%\end{corollary}

%\begin{proof} 
%By the results for undirected graphs in  \cite{Gur06a} we know
%that for every graph $G$ it holds
%$\pws(G)\leq  \Delta(G) \cdot\lnlcws(G)$. By similar arguments as
%in the proof of  Corollary \ref{cor-Tpw-nw2x} the result follows by Theorem \ref{th-u-d-w}(\ref{th-u-d-w-pw})  
%and Theorem \ref{th-u-d-w}(\ref{th-u-d-w-nlc}).\qed
%\end{proof}

After considering the maximum vertex degree, we next make a stronger
restriction by excluding  all possible orientations of a $K_{\ell,\ell}$ as subdigraphs.

\begin{corollary}\label{cor-Tpw-nw2x} 
Let $G$ be a digraph where $\un(G)$ has  no $K_{\ell,\ell}$ subgraph, then it holds
$$
\dpws(G)\leq \pws(\un(G))\leq  2\cdot\lnlcws(\un(G))(\ell-1)\leq  2\cdot\dlnlcws(G)(\ell-1).
$$
\end{corollary}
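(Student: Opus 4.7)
The proof factors as a three-step chain. The outer inequalities $\dpws(G) \leq \pws(\un(G))$ and $\lnlcws(\un(G)) \leq \dlnlcws(G)$ are immediate from Theorem~\ref{th-u-d-w}(\ref{th-u-d-w-pw}.) and from the left half of Theorem~\ref{th-u-d-w}(\ref{th-u-d-w-nlc}.), respectively. The real content of the corollary is therefore the purely undirected middle inequality $\pws(H) \leq 2\cdot \lnlcws(H)(\ell-1)$, applied to the $K_{\ell,\ell}$-free underlying graph $H = \un(G)$.

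For this middle inequality I would first invoke the undirected analogue of Lemma~\ref{T3b}, giving $\nws(H) \leq \lnlcws(H)$, and aim at the sharper estimate $\pws(H) \leq 2\cdot \nws(H)(\ell-1)$. Fix an optimal NW-layout $\varphi$ of $H$ of width $k=\nws(H)$. At each cut $i$, the left side $L(i,\varphi,H)$ splits into at most $k$ neighbourhood classes $C_1,\ldots,C_k$, each with common neighbourhood $N_j \subseteq R(i,\varphi,H)$. Since $H$ contains no $K_{\ell,\ell}$, every active class (one with $N_j \neq \emptyset$) satisfies $|C_j|\leq \ell-1$ or $|N_j|\leq \ell-1$. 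This dichotomy is the engine of the proof: in the first case the class itself can be represented in a bounded ``boundary''; in the second case its (small) common future neighbourhood plays the role of a boundary representative.

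Guided by this, I would construct a path-decomposition $X_1,\ldots,X_{|V(H)|}$ by putting into $X_i$ the vertex $v_i=\varphi^{-1}(i)$ together with, for each active class at cut $i$, either the whole $C_j$ (if $|C_j|\leq \ell-1$) or the whole $N_j$ (otherwise). Each of the two alternatives contributes at most $\ell-1$ vertices per class, so $|X_i|\leq 1+2k(\ell-1)$, giving width $2k(\ell-1)$. Combined with the two outer steps, the full chain of the corollary follows.

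The step I expect to be the main obstacle is verifying that the sets $X_1,\ldots,X_{|V(H)|}$ really form a valid path-decomposition. The edge-covering axiom must be checked separately in the ``small class'' and ``small common neighbourhood'' branches: when $|C_j|\leq \ell-1$ the left endpoint sits in the bag, and when $|N_j|\leq \ell-1$ the right endpoint does. The contiguity axiom is more delicate, since a vertex $v\in N_j$ may be pulled into several earlier bags as a boundary representative for different large classes; this requires arguing that the cuts at which $v$ appears in some bag form a contiguous interval of positions in $\varphi$. I would handle both points by adapting the bookkeeping of the Gurski--Wanke bound~\cite{GW00} relating tree-width and clique-width in the $K_{n,n}$-free setting to the linear/path-decomposition setting.
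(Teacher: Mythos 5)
Your reduction of the corollary to the purely undirected inequality $\pws(H)\leq 2\cdot\lnlcws(H)(\ell-1)$ for $K_{\ell,\ell}$-free $H=\un(G)$ is exactly the paper's reduction: both outer inequalities come from Theorem \ref{th-u-d-w}(\ref{th-u-d-w-pw}.) and (\ref{th-u-d-w-nlc}.). The difference is what happens next: the paper does \emph{not} prove the middle inequality at all, it cites it as a known result for undirected graphs from \cite{Gur06a} and stops. You instead attempt a direct proof, and that attempt has a genuine gap at precisely the point you flag.

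Concretely, the bags you define fail the edge-covering axiom. Take $H=K_{1,n}$ with centre $c$ and leaves $u_1,\ldots,u_n$, let $\ell=2$, and use the layout $u_1,\ldots,u_n,c$. At every cut $i\geq 2$ the single active class $C_1=\{u_1,\ldots,u_i\}$ has $|C_1|\geq\ell$, so your rule puts only $N_1=\{c\}$ into $X_i$; at cut $1$ it puts only $C_1=\{u_1\}$. Hence $u_1$ and $c$ never share a bag and the edge $\{u_1,c\}$ is uncovered. The underlying problem is structural: choosing \emph{one} of $C_j$ and $N_j$ per class guarantees only one endpoint of each crossing edge in the bag, while the axiom needs both endpoints together somewhere. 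A repair exists --- let $T_i$ be the union of the chosen sides over all active classes at cut $i$, so that $|T_i|\leq k(\ell-1)$ and $T_i$ meets every crossing edge, and set $X_i=\{\varphi^{-1}(i)\}\cup T_{i-1}\cup T_i$; a short case analysis then shows every edge is covered, and this is also where a bound of the form $1+2k(\ell-1)$ honestly comes from (your stated $|X_i|\leq 1+2k(\ell-1)$ does not follow from the construction you wrote, which yields only $1+k(\ell-1)$). But then the contiguity axiom becomes the real work: the set of cuts at which a fixed vertex belongs to $T_i$ need not be an interval, since its class can oscillate between small and large as $i$ grows, and you supply no argument for this. The pointer to \cite{GW00} does not close the gap either, as that argument is formulated for tree-decompositions built along a clique-width expression rather than for layout-based path-decompositions. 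If you want to match the paper, the clean move is to quote the $K_{\ell,\ell}$-free bound from \cite{Gur06a}; if you want a self-contained proof, the decomposition must be of the $T_{i-1}\cup T_i$ type and the contiguity argument must actually be carried out.
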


\begin{proof}
By the results for undirected graphs in  \cite{Gur06a} we know
that for every graph $G$ which has no $K_{\ell,\ell}$ subgraph it holds
$$
\pws(G)\leq  2\cdot\lnlcws(G)(\ell-1).
$$
This implies for every digraph $G$, where $\un(G)$ has  no $K_{\ell,\ell}$ subgraph 
it holds 
$$
\pws(\un(G))\leq  2\cdot\lnlcws(\un(G))(\ell-1).
$$
Furthermore by Theorem \ref{th-u-d-w}(\ref{th-u-d-w-pw})  and Theorem \ref{th-u-d-w}(\ref{th-u-d-w-nlc}) 
for every digraph $G$, where $\un(G)$ has  no $K_{\ell,\ell}$ subgraph 
it holds 
$$
\dpws(G)\leq \pws(\un(G))\leq  2\cdot\lnlcws(\un(G))(\ell-1)\leq  2\cdot\dlnlcws(G)(\ell-1).
$$
This completes the proof.
\end{proof}

\begin{corollary}\label{cor-pl-or-thres} 
Planar directed threshold graphs have directed path-width at most $4$.
\end{corollary}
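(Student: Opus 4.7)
The plan is to apply Corollary \ref{cor-Tpw-nw2x} with $\ell = 3$, using two standard facts: (i) planar graphs exclude $K_{3,3}$ as a subgraph (Kuratowski/Wagner), and (ii) directed threshold graphs have directed linear NLC-width $1$ (this is the cited Theorem \ref{small-lnlcw}, already used in Corollary \ref{cor-pl-or-thres2}).

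Concretely, let $G$ be a planar directed threshold graph. First I would observe that $\un(G)$ is a planar undirected graph, since planarity is a property of the underlying undirected graph and is preserved when we ignore orientations of arcs. Hence $\un(G)$ contains no $K_{3,3}$ subgraph, so we may take $\ell = 3$ in Corollary \ref{cor-Tpw-nw2x}. Second, by the characterization of directed threshold graphs (Theorem \ref{small-lnlcw}), $\dlnlcws(G) = 1$. Substituting these two facts into the bound
\[
\dpws(G) \;\leq\; 2 \cdot \dlnlcws(G) \cdot (\ell - 1)
\]
gives $\dpws(G) \leq 2 \cdot 1 \cdot (3-1) = 4$, as required.

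There is essentially no obstacle: the whole statement is a direct instantiation of Corollary \ref{cor-Tpw-nw2x}, once one notes that the $K_{3,3}$-exclusion needed to apply that corollary comes for free from planarity, and that the directed linear NLC-width of directed threshold graphs has already been pinned down to $1$ elsewhere in the paper. The only minor care needed is to make sure the $K_{3,3}$-freeness applies at the level of subgraphs of $\un(G)$ (not just induced subgraphs or minors), but this is immediate because any planar graph is $K_{3,3}$-minor-free and hence $K_{3,3}$-subgraph-free.
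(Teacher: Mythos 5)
Your proposal is correct and follows exactly the paper's own argument: apply Corollary \ref{cor-Tpw-nw2x} with $\ell=3$ (since $\un(G)$ is planar and hence $K_{3,3}$-subgraph-free) together with $\dlnlcws(G)=1$ from Theorem \ref{small-lnlcw}, yielding $\dpws(G)\leq 2\cdot 1\cdot(3-1)=4$. Your remark on checking subgraph-level (rather than minor-level) exclusion is a reasonable extra precaution that the paper leaves implicit.
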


\begin{proof} 
The set of directed threshold graphs  has directed
linear NLC-width 1 (see Theorem \ref{small-lnlcw}) and for planar digraphs $G$ we know
that  $\un(G)$ has  no $K_{3,3}$ subgraph. 
Thus the result follows by Corollary \ref{cor-Tpw-nw2x}. 
\end{proof}

Next we want to bound the  directed linear clique-width in terms of the
directed path-width.

\begin{remark}\label{rem-pw-lcw}
For general digraphs 
and even for  digraphs of bounded vertex degree
the directed linear clique-width, directed linear NLC-width, directed
neighbour\-hood-width, and directed linear rank-width 
cannot be bounded by the directed path-width by the following examples.
\begin{enumerate}
\item Let $T'$ be an orientation of a tree, e.g.~an out-tree or an in-tree. 
Then $\dpws(T')=0$ by Theorem \ref{small-pw0}. But  $\dlcws(T')$ is unbounded, since
$\lcws(\un(T'))$ is unbounded \cite{GW05a}  
and  since $\lcws(\un(T'))\leq \dlcws(T')$ by Theorem \ref{th-u-d-w}.

%\item Let $T'$ be a biorientation of a tree.
%Then $\dpws(T')=1$ by Theorem \ref{small-pw0}. But  $\dlcws(T')$ is unbounded, since
%$\lcws(\un(T'))$ is unbounded \cite{GW05a}  
%and  since $\lcws(\un(T'))= \dlcws(T')$ by Theorem \ref{th-bio}.

\item Let $G'$ be an acyclic orientation of a grid. 
Then $\dpws(G')=0$ by Theorem \ref{small-pw0}.
But  $\dlcws(G')$ is unbounded, since
$\lcws(\un(G'))$ is unbounded \cite{GR00}  
and  since $\lcws(\un(G'))\leq \dlcws(G')$ by Theorem \ref{th-u-d-w}.

\item
The set of all $k$-power graphs of directed paths has directed path-width $0$
(cf.~Example \ref{ex-dvsn})
and directed linear clique-width $k+2$ (Corollary \ref{power-exact}).
\end{enumerate}
\end{remark}

%This leads to a difference to the relation between the corresponding parameters 
%for undirected graphs (see  \cite{Gur06a}).  

For semicomplete digraphs
the directed path-width can be used to give an upper bound
on the directed clique-width. The main idea of the proof in \cite{FP13}
is to define a directed clique-width expression along 
a nice path-decomposition.\footnote{Please note that in \cite{FP13}
a different notation for directed path-width was used. In Definition \ref{def-dpw}(\ref{def-dpw-2})
the arcs are directed from bags $X_i$ to $X_j$ for $i\leq j$. The authors of \cite{FP13} 
take arcs  from bags $X_i$ to $X_j$ for $i\geq j$ into account. Since an
optimal directed path-decomposition  $(X_1, \ldots, X_r)$ w.r.t.~Definition  \ref{def-dpw}
leads to an optimal directed path-decomposition  $(X_r, \ldots, X_1)$
w.r.t.~the definition of   \cite{FP13}, and vice versa, both definitions lead to
the same value for the directed path-width.}
Since the proof this result in \cite{FP13} only uses 
directed linear clique-width operations we can state the next theorem.

\begin{lemma}[\cite{FP13}]\label{th-cw-pw}
For every semicomplete digraph $S$ it holds
$$
\dlcws(S)\leq  \dpws(S)+2.
$$
\end{lemma}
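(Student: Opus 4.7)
The plan is to build a directed linear clique-width expression with $\dpws(S)+2$ labels by reading off a nice directed path-decomposition of $S$. Set $w=\dpws(S)$ and take an optimal directed path-decomposition $(X_1,\ldots,X_r)$ of $S$. First I would convert it to a nice form in which $X_1$ is a single vertex and each $X_{i+1}$ differs from $X_i$ either by the introduction of one new vertex or by the forgetting of one vertex, without increasing the maximum bag size. This conversion mimics the standard one for undirected path-decompositions and carries over to the directed setting.

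The construction maintains the invariant that every vertex currently present in the active bag carries a distinct label from $\{1,\ldots,w+1\}$, while every vertex already forgotten carries a common label, say $\ast$. Using these $w+2$ labels we proceed left to right along the nice decomposition. When $X_{i+1}=X_i\cup\{v\}$ introduces $v$, pick a label $a\in\{1,\ldots,w+1\}$ not used by any vertex of $X_i$; such an $a$ exists because $|X_i|\le w$. Apply $\oplus\bullet_a$, then for each $u\in X_i$ with current label $b$ apply $\alpha_{a,b}$ and/or $\alpha_{b,a}$ according to which of $(v,u)$ and $(u,v)$ belongs to $E$, and finally apply the single operation $\alpha_{\ast,a}$ to install arcs from all forgotten vertices to $v$. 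When $X_{i+1}=X_i\setminus\{u\}$ forgets $u$ with label $b$, apply $\rho_{b\to\ast}$.

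The crucial point to verify is the correctness of the $\alpha_{\ast,a}$ step. Suppose $u$ has been forgotten by step $i$ and $v$ is introduced at step $i{+}1$; then the last index $j$ with $u\in X_j$ satisfies $j\le i$, while the first index $j'$ with $v\in X_{j'}$ satisfies $j'=i{+}1>j$. By condition (\ref{def-dpw-2}) of Definition \ref{def-dpw}, an arc $(v,u)$ would require indices $j'\le j$, which is impossible; hence $(v,u)\notin E$, and because $S$ is semicomplete, the arc $(u,v)$ must belong to $E$. Thus $\alpha_{\ast,a}$ installs precisely the missing arcs $(u,v)$ for every previously forgotten $u$ and never a spurious one. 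All other arcs either connect two vertices sharing some bag (inserted when the second of them is introduced) or go from a current vertex to a future one (inserted when the future one is introduced), and condition (\ref{def-dpw-2}) guarantees that every arc falls into one of these cases.

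The main obstacle is this semicompleteness-based argument for the batched operation $\alpha_{\ast,a}$; once it is in place, the rest is bookkeeping along the nice decomposition. Counting labels, at every stage we use at most $|X_i|\le w+1$ active labels plus the label $\ast$, yielding $\dlcws(S)\le\dpws(S)+2$.
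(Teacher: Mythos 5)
Your construction is correct and follows exactly the route the paper indicates for this lemma: the paper does not reprove it but cites \cite{FP13}, whose argument (as the paper describes) builds a directed linear clique-width expression along a nice directed path-decomposition, using one fresh label per active bag vertex and one recycling label for forgotten vertices, with semicompleteness guaranteeing that the single batched operation $\alpha_{\ast,a}$ inserts precisely the arcs from forgotten vertices to the newly introduced one. Your verification that $(v,u)\notin E$ forces $(u,v)\in E$ via condition (2) of Definition \ref{def-dpw} is the right key step, and the label count $|X_i|+1\le w+2$ matches the claimed bound.
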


Lemma \ref{th-cw-pw}, Lemma \ref{T3a}, Lemma \ref{T3b}, and Lemma \ref{T3ac}  imply the following bounds.

\begin{corollary}\label{cor-Tpw-deg-n6} 
For every semicomplete digraph $S$ it holds
$$
\dlnlcws(S)\leq  \dpws(S)+2,
$$
$$
\dnws(S)\leq  \dpws(S)+2, {\text and}
$$
$$
\dlrws(S)\leq  \dpws(S)+2.
$$
\end{corollary}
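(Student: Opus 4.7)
The plan is to obtain all three inequalities by simply chaining the already-established bounds, using Lemma \ref{th-cw-pw} as the anchor that ties everything back to $\dpws$. Since the premise restricts us to semicomplete digraphs $S$, Lemma \ref{th-cw-pw} gives the key fact $\dlcws(S)\leq \dpws(S)+2$; all the remaining work is to replace $\dlcws$ on the left-hand side with the three other width measures via the measure-versus-measure bounds of Section \ref{sec-bounds1}, which hold for arbitrary digraphs and therefore in particular for $S$.

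First, for $\dlnlcws(S)$, I would apply the left inequality of Lemma \ref{T3a}, namely $\dlnlcws(S)\leq\dlcws(S)$, and then append Lemma \ref{th-cw-pw} to obtain $\dlnlcws(S)\leq\dlcws(S)\leq\dpws(S)+2$. Second, for $\dnws(S)$, I would use the bound $\dnws(S)\leq\dlcws(S)$ from Lemma \ref{T3b} (the left part of the second chain there) and again conclude $\dnws(S)\leq\dpws(S)+2$ via Lemma \ref{th-cw-pw}. Third, for $\dlrws(S)$, I would combine Lemma \ref{T3ac}, which gives $\dlrws(S)\leq\dnws(S)$, with the bound on $\dnws(S)$ just obtained, producing $\dlrws(S)\leq\dnws(S)\leq\dlcws(S)\leq\dpws(S)+2$.

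There is essentially no obstacle here: the proof is a routine concatenation of inequalities that are already proven in the paper, and semicompleteness is used only once, inside Lemma \ref{th-cw-pw}. The only thing worth being careful about is invoking the correct direction of each double inequality in Lemmas \ref{T3a} and \ref{T3b}, since both lemmas phrase bounds in both directions and we need only the ``small measure $\leq$ directed linear clique-width'' side. A brief one-line display per inequality, citing the relevant lemma above each step, should suffice; no new construction, expression, or decomposition is required.
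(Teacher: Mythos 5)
Your proposal is correct and matches the paper's own (implicit) argument exactly: the paper derives the corollary by chaining Lemma \ref{th-cw-pw} with $\dlnlcws\leq\dlcws$ from Lemma \ref{T3a}, $\dnws\leq\dlcws$ from Lemma \ref{T3b}, and $\dlrws\leq\dnws$ from Lemma \ref{T3ac}, precisely as you do. Nothing is missing.
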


% linear oder nicht??

\begin{theorem}\label{th-cutw-pw-sem}
For every class of semicomplete digraphs  $\mathcal G$ such that for all $G\in\mathcal G$ 
the value $\min(\Delta^{-}(G),\Delta^{+}(G))$ is bounded
any two parameters in $\{\dpws,\dnws,\dlnlcws,\dlcws,\dlrws\}$ are  equivalent.
\end{theorem}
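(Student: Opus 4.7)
The strategy is to chain together three results already at hand. By Theorem~\ref{th-p-w}, the four parameters $\{\dnws, \dlnlcws, \dlcws, \dlrws\}$ are already pairwise equivalent on the class of all digraphs, and hence on $\mathcal G$. So it suffices to show that $\dpws$ is equivalent to one of them, say $\dlnlcws$, when restricted to $\mathcal G$; equivalence with the other three follows by transitivity.

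For the direction in which $\dpws$ upper-bounds the others, Lemma~\ref{th-cw-pw} and Corollary~\ref{cor-Tpw-deg-n6} already give $X(S) \leq \dpws(S) + 2$ for every $X \in \{\dnws, \dlnlcws, \dlcws, \dlrws\}$ and every semicomplete digraph $S$. Since every $G \in \mathcal G$ is semicomplete, this half applies verbatim, with $f_1(k)=k+2$.

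For the converse, Corollary~\ref{cor-Tpw-deg-n} supplies
\[
\dpws(G) \leq \min(\Delta^-(G), \Delta^+(G)) \cdot \dlnlcws(G),
\]
and the analogous bounds involving $\dlcws(G)$ and, through Lemma~\ref{T3b}, $\dnws(G)$, as well as the exponential variant $\dpws(G) \leq \min(\Delta^-(G), \Delta^+(G)) \cdot (4^{\dlrws(G)+1} - 1)$. By hypothesis there is a constant $c$ such that $\min(\Delta^-(G), \Delta^+(G)) \leq c$ for all $G \in \mathcal G$, so we obtain $\dpws(G) \leq c \cdot \dlnlcws(G)$ and the matching bounds for the remaining parameters. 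Combining the two directions gives equivalence of $\dpws$ with each of $\{\dnws, \dlnlcws, \dlcws, \dlrws\}$ on $\mathcal G$, and together with Theorem~\ref{th-p-w} this yields pairwise equivalence of all five.

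There is really no obstacle to overcome: the theorem is a direct assembly of the earlier lemmas and corollaries. The one point worth flagging is that the $\dlrws$--to--$\dpws$ chain passes through the exponential bound of Lemma~\ref{L2} (used implicitly in the $\dlrws$ clause of Corollary~\ref{cor-Tpw-deg-n}), which is why the conclusion is stated merely as equivalence rather than polynomial or linear equivalence, in contrast to the stronger statements available in Theorem~\ref{th-p-w2} and Theorem~\ref{th-cutw-pwx} for the subfamilies not containing $\dlrws$.
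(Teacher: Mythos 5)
Your proof is correct and follows exactly the route the paper intends: the theorem is stated without an explicit proof precisely because it is the direct assembly of Lemma~\ref{th-cw-pw} / Corollary~\ref{cor-Tpw-deg-n6} (one direction), Corollary~\ref{cor-Tpw-deg-n} with the degree bound (the other direction), and Theorem~\ref{th-p-w} for transitivity. Your closing remark about the exponential $\dlrws$ link being the reason the conclusion is only ``equivalent'' also matches the paper's subsequent refinements in Theorems~\ref{th-cutw-pw-sem2x} and~\ref{th-cutw-pw-sem2}.
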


Using the arguments of \cite[Section 8]{FOT10} and the relations shown in Theorem \ref{th-u-d-w}
there is some polynomial $p$ such that for every digraph $G$ it holds
$$\dpws(G)\leq \pws(\un(G)) \leq p(\Delta(\un(G)),\lrws(\un(G))) \leq p(\Delta(G),\dlrws(G)).$$

\begin{theorem}\label{th-cutw-pw-sem2x}
For every class of semicomplete digraphs  $\mathcal G$ such that for all $G\in\mathcal G$ 
the value $\Delta(G)$ is bounded
any two parameters in $\{\dpws,\dnws,\dlnlcws,\dlcws,\dlrws\}$ are polynomially equivalent.
\end{theorem}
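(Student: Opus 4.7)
The plan is to assemble polynomial bounds in both directions between $\dpws$ and the already-understood cluster $\{\dnws,\dlnlcws,\dlcws,\dlrws\}$, and then invoke Theorem \ref{th-delta-lc-lr} to finish. For the direction bounding the four ``clique-width-type'' parameters in terms of $\dpws$, I would simply quote semicompleteness: Lemma \ref{th-cw-pw} and Corollary \ref{cor-Tpw-deg-n6} already give the linear bounds
\begin{equation*}
\dlcws(G),\ \dlnlcws(G),\ \dnws(G),\ \dlrws(G)\ \leq\ \dpws(G)+2
\end{equation*}
for every semicomplete digraph $G$, requiring no degree hypothesis.

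For the reverse direction I would use the chain displayed in the paragraph immediately preceding the theorem, namely
\begin{equation*}
\dpws(G)\ \leq\ \pws(\un(G))\ \leq\ p\bigl(\Delta(\un(G)),\lrws(\un(G))\bigr)\ \leq\ p\bigl(\Delta(G),\dlrws(G)\bigr),
\end{equation*}
which combines Theorem \ref{th-u-d-w}(\ref{th-u-d-w-pw}), the polynomial path-width/linear-rank-width relation for bounded-degree undirected graphs from \cite[Section~8]{FOT10}, and Theorem \ref{th-u-d-w}(\ref{th-u-d-w-rw}). Under the hypothesis that $\Delta(G)\leq d$ on the class $\mathcal G$, fixing $d$ turns $p(\Delta(G),\dlrws(G))$ into a univariate polynomial in $\dlrws(G)$, giving a polynomial bound on $\dpws$ in terms of $\dlrws$.

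To conclude, I would combine these two ingredients with Theorem \ref{th-delta-lc-lr}, which for bounded $\Delta$ gives pairwise polynomial equivalence of $\{\dnws,\dlnlcws,\dlcws,\dlrws\}$. Any parameter in this cluster is thus polynomially bounded by $\dlrws$, hence (by the chain above) polynomially bounds $\dpws$; conversely $\dpws$ linearly bounds each of them by the first step. Composing polynomials yields polynomial equivalence of every pair drawn from $\{\dpws,\dnws,\dlnlcws,\dlcws,\dlrws\}$.

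I expect no real technical obstacle here: the nontrivial ingredient, the polynomial bound of path-width in terms of rank-width and maximum degree for undirected graphs, is already imported from \cite{FOT10} in the paragraph preceding the theorem, and semicompleteness is only needed to make $\dpws$ fit linearly above the other four parameters via Lemma \ref{th-cw-pw}. The proof is therefore a short bookkeeping argument rather than a new estimate.
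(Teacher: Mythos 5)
Your proposal is correct and follows essentially the same route the paper intends: the displayed chain $\dpws(G)\leq \pws(\un(G))\leq p(\Delta(\un(G)),\lrws(\un(G)))\leq p(\Delta(G),\dlrws(G))$ in the paragraph preceding the theorem is precisely the paper's argument for the reverse direction, and combining it with Lemma \ref{th-cw-pw}, Corollary \ref{cor-Tpw-deg-n6}, and Theorem \ref{th-delta-lc-lr} is the intended bookkeeping. No gaps.
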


We suppose that the exponential bound given in Corollary \ref{cor-Tpw-deg-n} can not 
be improved to a linear one.

\begin{theorem}\label{th-cutw-pw-sem2}
For every class of semicomplete digraphs  $\mathcal G$ such that for all $G\in\mathcal G$ 
the value $\min(\Delta^{-}(G),\Delta^{+}(G))$ is bounded
any two parameters in $\{\dpws,\dnws,\dlnlcws,\dlcws\}$ are linearly equivalent.
\end{theorem}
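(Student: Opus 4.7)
The plan is to assemble previously established bounds to obtain linear bounds in both directions for every pair from $\{\dpws,\dnws,\dlnlcws,\dlcws\}$ on the class $\mathcal G$. First I would note that Lemmas \ref{T3a} and \ref{T3b} give, on \emph{every} digraph, the chain of inequalities $\dnws(G)\le\dlnlcws(G)\le\dnws(G)+1$ and $\dnws(G)\le\dlcws(G)\le\dnws(G)+1$, from which $\dlnlcws$ and $\dlcws$ also differ by at most $1$. So these three parameters are already pairwise linearly equivalent (in fact within an additive constant) with no assumption on $G$, which reduces the task to showing that $\dpws$ is linearly equivalent to, say, $\dnws$ on $\mathcal G$.

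For the bound of $\dpws$ in terms of the others, I would invoke Lemma \ref{th-pw-nw}, which holds for arbitrary digraphs and gives $\dpws(G)\le\min(\Delta^-(G),\Delta^+(G))\cdot\dnws(G)$. Because $\mathcal G$ is assumed to satisfy $\min(\Delta^-(G),\Delta^+(G))\le c$ for some constant $c$, this specialises to $\dpws(G)\le c\cdot\dnws(G)$, a linear bound. Chaining with the $\pm1$ relationships above immediately yields $\dpws(G)\le c\cdot\dlnlcws(G)$ and $\dpws(G)\le c\cdot\dlcws(G)$; the semicompleteness hypothesis is not needed here.

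For the reverse direction, the semicompleteness hypothesis is essential: Lemma \ref{th-cw-pw} gives $\dlcws(S)\le\dpws(S)+2$ for every semicomplete digraph $S$, and Corollary \ref{cor-Tpw-deg-n6} packages the corresponding additive-$2$ bounds for $\dnws$ and $\dlnlcws$ in terms of $\dpws$. Together with the first direction, every pair $\alpha,\beta\in\{\dpws,\dnws,\dlnlcws,\dlcws\}$ admits linear bounds $\alpha\le f_1(\beta)$ and $\beta\le f_2(\alpha)$ on $\mathcal G$.

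The proof is really just bookkeeping; there is no new mathematical content to produce, and the only subtlety worth flagging is the deliberate exclusion of $\dlrws$ from this statement. Including $\dlrws$ would force one to pass through the exponential bound $\dlcws(G)\le 4^{\dlrws(G)+1}-1$ of Lemma \ref{L2}, which is precisely why Theorem \ref{th-cutw-pw-sem2x} only asserts \emph{polynomial} equivalence once $\dlrws$ is added. Thus the main ``obstacle'' is being careful that every bound invoked is linear and that the hypotheses used (semicompleteness in one direction, bounded $\min(\Delta^-,\Delta^+)$ in the other) are precisely those stated.
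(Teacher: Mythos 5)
Your proposal is correct and follows exactly the route the paper intends: the theorem is stated without a separate proof precisely because it is the bookkeeping you describe, combining Lemmas \ref{T3a} and \ref{T3b} (additive-constant equivalence of $\dnws$, $\dlnlcws$, $\dlcws$), Lemma \ref{th-pw-nw} with the bounded-degree hypothesis for $\dpws\le c\cdot\dnws$, and Lemma \ref{th-cw-pw} together with Corollary \ref{cor-Tpw-deg-n6} (using semicompleteness) for the reverse direction. Your observation about why $\dlrws$ is excluded also matches the paper's reasoning.
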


By Lemma \ref{th-pw-nw} and Lemma \ref{th-cw-pw}
the restriction to semicomplete digraphs\footnote{When 
considering the directed path-width of almost semicomplete digraphs
in \cite{KKT15} the class
of semicomplete digraphs was suggested to be \gansfuss{a promising stage for pursuing digraph analogues of
the splendid outcomes, direct and indirect, from the Graph Minors project}.} leads to the same
relation between path-with an linear clique-width as for undirected graphs (see  \cite{Gur06a}).

\subsection{Equivalent parameters}

In Table \ref{tab-fam2} we summarize our results on the equivalence
of linear width parameters for directed graphs. For general digraphs
we have three classes of pairwise equivalent parameters, which reduces
to two or one class for $\Delta(G)$ bounded or semicomplete $\Delta(G)$ bounded
digraphs, respectively.

\begin{table}[ht]
\begin{center}
{\tiny
\begin{tabular}{|l|l||c|c|c|c|c|c|c|}
\hline
digraphs                     & equivalence & $\dcutws$ &$\dpws$ &  $\dlcws$ &$\dlnlcws$&$\dnws$&$\dlrws$  \\
\hline
general               &  equivalent                &  \textcolor{LGray}{$\bullet$}&  \textcolor{XGray}{$\bullet$}& \textcolor{BGray}{$\bullet$}& \textcolor{BGray}{$\bullet$}&\textcolor{BGray}{$\bullet$} & \textcolor{BGray}{$\bullet$}\\
                      &  polynomially equivalent   & \textcolor{LGray}{$\bullet$}&  \textcolor{XGray}{$\bullet$}& \textcolor{BGray}{$\bullet$}& \textcolor{BGray}{$\bullet$}&\textcolor{BGray}{$\bullet$} &\\
                      &  linearly equivalent       &\textcolor{LGray}{$\bullet$}&  \textcolor{XGray}{$\bullet$}& \textcolor{BGray}{$\bullet$}& \textcolor{BGray}{$\bullet$}&\textcolor{BGray}{$\bullet$} & \\
\hline
$\Delta(G)$ bounded   &  equivalent                & \textcolor{LGray}{$\bullet$}&  \textcolor{LGray}{$\bullet$}& \textcolor{BGray}{$\bullet$}& \textcolor{BGray}{$\bullet$}&\textcolor{BGray}{$\bullet$} & \textcolor{BGray}{$\bullet$}\\
                      &  polynomially equivalent   & \textcolor{LGray}{$\bullet$}&  \textcolor{LGray}{$\bullet$}& \textcolor{BGray}{$\bullet$}& \textcolor{BGray}{$\bullet$}&\textcolor{BGray}{$\bullet$} & \textcolor{BGray}{$\bullet$}\\
                      &  linearly equivalent       & \textcolor{LGray}{$\bullet$}&  \textcolor{LGray}{$\bullet$}& \textcolor{BGray}{$\bullet$}& \textcolor{BGray}{$\bullet$}&\textcolor{BGray}{$\bullet$} & \\
\hline
semicomplete          &  equivalent                & \textcolor{BGray}{$\bullet$}&  \textcolor{BGray}{$\bullet$}& \textcolor{BGray}{$\bullet$}& \textcolor{BGray}{$\bullet$}&\textcolor{BGray}{$\bullet$} & \textcolor{BGray}{$\bullet$}\\
$\Delta(G)$ bounded   &  polynomially equivalent   & \textcolor{BGray}{$\bullet$}&  \textcolor{BGray}{$\bullet$}& \textcolor{BGray}{$\bullet$}& \textcolor{BGray}{$\bullet$}&\textcolor{BGray}{$\bullet$} & \textcolor{BGray}{$\bullet$}\\
                      &  linearly equivalent       & \textcolor{BGray}{$\bullet$}&  \textcolor{BGray}{$\bullet$}& \textcolor{BGray}{$\bullet$}& \textcolor{BGray}{$\bullet$}&\textcolor{BGray}{$\bullet$} & \\

\hline
\end{tabular}
}
\end{center}
\caption{Classification of linear width parameters for directed graphs. The colors of the points represent
sets of pairwise (linearly, polynomially) equivalent parameters.\label{tab-fam2}}
\end{table}

%%%%%%%%%%%%%%%%%%%%%%%%%%%%%%%%%%%%%%%%%%%%%%%%%%%%%%%%%%%%%%%%%%%%%%%%%%
\section{Characterizations for graphs defined by parameters of small width}\label{sec-ch}
%%%%%%%%%%%%%%%%%%%%%%%%%%%%%%%%%%%%%%%%%%%%%%%%%%%%%%%%%%%%%%%%%%%%%%%%%%

%The classes of graphs of (linear) rank-width 1, tree-width 1 or path-width 1 each pos-
%sess interesting structural properties. For rank-width these are called distance
%hereditary graphs, while for tree-width and path-width we speak of forests and
%disjoint unions of paths respectively.

First we summarize some quite obvious characterizations.

\begin{theorem}\label{small-pw0}
For every digraph $G$ the following statements are equivalent.
\begin{enumerate}[(1.)]
\item $G$ is a DAG.
\item  $\dvsns(G)=0$.
\item  $\dpws(G)=0$.
\item  $\dcutws(G)=0$.
%\item $G$ contains no directed cycle as induced subdigraph.
%\item $G$ has a topological ordering.
\end{enumerate}
\end{theorem}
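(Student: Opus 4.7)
The plan is to prove the four-way equivalence via a short cycle of implications, leaning on the lemmas already established.

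The equivalence of (2) and (3) is immediate from Lemma \ref{th-pw-vs}, and the implication (4) $\Rightarrow$ (3) is immediate from Lemma \ref{th-pw-cutw}. So it remains to close the cycle with (1) $\Rightarrow$ (4) and (3) $\Rightarrow$ (1).

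For (1) $\Rightarrow$ (4), I would take a topological ordering $u_1, u_2, \ldots, u_n$ of the DAG $G$ (so $(u_i,u_j) \in E$ implies $i < j$) and consider the \emph{reverse} layout $\varphi$ defined by $\varphi(u_i) = n-i+1$. For every arc $(u,v) \in E$ we then have $\varphi(u) > \varphi(v)$, so at any cut position $i$ with $1 \le i \le n$, the vertex $u$ lies in $R(i,\varphi,G)$ whenever $v$ lies in $L(i,\varphi,G)$, and never the other way round. Therefore the set counted in definition (\ref{cutw-d1}) is empty for every $i$, giving $\dcutws(G)=0$.

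For (3) $\Rightarrow$ (1), let $(X_1,\ldots,X_r)$ be a directed path-decomposition of width $0$, so that $|X_i|=1$ for all $i$. By condition (3) of Definition \ref{def-dpw}, each vertex $v \in V$ appears in a contiguous block of bags; let $\FI(v)$ denote the first index $i$ with $v \in X_i$. Define $\varphi(v)$ by ranking the vertices according to $\FI(v)$ (this is well-defined because different vertices have different values of $\FI$, as two singleton bags sharing the same vertex would collapse). By condition (2), for every arc $(u,v) \in E$ there exist $i \le j$ with $u \in X_i$ and $v \in X_j$, forcing $\FI(u) \le \FI(v)$, hence $\varphi(u) \le \varphi(v)$. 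Consequently $\varphi$ is a topological ordering of $G$, so $G$ contains no directed cycle and is a DAG.

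The main technical point to be careful about is the direction convention in the definition of $\dcutws$: the cut only counts arcs $(u,v)$ with $u$ to the left and $v$ to the right, so one must use the \emph{reverse} of a topological order in the (1) $\Rightarrow$ (4) step. Everything else is a routine unfolding of definitions, and no step requires any machinery beyond Lemmas \ref{th-pw-vs} and \ref{th-pw-cutw}.
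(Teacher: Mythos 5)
Your proof is correct. The paper itself gives no argument for Theorem \ref{small-pw0} --- it is introduced only with the remark that these are ``quite obvious characterizations'' --- so there is no proof to compare against; your write-up supplies the missing details along the natural lines: (2)$\Leftrightarrow$(3) from Lemma \ref{th-pw-vs}, (4)$\Rightarrow$(3) from Lemma \ref{th-pw-cutw}, the reversed topological order for (1)$\Rightarrow$(4) (and you are right that the reversal is forced by the forward-arc convention in (\ref{cutw-d1})), and the extraction of a topological order from a width-$0$ path-decomposition for (3)$\Rightarrow$(1). Two small points in that last step deserve an explicit sentence. First, width $0$ only guarantees $|X_i|\leq 1$, so either discard empty bags or define the first-occurrence index on the nonempty ones. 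Second, the inference that $u\in X_i$, $v\in X_j$, $i\leq j$ forces the first bag containing $u$ to precede the first bag containing $v$ does not follow from condition (2) of Definition \ref{def-dpw} alone: you also need that the (contiguous) sets of indices of bags containing $u$ and containing $v$ are disjoint, which holds because no singleton bag can contain both vertices; then the entire interval of $u$ lies strictly before the entire interval of $v$, giving the \emph{strict} inequality $\varphi(u)<\varphi(v)$ that a topological order actually requires (your ``$\leq$'' plus injectivity does yield this, but it is worth saying). With those clarifications the argument is complete.
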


Next we introduce operations in order to recall the definition of directed 
co-graphs from \cite{BGR97} and introduce an interesting and useful subclass.
Let $G_1=(V_1,E_1)$ and $G_2=(V_2,E_2)$ be two vertex-disjoint directed graphs. 
\begin{itemize}
\item
The {\em disjoint union} of $G_1$ and $G_2$, 
denoted by $G_1 \oplus G_2$, 
is the digraph with vertex set $V_1 \cup V_2$ and 
arc set $E_1\cup E_2$. 

\item
The {\em series composition} of $G_1$ and $G_2$, 
denoted by $G_1\otimes G_2$, 
is the digraph with vertex set $V_1 \cup V_2$ and 
arc set $E_1\cup E_2\cup\{(u,v),(v,u)~|~u\in V_1, v\in V_2\}$. 

\item
The {\em order composition} of $G_1$ and $G_2$, 
denoted by $G_1\oslash  G_2$, 
is the digraph with vertex set $V_1 \cup V_2$ and 
arc set $E_1\cup E_2\cup\{(u,v)~|~u\in V_1, v\in V_2\}$. 
\end{itemize}

\begin{definition}[Directed co-graphs \cite{BGR97}]
The class of directed co-graphs is recursively defined as follows.
\begin{enumerate}[(i)]
\item Every digraph on a single vertex $(\{v\},\emptyset)$, 
denoted by $\bullet$, is a directed co-graph.

\item If $G_1$ and $G_2$ are directed co-graphs, then 
 \begin{inparaenum}[(a)]
\item
$G_1\oplus G_2$,
\item 
$G_1 \otimes G_2$, and
\item
$G_1\oslash G_2$   are directed co-graphs.
\end{inparaenum}
\end{enumerate}
\end{definition}

In \cite{CP06} it has been shown
that directed co-graphs can be characterized by the eight forbidden induced
subdigraphs shown in Table \ref{F-co-ex}.
In \cite{GWY16} the relation of directed co-graphs to the set of
graphs of directed NLC-width 1 and to the set of graphs of directed clique-width 2 
is analyzed.

\begin{table}[ht]
\begin{center}
\begin{tabular}{cccccccc}
\epsfig{figure=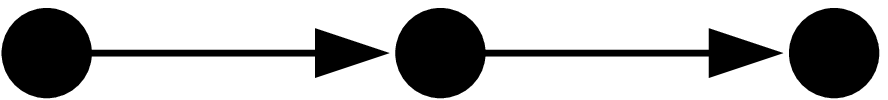,width=2.4cm} &&\epsfig{figure=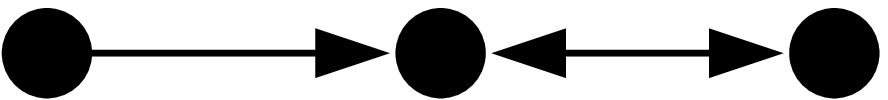,width=2.4cm}&&\epsfig{figure=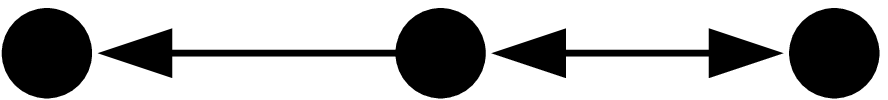,width=2.4cm}&&\\
 $D_1$   &  &    $D_2$   &  &   $D_3$   &  &   \\ 
&&&&&&&\\
\epsfig{figure=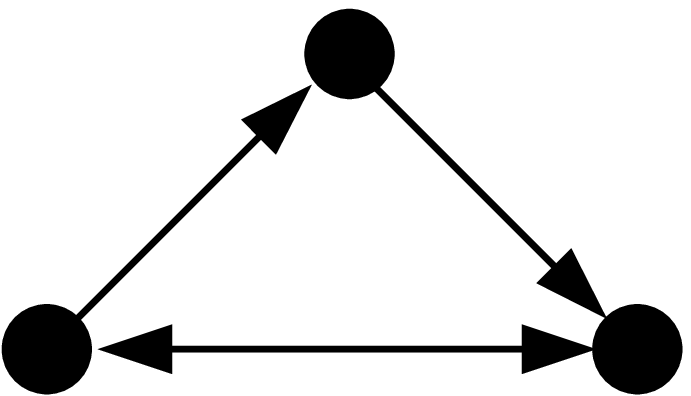,width=2.0cm} &&\epsfig{figure=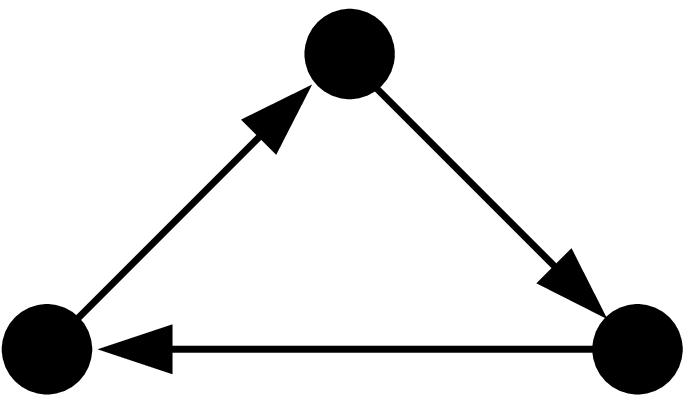,width=1.8cm} &&\epsfig{figure=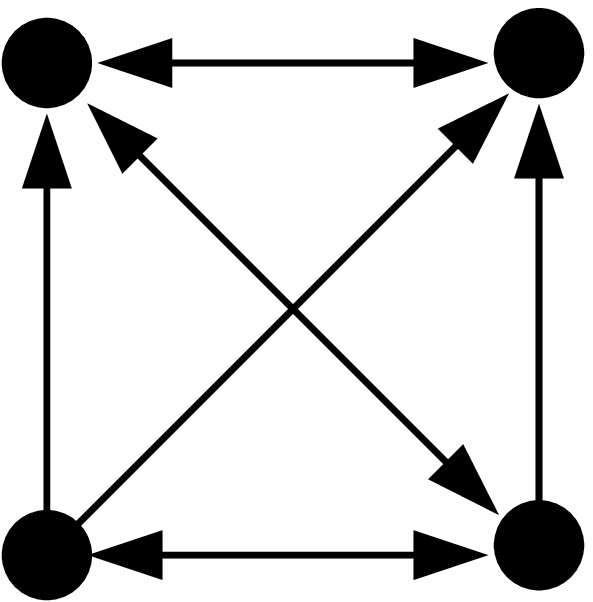,width=1.6cm}&&\\
  $D_4$   &  & $D_5$   &  &    $D_6$   &  &   \\ 
&&&&&&&\\
\epsfig{figure=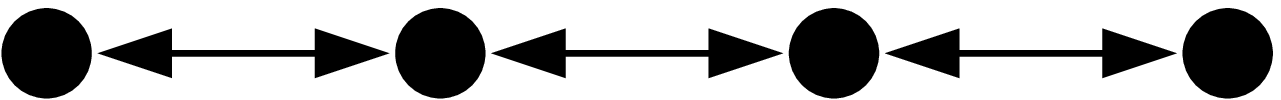,width=3.2cm}&&\epsfig{figure=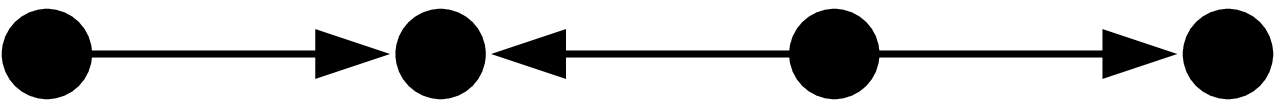,width=3.5cm}&\\
 $D_7$   &  &   $D_8$   & 
\end{tabular}
\end{center}
\caption{The eight forbidden induced subdigraphs for directed co-graphs (see \cite{CP06}).}
\label{F-co-ex}
\end{table}

In order to characterize digraphs of directed linear NLC-width 1 and
digraphs of directed neighbourhood-width 1 we introduce the
following subclass of directed co-graphs.

\begin{definition}[Directed threshold graphs]
The class of directed threshold graphs is recursively defined as follows.
\begin{enumerate}[(i)]
\item Every digraph on a single vertex $(\{v\},\emptyset)$, 
denoted by $\bullet$, is a directed threshold graph.

\item If $G$ is a directed threshold graph,
then 
 \begin{inparaenum}[(a)]
\item
$G\oplus \bullet$,
\item 
$G\oslash \bullet$, 
\item 
$\bullet\oslash G$, and
\item $G\otimes \bullet$ are directed threshold graphs.
\end{inparaenum}
\end{enumerate}
\end{definition}

The related
class oriented threshold graphs was considered by Boeckner in \cite{Boe18}
by using all given operations except the  series composition $G\otimes \bullet$.

\begin{observation}
Every oriented threshold graph is a directed threshold graph and every directed threshold graph
is a directed co-graph.
\end{observation}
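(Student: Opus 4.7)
The plan is to prove both inclusions by straightforward structural induction on the recursive construction of the respective graph classes, exploiting the fact that the classes are defined by successively more permissive sets of generating operations.

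For the first inclusion, I would note that the generating operations for oriented threshold graphs, namely $G\oplus \bullet$, $G\oslash \bullet$, and $\bullet\oslash G$, together with the base case of a single vertex, are a strict subset of the generating operations for directed threshold graphs (which additionally allow $G\otimes \bullet$). Hence, by induction on the recursive construction, any graph $G$ built as an oriented threshold graph can be built via exactly the same sequence of operations as a directed threshold graph. The base case $(\{v\},\emptyset)$ is shared. In the inductive step, if $G=G'\oplus \bullet$, $G=G'\oslash\bullet$, or $G=\bullet\oslash G'$ with $G'$ an oriented threshold graph, then by the induction hypothesis $G'$ is a directed threshold graph, and applying the same operation keeps $G$ inside the class of directed threshold graphs.

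For the second inclusion, I would proceed analogously. Each of the four operations used to construct directed threshold graphs is a special case of the corresponding directed co-graph operation in which one of the two operands is required to be a single vertex. Formally, by induction on the construction of a directed threshold graph $G$: the base case $(\{v\},\emptyset)$ is a directed co-graph. If $G$ is obtained from a directed threshold graph $G'$ by one of $G'\oplus \bullet$, $G'\oslash \bullet$, $\bullet\oslash G'$, or $G'\otimes \bullet$, then by the induction hypothesis $G'$ is a directed co-graph, the single-vertex graph $\bullet$ is also a directed co-graph, and so the same operation (now applied to two directed co-graphs, one of which happens to have a single vertex) yields a directed co-graph.

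There is no genuine obstacle here: the claim is essentially a tautology from the nested syntactic definitions, and the only thing to verify is that the restricted operations literally occur among the more general ones, which is immediate. Consequently the result follows by a one-line induction in each direction, and both inclusions are in fact strict, as witnessed for example by $\overleftrightarrow{K_2}$ (which is a directed threshold graph via $\bullet\otimes \bullet$ but not an oriented graph) and by $\overrightarrow{P_3}\oplus \overrightarrow{P_3}$ or similar disjoint unions of two non-trivial co-graphs (which are directed co-graphs but cannot be produced by the threshold recursion since every step adds only a single vertex).
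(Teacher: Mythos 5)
Your proof of the two stated inclusions is correct and is exactly the argument the paper has in mind: the observation is stated without proof precisely because the generating operations of each class are literally a subset (or a single-vertex special case) of those of the next, so a one-line structural induction suffices. One caveat on your added remark about strictness of the second inclusion, which is not part of the statement but contains an error: $\overrightarrow{P_3}\oplus\overrightarrow{P_3}$ is not a valid witness, because $\overrightarrow{P_3}$ itself is not a directed co-graph (no bipartition of its vertex set realizes a disjoint union, series, or order composition), so neither is the disjoint union. A correct witness in the spirit of your parenthetical is $2\overleftrightarrow{P_2}=\overleftrightarrow{K_2}\oplus\overleftrightarrow{K_2}$, which is a directed co-graph but appears among the forbidden induced subdigraphs for directed threshold graphs in Theorem \ref{small-lnlcw}.
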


\begin{table}[h]
\begin{center}
\begin{tabular}{ccccccccccccc}

\epsfig{figure=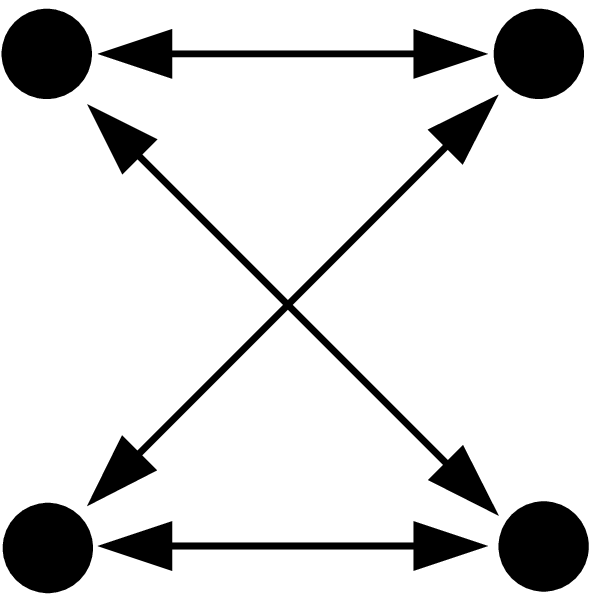,width=1.6cm} &~~&\epsfig{figure=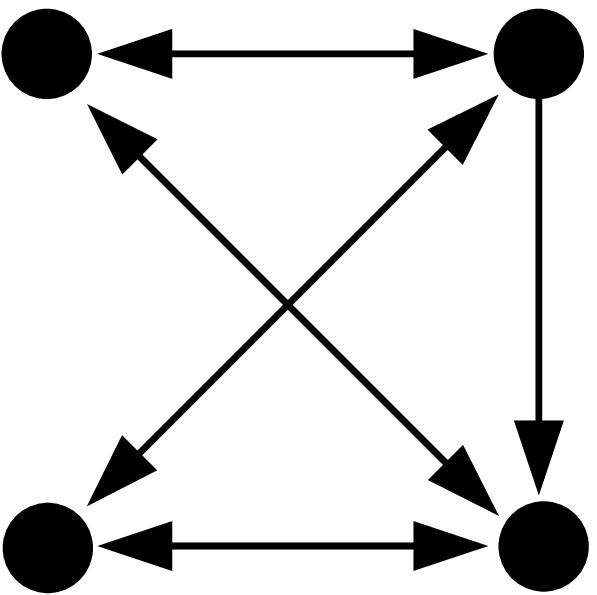,width=1.6cm} &~~& \epsfig{figure=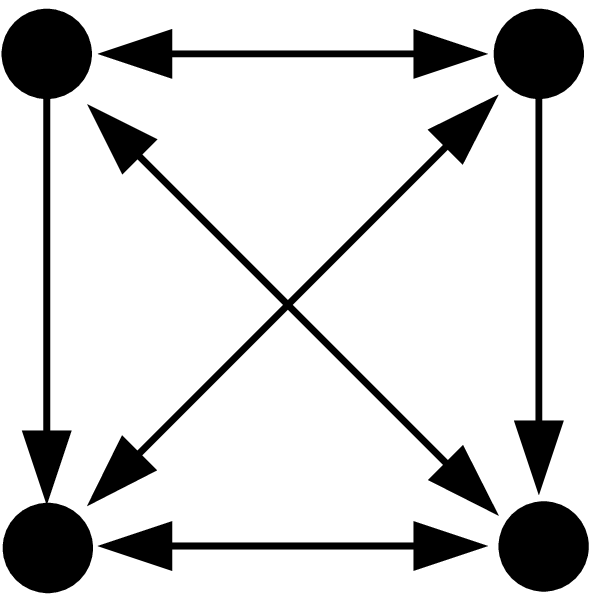,width=1.6cm} &~~&     \\
 $D_9$                             &&    $D_{10}$                        &&    $D_{11}$            &&      \\ 
\end{tabular}
\end{center}
\caption{Forbidden induced subgraphs for subclasses of directed co-graphs.}
\label{F-co21}
\end{table}

\begin{table}[h]
\begin{center}
\begin{tabular}{ccccccccccccc}

\epsfig{figure=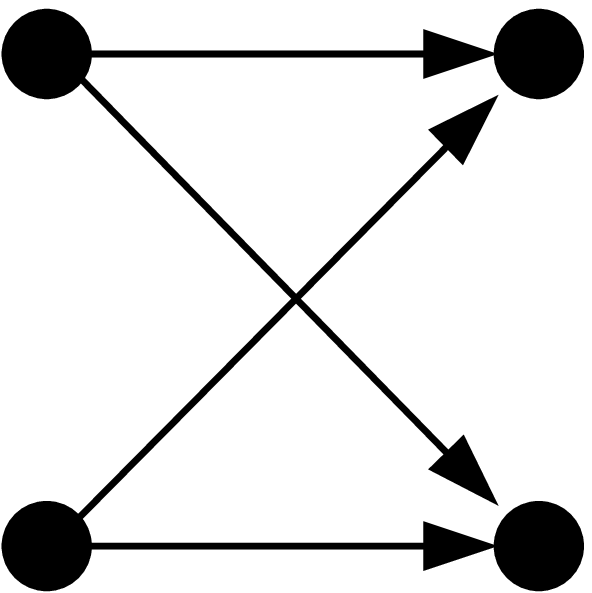,width=1.6cm} &~~&\epsfig{figure=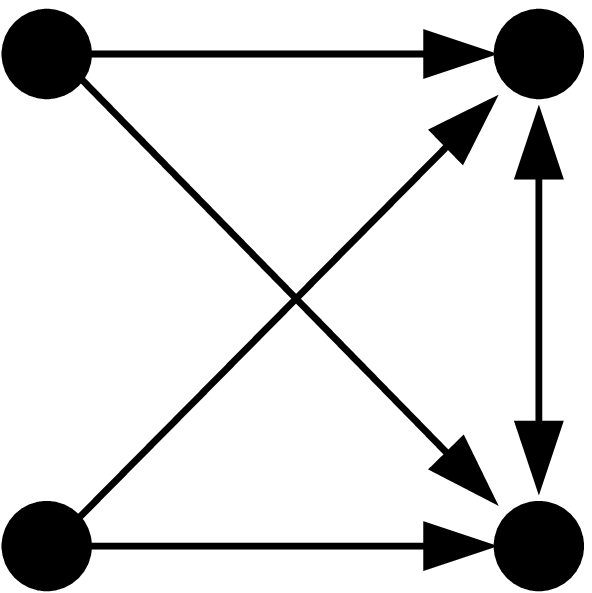,width=1.6cm} &~~& \epsfig{figure=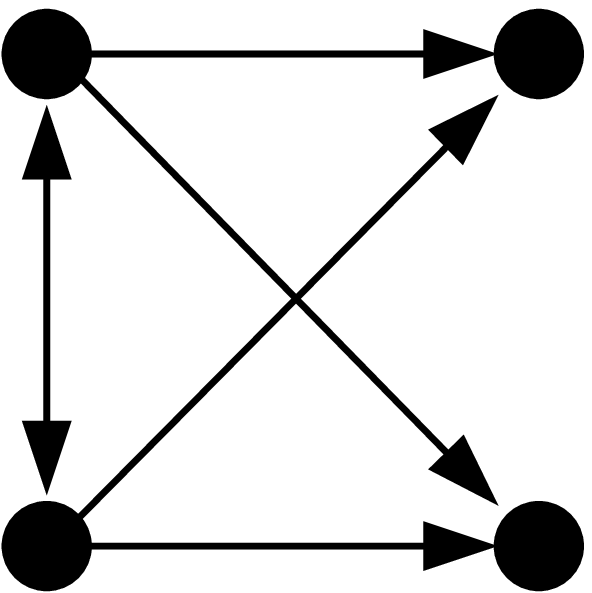,width=1.6cm} &~~&   \epsfig{figure=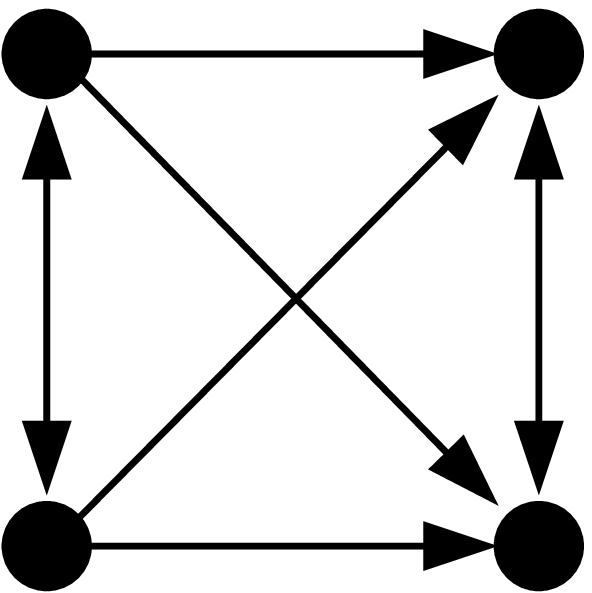,width=1.6cm}  \\
 $D_{12}$                             &&    $D_{13}$                        &&    $D_{14}$            &&   $D_{15}$   \\ 
\end{tabular}
\end{center}
\caption{Forbidden induced subgraphs for subclasses of directed co-graphs.}
\label{F-co2x1}
\end{table}

\begin{table}[ht]
\begin{center}
\begin{tabular}{ccccccccccccc}
\epsfig{figure=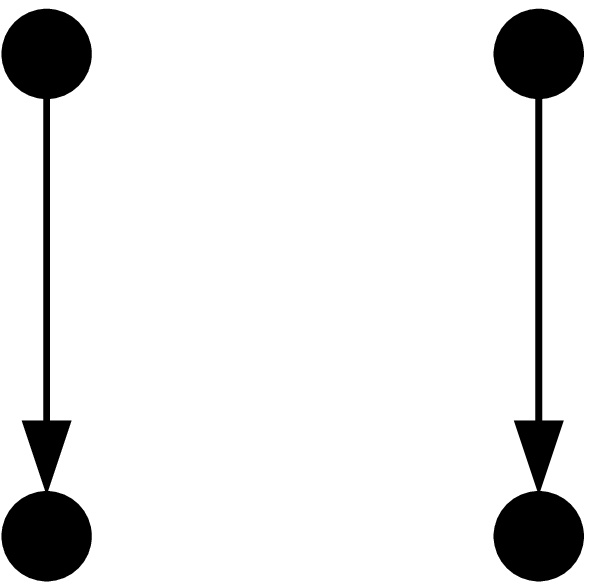,width=1.6cm} &~~&   \epsfig{figure=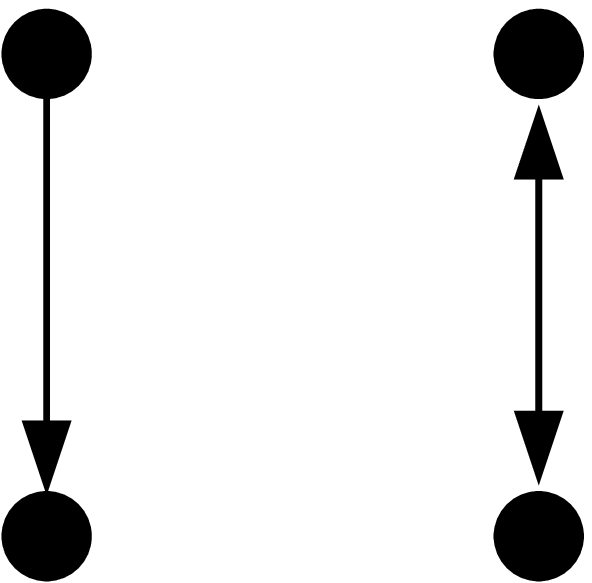,width=1.6cm}                  &~~& \epsfig{figure=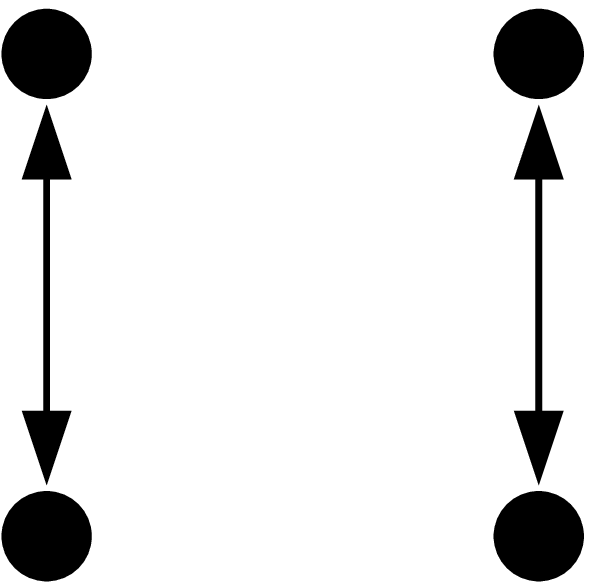,width=1.6cm} \\
 $2\overrightarrow{P_2}$            &  &   $\overrightarrow{P_2}\cup\overleftrightarrow{P_2}$   &  & $2\overleftrightarrow{P_2}$   \\ 
\end{tabular}
\end{center}
\caption{Forbidden induced subgraphs for subclasses of directed co-graphs.}
\label{F-co2a1}
\end{table}

\begin{theorem}\label{small-lnlcw}
For every digraph $G$ the following statements are equivalent.
\begin{enumerate}[(1.)]
\item \label{sm-a} $\dlnlcws(G)= 1$.

\item \label{sm-b} $\dnws(G)= 1$.

\item \label{sm-bn} $\dlcws(G)\leq 2$ and $G\in \free(\{D_2,D_3,D_9,D_{10},
D_{12},D_{13},D_{14}\})$.

\item \label{sm-c} $G$ is a directed threshold graph.

\item \label{sm-d} $G\in \free(\{D_1, \dots, D_{15}, 2\overrightarrow{P_2}, 
		  \overrightarrow{P_2}\cup\overleftrightarrow{P_2},
		  2\overleftrightarrow{P_2}\})$.

\item \label{sm-e} $G\in\free(\{D_1,D_2,D_3,D_4,D_5,D_6,D_{10},D_{11},D_{13},D_{14},D_{15}\})$  
 and $\un(G)\in\free(\{P_4,2K_2,C_4\})$.

\item \label{sm-f} $G\in\free(\{D_1,D_2,D_3,D_4,D_5,D_6,D_{10},D_{11},D_{13},D_{14},D_{15}\})$  
 and $\un(G)$ is a threshold graph.
\end{enumerate}
\end{theorem}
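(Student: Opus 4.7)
The overall plan is to make the recursive description $(4)$ the hub of the equivalence: I would establish $(4) \Leftrightarrow (1) \Leftrightarrow (2)$ by translating directly between the threshold-graph construction and the directed linear NLC-width / neighbourhood-width witnesses, and then derive the forbidden-subgraph characterizations $(5), (6), (7)$ from $(4)$ using structural induction together with the classical Chvátal--Hammer theorem for undirected threshold graphs. Condition $(3)$ is then reached by intersecting the directed linear clique-width $\leq 2$ class with the forbidden-subgraph description.

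For $(4)\Rightarrow(1)$, I observe that with only one label available the operation $\otimes_{(\overrightarrow{S},\overleftarrow{S})}\bullet$ has exactly four variants, depending on whether $(1,1)\in\overrightarrow{S}$ and $(1,1)\in\overleftarrow{S}$, and these match the four threshold compositions $G'\oplus\bullet$, $G'\oslash\bullet$, $\bullet\oslash G'$, $G'\otimes\bullet$; so a threshold decomposition of $G$ lifts verbatim to a directed linear NLC-width $1$-expression. The implication $(1)\Rightarrow(2)$ is immediate from Lemma \ref{T3b} together with $\dnws(G)\geq 1$ whenever $V(G)\neq\emptyset$. For $(2)\Rightarrow(4)$ I induct on $|V(G)|$: if $\varphi$ witnesses $\dnws(\varphi,G)=1$ and $v=\varphi^{-1}(|V|)$, then at the cut $i=|V|-1$ all vertices of $L(i,\varphi,G)$ share the same directed neighbourhood into $\{v\}$, which is one of $(\emptyset,\emptyset)$, $(\{v\},\emptyset)$, $(\emptyset,\{v\})$, $(\{v\},\{v\})$; the restriction of $\varphi$ still witnesses $\dnws(G-v)\leq 1$, and $G$ arises from $G-v$ by one of the four threshold compositions.

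Next I would handle the forbidden-subgraph descriptions. The equivalence $(6)\Leftrightarrow(7)$ is exactly the Chvátal--Hammer characterization of undirected threshold graphs as $\free(\{P_4, 2K_2, C_4\})$. For $(4)\Rightarrow(7)$: an easy induction along the threshold decomposition shows that $\un(G)$ is a threshold graph, and case analysis of the listed $D_i$ shows that none of them can be produced by a threshold composition. For $(7)\Rightarrow(4)$ I reconstruct a threshold decomposition of $G$ from one of $\un(G)$: at each stage $\un(G)$ admits either a universal or an isolated vertex $v$; lifting to $G$, the arcs incident to $v$ must follow a \emph{uniform} pattern across all other vertices (none, all outgoing, all incoming, or both directions), since any mixed pattern, together with a witness vertex from the current graph, would realize one of the listed $D_i$'s as an induced subdigraph. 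The equivalence $(4)\Leftrightarrow(5)$ is then obtained by translating the condition $\un(G)\in\free(\{P_4,2K_2,C_4\})$ into forbidden \emph{directed} induced subgraphs: every directed lift of $P_4$, $2K_2$, or $C_4$ not already excluded by the $D_i$ of $(7)$ appears among $D_7, D_8, D_9, D_{12}$ and the explicit disconnected examples $2\overrightarrow{P_2}$, $\overrightarrow{P_2}\cup\overleftrightarrow{P_2}$, $2\overleftrightarrow{P_2}$.

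For $(4)\Rightarrow(3)$, directed threshold graphs trivially satisfy $\dlcws(G)\leq 2$ (via the same layout that realizes the $1$-expression, using an extra label to freeze already-treated vertices) and avoid every listed $D_i$ by the analysis above; for $(3)\Rightarrow(4)$ one checks that once the listed $D_i$ are excluded, a directed linear clique-width $2$-expression can be re-ordered so that each step adds a single vertex in one of the four uniform patterns, i.e.\ is a threshold composition. The main obstacle of the proof is the case analysis in $(7)\Rightarrow(4)$ and the bookkeeping needed to verify that the three forbidden-subgraph lists in $(5)$, $(6)$, $(7)$ are mutually consistent: one must carefully enumerate the non-isomorphic biorientations of $P_4$, $2K_2$, $C_4$, and small directed co-graphs to check that every offending digraph is captured by exactly the prescribed list.
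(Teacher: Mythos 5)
Your cycle $(4)\Rightarrow(1)\Rightarrow(2)\Rightarrow(4)$ is sound and essentially matches the paper: the four choices of $(\overrightarrow{S},\overleftarrow{S})\subseteq\{(1,1)\}^2$ correspond exactly to the four threshold compositions, and peeling the last vertex of a width-$1$ layout is a clean way to recover the recursive structure. The genuine problem is your route from the forbidden-subgraph conditions back to the threshold structure. In $(7)\Rightarrow(4)$ you peel an isolated or universal vertex $v$ of $\un(G)$ and claim the arcs at $v$ must follow one of the four uniform patterns ``since any mixed pattern \dots would realize one of the listed $D_i$''. This dichotomy is false: take the transitive tournament on $\{a,b,c\}$ with arcs $(a,b),(a,c),(b,c)$. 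It is a directed threshold graph, namely $(\bullet\oslash\bullet)\oslash\bullet$, hence contains none of the forbidden digraphs; $\un(G)=K_3$ makes every vertex universal; yet $b$ has a mixed pattern (one in-neighbour, one out-neighbour, no bidirected pair). So a mixed universal vertex does not force a forbidden subdigraph, and at best you could hope that \emph{some} universal or isolated vertex is uniform --- a different statement that you have not proved. The paper takes a different decomposition here: excluding $D_1,\dots,D_8$ makes $G$ a directed co-graph by the Crespelle--Paul characterization, and the remaining forbidden digraphs are chosen exactly so that in every series composition one side is bidirectional complete, in every order composition one side is a transitive tournament (using that $D_5=\overrightarrow{C_3}$ is excluded), and in every disjoint union one side is edgeless; each composition then splits into single-vertex threshold operations.

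Second, your $(3)\Rightarrow(4)$ (``a directed linear clique-width $2$-expression can be re-ordered so that each step adds a single vertex in one of the four uniform patterns'') is asserted without argument and is not routine: width-$2$ expressions insert arcs between whole label classes rather than vertex by vertex, and the fact that $D_2,D_3$ have directed linear clique-width $2$ but are not directed threshold graphs shows that the re-ordering must use the excluded subdigraphs in an essential, unexplained way. The paper instead closes this part of the cycle through $(3)\Rightarrow(5)$ by a finite verification that every digraph forbidden in $(5)$ but not listed in $(3)$ has directed linear clique-width at least $3$ (for the three biorientations of $2K_2$ this follows from the underlying $2K_2$ having linear clique-width at least $3$ together with Theorem~\ref{th-u-d-w}). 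Both of your problematic directions would need to be replaced by arguments of this kind, or the missing case analyses supplied in full.
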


\begin{proof}
$(\ref{sm-a})\Leftrightarrow(\ref{sm-b})$
By the proof of Lemma \ref{T3b} (which can be done similar to the undirected versions in \cite{Gur06a}) 
the set of all digraphs of 
directed linear NLC-width 1 is equal to the set of all
digraphs of directed neighborhood-width 1.

$(\ref{sm-d})\Rightarrow(\ref{sm-c})$ If digraph $G$ does not contain
$D_1,\ldots,D_8$ (see Table \ref{F-co-ex}), then digraph $G$ is a directed co-graph  by \cite{CP06}
and thus has a construction using disjoint union, series composition, and
order composition. 
By excluding $D_9$, $D_{10}$, and $D_{11}$ we know that 
for every series composition of $G_1$ and $G_2$ either $G_1$ 
or $G_2$ is bidirectional complete. Thus this subgraph can also be added by
a number of series operations with one vertex. 

Further by excluding
$D_{12}$, $D_{13}$, $D_{14}$, and $D_{15}$  we know that 
for every order composition of $G_1$ and $G_2$ either $G_1$ 
or $G_2$ is a tournament and since we exclude a directed
cycle of length 3 by $D_5$, we know that $G_1$ 
or $G_2$ is a transitive tournament. Thus this subgraph can also be added by
a number of order operations with one vertex.

By excluding $2\overrightarrow{P_2},\overrightarrow{P_2}\cup\overleftrightarrow{P_2},2\overleftrightarrow{P_2}$
for every disjoint union of $G_1$ and $G_2$ either $G_1$ 
or $G_2$ has no edge. Thus this subgraph can also be added by
a number of disjoint union operations with one vertex.

(\ref{sm-a}) $\Rightarrow$ (\ref{sm-c}):
Let $G=(V,E)$ be a digraph of directed linear
NLC-width 1 and $X$ 
be a directed linear NLC-width $1$-expression for $G$. An expression 
$c(X)$ using directed threshold graph operations for $G$ can recursively 
be defined as follows.

\begin{itemize}
\item Let $X=\bullet_1$ for $t\in[k]$. Then $c(X)=\bullet$.
\item Let $X=\circ_R(X')$ for $R:[1]\to[1]$.  Then $c(X)=c(X')$.
\item Let $X=X' \otimes_{(\overrightarrow{S}, \overleftarrow{S})} \bullet_1$ 
for $\overrightarrow{S}, \overleftarrow{S}\subseteq[1]^2$.

\begin{itemize}
\item If $\overrightarrow{S}=\emptyset$ and $\overleftarrow{S}=\emptyset$,
then $c(X)$ is the disjoint union of $c(X')$ and $\bullet$.

\item If $\overrightarrow{S}=\{(1,1)\}$ and $\overleftarrow{S}=\emptyset$,
then $c(X)$ is the order  composition of $c(X')$ and $\bullet$.

\item If $\overrightarrow{S}=\emptyset$ and $\overleftarrow{S}=\{(1,1)\}$,
then $c(X)$ is the order  composition of $\bullet$ and $c(X')$.

\item If $\overrightarrow{S}=\{(1,1)\}$ and $\overleftarrow{S}=\{(1,1)\}$,
then $c(X)$ is the series  composition of $c(X')$ and $\bullet$.
\end{itemize}
\end{itemize}

(\ref{sm-c}) $\Rightarrow$ (\ref{sm-a}):
Let $G=(V,E)$ be a directed threshold graph and $X$ be an 
expression using directed threshold graph operations for $G$. 
A directed linear NLC-width $1$-expression 
$c(X)$ for $G$ can recursively be defined as follows.

\begin{itemize}
\item
If $X$ defines a single vertex, then $c(X)=\bullet_1$.

\item
If $X$ defines the disjoint union of expression $X_1$ and $\bullet$, then
$c(X)=c(X_1) \otimes_{(\emptyset,\emptyset)}\bullet_1$

\item
If $X$ defines the order composition of expression $X_1$ and $\bullet$, then
$c(X)=c(X_1) \otimes_{(\{(1,1)\},\emptyset)}\bullet_1$

\item
If $X$ defines the order composition of expression of  $\bullet$ and $X_1$, then
$c(X)=c(X_1) \otimes_{(\emptyset,\{(1,1)\})}\bullet_1$

\item
If $X$ defines the series composition of expression $X_1$ and $\bullet$, then
$c(X)=c(X_1) \otimes_{(\{(1,1)\},\{(1,1)\})}\bullet_1$
\end{itemize}

(\ref{sm-c}) $\Rightarrow$ (\ref{sm-bn}): Digraphs $D_2,D_3,D_9,D_{10},D_{12},D_{13},D_{14}$
are not directed threshold graphs. Since directed threshold
graphs are exactly graphs of directed linear NLC-width 1 ((\ref{sm-a}) $\Leftrightarrow$ (\ref{sm-c}))
has been shown above) by Lemma \ref{T3a}
we know that directed threshold graphs have directed linear clique-width at most 2.

(\ref{sm-bn}) $\Rightarrow$ (\ref{sm-d}): Digraphs $D_1,D_4,D_5,D_6,D_7,D_8$
have directed clique-width greater than two and thus directed linear clique-width greater than two. $D_{11},D_{15}$
have  directed linear clique-width at least 3.
Further $2\overrightarrow{P_2},\overrightarrow{P_2}\cup\overleftrightarrow{P_2},2\overleftrightarrow{P_2}$
have an underlying $2K_2$ which has linear clique-width at least 3 and thus by 
Theorem \ref{th-u-d-w}(\ref{th-u-d-w-cw}.) 
the directed linear clique-width of the three digraphs is also at least 3.

$(\ref{sm-c})\Rightarrow(\ref{sm-f})$ If $G$ is a directed threshold graph, then $\un(G)$
is a threshold graph by the recursive definition.
Further the given forbidden digraphs are no directed threshold graphs
and the set of directed threshold graphs is closed under taking induced subdigraphs.

$(\ref{sm-e})\Rightarrow(\ref{sm-d})$ For digraphs $G$ which are excluded within (\ref{sm-d})
but not in (\ref{sm-e}) it holds $\un(G)\in\{P_4,C_4 , 2K_2\}$.

$(\ref{sm-e})\Leftrightarrow (\ref{sm-f})$ Threshold graphs are exactly the set 
$\free(\{P_4,2K_2,C_4\})$ by \cite{CH77}. 
\end{proof}

The set of digraphs of directed linear clique-width 1 is 
exactly the set of edgeless digraphs.
While characterizing digraphs of directed linear NLC-width 1 could be
done by a subclass of directed co-graphs, namely directed threshold graphs, 
this is not possible 
for digraphs of directed linear clique-width 2, since two of the
forbidden induced subdigraphs for directed co-graphs ($D_2$ and
$D_3$) have directed linear clique-width 2.

%%%%%%%%%%%%%%%%%%%%%%%%%%%%%%%%%%%%%%%%%%%%%%%%%%%%%%%%%%%%%%%%%%%%%%%%%%
%%%%%%%%%%%%%%%%%%%%%%%%%%%%%%%%%%%%%%%%%%%%%%%%%%%%%%%%%%%%%%%%%%%%%%%%%%
\section{Conclusions}\label{sec-concl}
%%%%%%%%%%%%%%%%%%%%%%%%%%%%%%%%%%%%%%%%%%%%%%%%%%%%%%%%%%%%%%%%%%%%%%%%%%
%%%%%%%%%%%%%%%%%%%%%%%%%%%%%%%%%%%%%%%%%%%%%%%%%%%%%%%%%%%%%%%%%%%%%%%%%%

We reviewed the  linear clique-width, 
linear NLC-width,
neighbourhood-width, and 
linear rank-width for directed graphs. 
We compared these
parameters with each other and also with the previously 
defined parameters directed path-width 
and directed cut-width. 
While for undirected graphs bounded cut-width implies
directed path-width and directed path-width implies 
directed linear clique-width,
linear NLC-width,
neighbourhood-width, and 
linear rank-width (see \cite{Gur06a}), for directed graphs the relations
turn out to be more involved, see Table  \ref{tab-fam}.
For the restriction to semicomplete  digraphs we obtain the same
relation between these parameters as for undirected graphs (Lemma \ref{th-pw-cutw}, Lemma  \ref{th-cw-pw}, and
Corollary \ref{cor-Tpw-deg-n6}).

With the exception of directed cut-width the considered parameters can be regarded as restrictions
from corresponding parameters with underlying tree-structure to an
underlying path-structure. This implies that the values of the restricted parameters are
greater or equal to the corresponding generalized version.
This relation  can be used to carry over
lower bounds for parameterizations. For example in \cite{FGLSZ18}
parameterizations w.r.t.\ neighborhood-width are used to obtain  parameterizations
w.r.t.\ clique-width.

A further way to define the width of a digraph $G$ is to consider the
width of the underlying undirected graph $\un(G)$. 
This approach is used in works of Courcelle et al.\ \cite{CO00,Cou15a,CE12}, 
when considering the path-width and 
tree-width of directed graphs. We did not follow this 
approach, since it is less sensitive because by using the underlying
undirected graph one can not distinguish the direction of the edges.
%The main idea of tree-width and path-width is to measure how tree-like
%a graph is. 
For example the existence of directed cycles in some digraph $G$ can not be
observed by the path-width of $\un(G)$, while the approach of
Reed, Seymour, and Thomas allows to find directed cycles  
by Theorem \ref{small-pw0}.
%For example every DAG $G$ with a cycle in $\un(G)$ and also
%every directed strong pseudoforest $G$ with a directed cycle in $G$ both have path-width 1
%in this sense. But when  considering the directed path-width every DAG with a cycle in  $\un(G)$ has 
%width 0 and every directed strong pseudoforests with a directed cycle in $G$ has width 1. 
%
%
Using undirected width leads to closer bounds.
If we define for every digraph $G$ the value
$\dpwtwos(G)=\pws(\un(G))$, we
can show the bound given in Lemma \ref{th-cw-pw}
for every arbitrary digraph $G$ as follows.
$$
\dnws(G)\leq  \dpwtwos(G)+1
$$

This implies in connection with Lemma \ref{T3b} the following bounds.
$$
\dlnlcws(G) \leq\dlcws(G) \leq \dpwtwos(G)+2.
$$
The latter bound for the directed linear clique-width in terms of
this approach for directed path-width was also obtained in \cite[Proposition 2.114]{CE12}.

\medskip
There are several interesting open questions. 
\begin{inparaenum}[\bf(a)]
\item By Corollary \ref{np} recognizing all considered  linear width parameters for directed graphs
is NP-hard. There are some xp-algorithms for directed path-width (cf. Section \ref{sec-dpw}) while
xp-algorithms for the other width parameters are uninvestigated up to now.
Further there are fpt-algorithms for computing directed cut-width of semicomplete digraphs \cite{FP13a}
and  computing directed path-width of $\ell$-semicomplete digraphs \cite{KKT15}.
For the other  width parameters fpt-algorithms are unknown, even for semicomplete digraphs.
\item In Theorem \ref{th-u-d-w} 
for several parameters we could show upper and
lower bounds for the directed with of some digraph $G$
in terms of the undirected width of $\un(G)$ and
the maximum vertex degree of $\un(G)$.
For  directed path-width and
directed cut-width this is only possible within one direction 
(see Remark \ref{remark-pw-u}).
It remains to find such bounds for special digraphs, 
e.g.~Eulerian digraphs which were useful for directed
tree-width and tree-width in (2.2) of \cite{JRST01}.
\item In Lemma \ref{L2} we have shown how to bound
the directed linear clique-width of a digraph exponentially in its
directed linear rank-width and in Lemma \ref{lemma-lcw-delta-rw}
we used the results of \cite{FOT10}  to show a polynomial bound
for graphs of bounded vertex degree.
It remains to study the existence of
linear bounds in general and under certain conditions, e.g.~in terms of directed linear rank-width
and further parameters, such as  maximum vertex degree
in order to fill the three open cells in Table \ref{tab-fam2}.
\item In order to characterize digraphs of directed linear rank-width 1 
in terms of special graph operations, we 
propose to generalize 
the notation of thread graphs from \cite{Gan11} to digraphs.
\end{inparaenum}

%%%%%%%%%%%%%%%%%%%%%%%%%%%%%%%%%%%%%%%%%%%%%%%%%%%%%%%%%%%%%%%%%%%%%%
\section{Acknowledgements} \label{sec-a}
%%%%%%%%%%%%%%%%%%%%%%%%%%%%%%%%%%%%%%%%%%%%%%%%%%%%%%%%%%%%%%%%%%%%%%

The work of the second author was supported by the German Research 
Association (DFG) grant  GU 970/7-1.

%%%%%%%%%%%%%%%%%%%%%%%%%%%%%%%%%%%%%%%%%%%%%%%%%%%%%%%%%%%%%%%%%%%%%%%%%%%%%%%%%%%%%%%%%%%%%%%%%%%
%%%%%%%%%%%%%%%%%%%%%%%%%%%%%%%%%%%%%%%%%%%%%%%%%%%%%%%%%%%%%%%%%%%%%%%%%%%%%%%%%%%%%%%%%%%%%%%%%%%

%\bibliographystyle{plain}
%\bibliography{/home/gurski/bib.bib}

\begin{thebibliography}{10}

\bibitem{AH73}
D.~Adolphson and T.C. Hu.
\newblock Optimal linear ordering.
\newblock {\em SIAM Journal on Applied Mathematics}, 25(3):403--423, 1973.

\bibitem{Arn85}
S.~Arnborg.
\newblock Efficient algorithms for combinatorial problems on graphs with
  bounded decomposability -- {A} survey.
\newblock {\em BIT}, 25:2--23, 1985.

\bibitem{ACP87}
S.~Arnborg, D.G. Corneil, and A.~Proskurowski.
\newblock Complexity of finding embeddings in a $k$-tree.
\newblock {\em SIAM Journal of Algebraic and Discrete Methods}, 8(2):277--284,
  1987.

\bibitem{AP89}
S.~Arnborg and A.~Proskurowski.
\newblock Linear time algorithms for {NP}-hard problems restricted to partial
  $k$-trees.
\newblock {\em Discrete Applied Mathematics}, 23:11--24, 1989.

\bibitem{BG09}
J.~Bang-Jensen and G.~Gutin.
\newblock {\em Digraphs. {T}heory, {A}lgorithms and {A}pplications}.
\newblock Springer-Verlag, Berlin, 2009.

\bibitem{BG18}
J.~Bang-Jensen and G.~Gutin, editors.
\newblock {\em Classes of Directed Graphs}.
\newblock Springer-Verlag, Berlin, 2018.

\bibitem{Bar06}
J.~Bar{\'a}t.
\newblock {D}irected pathwidth and monotonicity in digraph searching.
\newblock {\em Graphs and Combinatorics}, 22:161--172, 2006.

\bibitem{BGR97}
D.~Bechet, P.~de~Groote, and C.~Retor\'e.
\newblock A complete axiomatisation of the inclusion of series-parallel partial
  orders.
\newblock In {\em Rewriting Techniques and Applications}, volume 1232 of {\em
  LNCS}, pages 230--240. Springer-Verlag, 1997.

\bibitem{Bod98}
H.L. Bodlaender.
\newblock A partial $k$-arboretum of graphs with bounded treewidth.
\newblock {\em Theoretical Computer Science}, 209:1--45, 1998.

\bibitem{Boe18}
D.~Boeckner.
\newblock Oriented threshold graphs.
\newblock {\em Australasian Journal of Combinatorics}, 71(1):43--53, 2018.

\bibitem{BDLM05}
A.~Brandst\"adt, F.~F. Dragan, H.-O. Le, and R.~Mosca.
\newblock New graph classes of bounded clique width.
\newblock {\em Theory of Computing Systems}, 38(5):623--645, 2005.

\bibitem{CFS12}
M.~Chudnovsky, A.O. Fradkin, and P.D. Seymour.
\newblock Tournament immersion and cutwidth.
\newblock {\em Journal of Combinatorial Theory, Series B}, 102(1):93--101,
  2012.

\bibitem{CH77}
V.~Chv\'atal and P.L. Hammer.
\newblock Aggregation of inequalities in integer programming.
\newblock {\em Annals of Discrete Math.}, 1:145--162, 1977.

\bibitem{Cou15a}
B.~Courcelle.
\newblock Clique-width and tree-width of sparse graphs.
\newblock {R}eseach {R}eport, 2015.

\bibitem{CE12}
B.~Courcelle and J.~Engelfriet.
\newblock {\em Graph Structure and Monadic Second-Order Logic. {A}
  Language-Theoretic Approach}.
\newblock Encyclopedia of Mathematics and its Applications. Cambridge
  University Press, Cambridge, 2012.

\bibitem{CMR00}
B.~Courcelle, J.A. Makowsky, and U.~Rotics.
\newblock Linear time solvable optimization problems on graphs of bounded
  clique-width.
\newblock {\em Theory of Computing Systems}, 33(2):125--150, 2000.

\bibitem{CO00}
B.~Courcelle and S.~Olariu.
\newblock Upper bounds to the clique width of graphs.
\newblock {\em Discrete Applied Mathematics}, 101:77--114, 2000.

\bibitem{CP06}
C.~Crespelle and C.~Paul.
\newblock Fully dynamic recognition algorithm and certificate for directed
  cographs.
\newblock {\em Discrete Applied Mathematics}, 154(12):1722--1741, 2006.

\bibitem{DE14}
M.~Dehmer and F.~Emmert-Streib, editors.
\newblock {\em Quantitative {G}raph {T}heory: {M}athematical {F}oundations and
  {A}pplications}.
\newblock Crc Pr Inc, New York, 2014.

\bibitem{EGW01a}
W.~Espelage, F.~Gurski, and E.~Wanke.
\newblock How to solve {NP}-hard graph problems on clique-width bounded graphs
  in polynomial time.
\newblock In {\em Proceedings of Graph-Theoretical Concepts in Computer Science
  (WG)}, volume 2204 of {\em LNCS}, pages 117--128. Springer-Verlag, 2001.

\bibitem{FRRS09}
M.R. Fellows, F.A. Rosamond, U.~Rotics, and S.~Szeider.
\newblock Clique-width is {NP}-complete.
\newblock {\em SIAM Journal on Discrete Mathematics}, 23(2):909--939, 2009.

\bibitem{FGLSZ18}
F.V. Fomin, P.~Golovach, D.~Lokshtanov, S.~Saurabh, and M.~Zehavi.
\newblock Cliquewidth {III}: {T}he {O}dd {C}ase of {G}raph {C}oloring
  {P}arameterized by {C}liquewidth.
\newblock {\em ACM Transactions on Algorithms}, 15(1):9:1--9:27, 2018.

\bibitem{FOT10}
F.V. Fomin, S.~Oum, and D.~Thilikos.
\newblock Rank-width and tree-width of ${H}$-minor-free graphs.
\newblock {\em European Journal of Combinatorics}, 31(7):1617--1628, 2010.

\bibitem{FP13}
F.V. Fomin and M.~Pilipczuk.
\newblock Jungles, bundles, and fixed parameter tractability.
\newblock In {\em Proceedings of the {ACM-SIAM} Symposium on Discrete
  Algorithms (SODA)}, pages 396--413. ACM-SIAM, 2013.

\bibitem{FP13a}
F.V. Fomin and M.~Pilipczuk.
\newblock Subexponential parameterized algorithm for computing the cutwidth of
  a semi-complete digraph.
\newblock In {\em Proceedings of the Annual European Symposium on Algorithms
  (ESA)}, volume 8125 of {\em LNCS}, pages 505--516. Springer-Verlag, 2013.

\bibitem{Gan11}
R.~Ganian.
\newblock Thread graphs, linear rank-width and their algorithmic applications.
\newblock In {\em Proceedings of International Workshop on Combinatorial
  Algorithms}, volume 6460 of {\em LNCS}, pages 38--42. Springer-Verlag, 2011.

\bibitem{GHKLOR14}
R.~Ganian, P.~Hlinen{\'y}, J.~Kneis, A.~Langer, J.~Obdrz{\'a}lek, and
  P.~Rossmanith.
\newblock Digraph width measures in parameterized algorithmics.
\newblock {\em Discrete Applied Mathematics}, 168:88--107, 2014.

\bibitem{GHKMORS16}
R.~Ganian, P.~Hlinen{\'y}, J.~Kneis, D.~Meisters, J.~Obdrz{\'a}lek,
  P.~Rossmanith, and S.~Sikdar.
\newblock Are there any good digraph width measures?
\newblock {\em Journal of Combinatorial Theory, Series B}, 116:250--286, 2016.

\bibitem{Gav77}
F.~Gavril.
\newblock Some {NP}-complete problems on graphs.
\newblock In {\em Proceedings of the 11th Conference on Information Sciences
  and Systems. Johns Hopkins University, Baltimore, Md.}, pages 91--95, 1977.

\bibitem{GR00}
M.C. Golumbic and U.~Rotics.
\newblock On the clique-width of some perfect graph classes.
\newblock {\em International Journal of Foundations of Computer Science},
  11(3):423--443, 2000.

\bibitem{Gur06}
F.~Gurski.
\newblock Characterizations for co-graphs defined by restricted {NLC}-width or
  clique-width operations.
\newblock {\em Discrete Mathematics}, 306(2):271--277, 2006.

\bibitem{Gur06a}
F.~Gurski.
\newblock Linear layouts measuring neighbourhoods in graphs.
\newblock {\em Discrete Mathematics}, 306(15):1637--1650, 2006.

\bibitem{Gur08b}
F.~Gurski.
\newblock Graph parameters measuring neighbourhoods in graphs -- {B}ounds and
  applications.
\newblock {\em Discrete Applied Mathematics}, 156(10):1865--1874, 2008.

\bibitem{GR18c}
F.~Gurski and C.~Rehs.
\newblock Directed path-width and directed tree-width of directed co-graphs.
\newblock In {\em Proceedings of the International Conference on Computing and
  Combinatorics (COCOON)}, volume 10976 of {\em LNCS}, pages 255--267.
  Springer-Verlag, 2018.

\bibitem{GRR18}
F.~Gurski, C.~Rehs, and J.~Rethmann.
\newblock Directed pathwidth of sequence digraphs.
\newblock In {\em Proceedings of the International Conference on Combinatorial
  Optimization and Applications (COCOA)}, volume 11346 of {\em LNCS}, pages
  79--93. Springer-Verlag, 2018.

\bibitem{GW00}
F.~Gurski and E.~Wanke.
\newblock The tree-width of clique-width bounded graphs without ${K}_{n,n}$.
\newblock In {\em Proceedings of Graph-Theoretical Concepts in Computer Science
  (WG)}, volume 1938 of {\em LNCS}, pages 196--205. Springer-Verlag, 2000.

\bibitem{GW05a}
F.~Gurski and E.~Wanke.
\newblock On the relationship between {NLC}-width and linear {NLC}-width.
\newblock {\em Theoretical Computer Science}, 347(1-2):76--89, 2005.

\bibitem{GWY16}
F.~Gurski, E.~Wanke, and E.~Yilmaz.
\newblock Directed {NLC}-width.
\newblock {\em Theoretical Computer Science}, 616:1--17, 2016.

\bibitem{Hag00}
T.~Hagerup.
\newblock Dynamic algorithms for graphs of bounded treewidth.
\newblock {\em Algorithmica}, 27(3):292--315, 2000.

\bibitem{HMP09}
P.~Heggernes, D.~Meister, and C.~Papadopoulos.
\newblock A complete characterisation of the linear clique-width of path
  powers.
\newblock In {\em Proceedings of the Annual Conference on Theory and
  Applications of Models of Computation}, volume 5532 of {\em LNCS}, pages
  241--250. Springer-Verlag, 2009.

\bibitem{HMP11}
P.~Heggernes, D.~Meister, and C.~Papadopoulos.
\newblock Graphs of linear clique-width at most 3.
\newblock {\em Theoretical Computer Science}, 412(39):5466--5486, 2011.

\bibitem{HOSG08}
P.~Hlinen{\'y}, S.~Oum, D.~Seese, and G.~Gottlob.
\newblock Width parameters beyond tree-width and their applications.
\newblock {\em Computer Journal}, 51(3):326--362, 2008.

\bibitem{JRST01}
T.~Johnson, N.~Robertson, P.D. Seymour, and R.~Thomas.
\newblock Directed tree-width.
\newblock {\em Journal of Combinatorial Theory, Series B}, 82:138--155, 2001.

\bibitem{KR13}
M.~Kant\'e and M.~Rao.
\newblock The rank-width of edge-coloured graphs.
\newblock {\em Theory Comput. Syst.}, 52(4):599--644, 2013.

\bibitem{KZN00}
M.A. Kashem, X.~Zhou, and T.~Nishizeki.
\newblock Algorithms for generalized vertex-rankings of partial $k$-trees.
\newblock {\em Theoretical Computer Science}, 240(2):407--427, 2000.

\bibitem{Kas08}
N.~Kashyap.
\newblock Matroid pathwidth and code trellis complexity.
\newblock {\em SIAM J. Discrete Math.}, 22(1):256--272, 2008.

\bibitem{KL94}
N.G. Kinnersley and M.A. Langston.
\newblock Obstruction set isolation for the gate matrix layout problem.
\newblock {\em Discrete Applied Mathematics}, 54:169--213, 1994.

\bibitem{KKKTT16}
K.~Kitsunai, Y.~Kobayashi, K.~Komuro, H.~Tamaki, and T.~Tano.
\newblock Computing directed pathwidth in ${O}(1.89^n)$ time.
\newblock {\em Algorithmica}, 75:138--157, 2016.

\bibitem{KKT15}
K.~Kitsunai, Y.~Kobayashi, and H.~Tamaki.
\newblock On the pathwidth of almost semicomplete digraphs.
\newblock In {\em Proceedings of the Annual European Symposium on Algorithms
  (ESA)}, volume 9294 of {\em LNCS}, pages 816--827. Springer-Verlag, 2015.

\bibitem{Kob15}
Y.~Kobayashi.
\newblock Computing the pathwidth of directed graphs with small vertex cover.
\newblock {\em Information Processing Letters}, 115(2):310--312, 2015.

\bibitem{LT79}
R.J. Lipton and R.E. Tarjan.
\newblock A separator theorem for planar graphs.
\newblock {\em SIAM Journal on Applied Mathematics}, 36(2):177--189, 1979.

\bibitem{Nag12}
H.~Nagamochi.
\newblock Linear layouts in submodular systems.
\newblock In {\em Proceedings of the International Symposium on Algorithms and
  Computation}, volume 7676 of {\em LNCS}, pages 475--484. Springer-Verlag,
  2012.

\bibitem{Oum05a}
S.~Oum.
\newblock Rank-width and vertex-minor.
\newblock {\em Journal of Combinatorial Theory, Series B}, 95:79--100, 2005.

\bibitem{Oum17}
S.~Oum.
\newblock Rank-width: {A}lgorithmic and structural results.
\newblock {\em Discrete Applied Mathematics}, 231:15--24, 2017.

\bibitem{OS06}
S.~Oum and P.D. Seymour.
\newblock Approximating clique-width and branch-width.
\newblock {\em Journal of Combinatorial Theory, Series B}, 96(4):514--528,
  2006.

\bibitem{RS83}
N.~Robertson and P.D. Seymour.
\newblock Graph minors {I}. {E}xcluding a forest.
\newblock {\em Journal of Combinatorial Theory, Series B}, 35:39--61, 1983.

\bibitem{RS86}
N.~Robertson and P.D. Seymour.
\newblock Graph minors {II}. {A}lgorithmic aspects of tree width.
\newblock {\em Journal of Algorithms}, 7:309--322, 1986.

\bibitem{Tam11}
H.~Tamaki.
\newblock A {P}olynomial {T}ime {A}lgorithm for {B}ounded {D}irected
  {P}athwidth.
\newblock In {\em Proceedings of Graph-Theoretical Concepts in Computer Science
  (WG)}, volume 6986 of {\em LNCS}, pages 331--342. Springer-Verlag, 2011.

\bibitem{Wan94}
E.~Wanke.
\newblock $k$-{NLC} graphs and polynomial algorithms.
\newblock {\em Discrete Applied Mathematics}, 54:251--266, 1994.

\bibitem{YC08}
B.~Yang and Y.~Cao.
\newblock Digraph searching, directed vertex separation and directed pathwidth.
\newblock {\em Discrete Applied Mathematics}, 156(10):1822--1837, 2008.

\end{thebibliography}

\end{document}